\newcommand{\blind}{1}
	\newtheorem{theorem}{Theorem}
	\newtheorem{lemma}{Lemma}
	\newtheorem{cor}{Corollary}
	 \newtheorem{assumption}{Assumption}
\def\E{\mathbbmss{E}}
\def\R{\mathbbmss{R}}
\def\bV{\mathbf{V}}
\def\bQ{\mathbf{Q}}
\def\Ytil{\tilde{Y}_i}
\def\Yhat{\hat{\tilde{Y}}_i}
\def\bX{\mathbf{X}}
\DeclareMathOperator*{\argmin}{argmin}
\def\bbeta{\bm{\beta}}
\def\balpha{\bm{\alpha}}
\def\btheta{\bm{\theta}}
\begin{document}

\def\spacingset#1{\renewcommand{\baselinestretch}%
{#1}\small\normalsize} \spacingset{1}


\if1\blind
{
  \title{\bf Robust Q-learning}
  \author{Ashkan Ertefaie$^1$, James R. McKay$^2$, David Oslin$^3$,
    Robert L. Strawderman$^1$  \hspace{.2cm}\\
 
{\small{
    $^1$Department of Biostatistics and Computational Biology, University of Rochester\\
    $^2$Center on the Continuum of Care in the Addictions, Department of Psychiatry, University of Pennsylvania\\
    $^3$Philadelphia Veterans Administration Medical Center, and Treatment Research Center and Center for Studies of Addictions, Department of Psychiatry, University of Pennsylvania }}
}
  \maketitle
} \fi

\if0\blind
{
  \bigskip
  \bigskip
  \bigskip
  \begin{center}
    {\LARGE\bf Robust Q-learning}
\end{center}
  \medskip
} \fi

\bigskip
\begin{abstract}
Q-learning is a regression-based approach that is widely used to formalize the development of an optimal dynamic treatment strategy. Finite dimensional working models are typically used to estimate certain nuisance parameters, and misspecification of these working models can result in residual confounding and/or efficiency loss. We propose a robust Q-learning approach which allows estimating such nuisance parameters using data-adaptive techniques. We study the asymptotic behavior of our estimators and provide simulation studies that highlight the need for and usefulness of
the proposed method in practice. We use the data from the ``Extending Treatment Effectiveness of Naltrexone'' multi-stage randomized 
trial to illustrate our proposed methods.  
\end{abstract}

\noindent%
{\it Keywords: Cross-fitting, Data-adaptive techniques, Dynamic treatment strategies, Residual confounding}  
\vfill

\newpage
\spacingset{1.5}

\section{Introduction}
A dynamic treatment strategy is a sequence of decision rules that maps individual characteristics to a treatment option at each decision point (i.e., a specific point in time in which a treatment is to be considered or altered). An optimal dynamic treatment strategy seeks to make these decisions to maximize a particular expected health outcome \citep{lavori2000design, murphy2005experimental, nahum2012experimental, lei2012smart, davidian2016dynamic}. 
This is similar to clinical decision making whereby care providers tailor the type/dose of treatment over the course of clinical care based on ongoing information regarding patient progress in treatment. 

The main goal of precision medicine (i.e., developing an effective dynamic treatment strategy) is to use patient characteristics to inform a personalized treatment plan as a sequence of decision rules that leads to the best possible health outcome for each patient  \citep{nahum2012experimental, chakraborty2013statistical, kosorok2015adaptive, butler2018incorporating}. Q-learning is a reinforcement learning algorithm that is widely used to estimate an optimal dynamic treatment strategy using data from multi-stage randomized clinical  trials or observational studies \citep{watkins1992q, nahum2012q, laber2014dynamic}.  Starting with the final study stage, Q-learning finds the treatment option that optimizes the desired expected outcome. Fixing the optimally-chosen treatment at the final stage,  Q-learning moves backward to the immediately preceding stage and searches for a treatment option assuming that future treatments will be optimized. The process continues until the first stage is reached. This { backward induction} procedure is designed to avoid treatment options that appear to be optimal in the short term but may lead to a less desirable long-term outcome \citep{chakraborty2013statistical, davidian2016dynamic}.  Similar to other model-based approaches, model misspecification can seriously affect the result of Q-learning and the problem exacerbates substantially as the number of stages increases. Specifically, it can lead to residual confounding and suboptimal  dynamic treatment strategies \citep{zhao2009reinforcement, ertefaie2016q}. An alternative approach is A-learning; this backward induction strategy is also model-based, 
hence subject to the possibility of misspecification, but imposes somewhat less restrictive regression model assumptions through modeling
only the contrasts between treatments and the propensity of treatment assignment given the observed patient history \citep{murphy2003optimal, schulte2014q, shi2018high}. However, this extra robustness comes at the price substantially reduced efficiency; indeed, Q-learning may lead to a parameter estimate that  can be {up to 170\% more 
efficient than A-learning} \citep{schulte2014q}. Policy learning methods are another class of methods for estimating an optimal dynamic treatment regime that circumvent the need for the conditional outcome models by directly optimizing the expected outcome among a class of rules \citep{zhao2009reinforcement, zhao2011reinforcement, zhao2012estimating, song2015sparse}. Despite this appealing feature, policy learning methods are inefficient and  fail to provide reasonable inference for the parameter estimates that define non-smooth decision rules  (e.g., indicator or max operators) due
to slow rates of convergence \citep{kosorok2015adaptive, zhao2015new}.

 Doubly robust estimators that are based on modeling both the treatment and outcome processes have also been proposed for policy learning and for structural nested models \citep{zhang2012robust, 10.2307/24538158, zhang2013robust,  wallace2015doubly}. In general terms, consistency of the doubly robust estimators is guaranteed as long as either the treatment assignment mechanism or the postulated conditional mean outcome models are correctly specified, and semiparametric efficiency follows when both models are correctly specified \citep{rotnitzky1998semiparametric, van2003unified, tsiatis2007semiparametric}.  However, 
while doubly robust estimators give two routes for consistent estimation, the performance of these estimators depends critically on the modeling choice for the indicated treatment and mean outcome parameters. In practice, finite-dimensional models are used that are too restrictive and likely to be misspecified \citep{wallace2015doubly}. \cite{kang2007demystifying} showed that doubly robust estimators can have poor performance when both models are misspecified. To mitigate this problem, bias reduction techniques have been proposed \citep{cao2009improving, vermeulen2015bias, vermeulen2016data}.  An alternative is to use flexible learning-based methods that may reduce the chance of inconsistency \citep{benkeser2017doubly}.

In this paper, we consider the problem of Q-learning for the setting of a two-stage dynamic binary treatment choice regime. 
A typical approach involves postulating linear models for both the first and second stage Q-functions that have
stage-specific main effects involving pretreatment variables and interactions of these variables with the stage-specific treatment choice 
(e.g., \citealp{laber2014dynamic}); however, only the interaction terms in these models directly influence the corresponding treatment decision functions 
that are optimized as part of the Q-learning process. Due to the nature of backward induction, the first stage model is likely to be a complicated 
function of the relevant covariates; misspecification of the main effect models in either stage can induce non-ignorable residual confounding.
To increase robustness, we therefore consider the indicated main effects as unknown nuisance parameters and we adapt an approach originally
proposed in \cite{robinson1988root} for partial linear models that allows us to eliminate these parameters from the Q-functions. In particular, these hard-to-estimate
parameters are replaced with models for the treatment assignment probability and mean outcome given pretreatment covariates that can be more easily estimated using, for example, nonparametric regression methods or related statistical learning methods (e.g., random forests). The resulting transformation process leads to consistent and asymptotically normal estimators of the so-called first- and second-stage blip functions that are robust to misspecification of the main effect nuisance parameters.

\section{Notation and Formulation}

Consider a two-stage study where binary treatment decisions are made at each time point. 
Let $\mathcal{O} = (\bm{X}_1,A_1,\bm{X}_2,A_2,Y)$ follow some probability distribution $P_0$ and 
suppose we observe $N$ independent, identically distributed trajectories of $\mathcal{O}.$
The vector $\bm{X}_1 \in \mathcal{X}_1 \subset \R^{p_1}$ consists of all available baseline covariates measured 
before treatment at the first decision point $A_1\in \{0,1\}$ and the vector $\bm{X}_2 \in \mathcal{X}_2 \subset \R^{p_2}$ 
consists of all available intermediate covariates measured before treatment at the second decision point $A_2\in \{0,1\}$. 
For notational convenience we define  
$\bm{S}^0_i=(\bm{X}_{1i}^\top,A_{1i},\bm{X}_{2i}^\top)^\top\in \mathcal{S} \subset \R^{p_1+p_2+1}$ and 
 $\bm{W}^0_i=\bm{X}_{1i} \in \mathcal{X}_1 \subset \R^{p_1}$.
%
%
For later use, we also define variables $\bm{S}_i$ and $\bm{W}_i, i=1,\ldots,n;$ respectively,
each represents some finite dimensional function of the variables in $\bm{S}^0_i$ and  $\bm{W}^0_i.$
Hence, knowledge of $\bm{S}^0_i$ and $\bm{W}^0_i$ respectively implies knowledge of $\bm{S}_i$ and $\bm{W}_i;$
however, the reverse may not hold.
The observed outcome $Y \in \R$ (measured after $A_2$) is assumed continuous, with a larger value of $Y$ indicating  a better clinical outcome.

\section{A Robust Formulation of Q-learning}
\label{sec:Qform}
The outcome $Y$ is assumed to satisfy the model
\begin{equation}
  \label{eq:causal-model}
  Y_i = \eta_2(\bm{S}^0_{i}) + A_{2i} \cdot \Delta_2(\bm{S}^0_{i}) + \epsilon_{2i}, 
  ~\E( \epsilon_{2i} | \bm{S}^0_{i}, A_{2i} ) = 0, ~i=1,\dotsc,N,
\end{equation}
where the deterministic, unknown real-valued functions $\eta_2(\cdot)$ and $\Delta_2(\cdot)$ are defined on $\mathcal{S}.$
Because the treatment variable $A_2$ is binary,  the additive error model  \eqref{eq:causal-model} 
places no parametric constraints on the conditional mean function.

In Q-learning,  backward induction is used to characterize the optimal dynamic treatment regime. 
Define the second stage Q-function 
\begin{equation}
\label{eq:stdQ2}
Q_{2}(\bm{s}^0, a_{2}) = \E( Y  | \bm{S}^0 = \bm{s}^0, A_{2} = a_2 ) =  \eta_2(\bm{s}^0) + a_2 \Delta_2(\bm{s}^0);
\end{equation}
this measures ``quality'' when treatment $a_2$ is assigned 
to a patient with characteristics $\bm{s}^0$ at the second stage \citep{laber2014dynamic}.
Similarly, define the first stage Q-function as
\begin{equation}
\label{eq:stdQ1}
Q_{1}(\bm{w}^0, a_{1}) = \E\bigl\{ \max_{a_2} Q_2(\bm{S}^0, a_{2})  | \bm{W}^0 = \bm{w}^0, A_{1} = a_1 \bigr\}
\end{equation}
where, for $a_2 \in \{0,1\}$, 
\[
\max_{a_2} Q_2(\bm{s}^0, a_{2}) = \eta_2(\bm{s}^0) + \Delta_2(\bm{s}^0) I\{ \Delta_2(\bm{s}^0) > 0 \}.
\]
Analogously to $Q_{2}(\bm{s}^0, a_{2}),$ $Q_{1}(\bm{w}^0, a_{1})$ measures ``quality'' when treatment $a_1$ is assigned to a patient with 
characteristics $\bm{w}^0$ at baseline, assuming the optimal treatment choice is also made in the second stage.  
Because $A_1$ is binary, $Q_{1}(\bm{w}^0, a_{1})$ can without loss of generality be written as $\eta_1(\bm{w}^0) +  a_1 \Delta_1(\bm{w}^0),$
where the real-valued functions $\eta_1(\cdot)$ and $\Delta_1(\cdot)$ are defined on $\mathcal{X}_1.$ 
It then follows that
\[
\max_{a_1} Q_1(\bm{w}^0, a_{1}) = \eta_1(\bm{w}^0) + \Delta_1(\bm{w}^0) I\{ \Delta_1(\bm{w}^0) > 0 \}.
\]
Taken together, 
the optimal dynamic treatment regime is given by $d^{opt}(\bm{s}^0)=\{d^{opt}_1(\bm{w}^0),d^{opt}_2(\bm{s}^0)\}$, where 
$d^{opt}_1(\bm{w}^0)=I\{\Delta_1(\bm{w}^0)>0\}$ and $d^{opt}_2(\bm{s}^0)=I\{\Delta_2(\bm{s}^0)>0\}$. We note that, under standard causal assumptions
and formulated appropriately, $\Delta_1(\bm{w}^0)$ and $\Delta_2(\bm{s}^0)$ are
commonly referred to as the first and second stage blip functions.

A widely used convention in the literature on dynamic  treatment regimes is to respectively model 
$Q_2(\bm{s}^0, a_{2}; \bm{b}_2) =  \bm{b}^\top_{20} \bm{s}_{20} + a_2 \bm{b}^\top_{21} \bm{s}_{21}$
and
$Q_1(\bm{w}^0, a_{1}; \bm{b}_1) =  \bm{b}^\top_{10} \bm{w}_{10} + a_1 \bm{b}^\top_{11} \bm{w}_{11},$
where $\bm{s}_{2k}, k=0,1$ are sets of variables derived from $\bm{s}^0$ (i.e., a realization of $\bm{S}^0$) and
$\bm{w}_{1k}, k=0,1$ are sets of variables derived from $\bm{w}^0$ (i.e., a realization of $\bm{W}^0$).
These model formulations impose restrictive assumptions on both $\eta_j(\cdot)$ and $\Delta_j(\cdot), j = 1,2$.
However, the decision functions of interest only depend on the $\Delta_j(\cdot), j = 1,2,$ or
per the indicated linear models, on the $\bm{b}_{j1}s.$ Because misspecification of the models for
the nuisance parameters $\eta_j(\cdot), j =1,2$ can induce residual confounding and affect the 
causal interpretation of the interaction terms, we propose a novel modification of the Q-learning
approach that eliminates the need to directly model $\eta_j(\cdot), j=1,2.$ In particular, 
we adapt techniques originally introduced by  \cite{robinson1988root} for root-n consistent
inference in semiparametric regression models, specifically partially linear models, to the
 Q-learning problem.
 
 \subsection{Regression model for Stage 2 decision function}
 \label{stage2}
Define $\mu_{2Y}(\bm{s}^0) = \E( Y | \bm{S}^0 = \bm{s}^0 )$ and $\mu_{2A}(\bm{s}^0) = \E( A_2 | \bm{S}^0 = \bm{s}^0 );$
for now, we will proceed as if these two functions are known. Under 
\eqref{eq:causal-model}, or equivalently \eqref{eq:stdQ2},
$\mu_{2Y}(\bm{s}^0) = \eta_2(\bm{s}^0) + \mu_{2A}(\bm{s}^0)  \Delta_2(\bm{s}^0),$ implying that
\begin{equation}
  \label{eq:causal-model-2}
  Y_i -  \mu_{2Y}( \bm{S}^0_i) = \{ A_{2i} - \mu_{2A}(\bm{S}^0_i) \}  \cdot \Delta_2(\bm{S}^0_{i}) + \epsilon_{2i}, 
  ~\E( \epsilon_{2i} | \bm{S}^0_{i}, A_{2i} ) = 0, ~i=1,\dotsc,N.
\end{equation}
Define the centered second stage Q-function 
\begin{equation}
\label{eq:cenQ2}
Q_{2c}(\bm{s}^0, a_{2}) = \E\{ Y -  \mu_{2Y}(\bm{S}^0) | \bm{S}^0 = \bm{s}^0, A_{2} = a_2 \} = 
\{ a_2 - \mu_{2A}(\bm{s}^0) \} \Delta_2(\bm{s}^0);
\end{equation}
observe that $Q_{2c}(\bm{s}^0, a_{2}) = Q_{2}(\bm{s}^0, a_{2}) - \mu_{2Y}(\bm{s}^0).$ 
Although $Q_{2c}(\bm{s}^0, a_{2})$ differs from $Q_{2}(\bm{s}^0, a_{2}),$ the action
$a_2$ that maximizes $Q_{2c}(\bm{s}^0, a_{2})$ is the same as that which maximizes
$Q_{2}(\bm{s}^0, a_{2})$ because $\mu_{2Y}(\bm{s}^0)$ does not depend on the value of $a_2$.
With $\eta_2(\bm{s}^0)$ eliminated from $Q_{2c}(\bm{s}^0, a_{2}; \bm{\beta}_2),$ we 
propose to model \eqref{eq:cenQ2} via 
\begin{equation}
\label{eq:cenQ2mod}
Q_{2c}( \bm{s}^0, a_{2}; \bm{\beta}_2 ) = \{ a_2 - \mu_{2A}(\bm{s}^0) \} \bm{s}^\top \bm{\beta}_2,
\end{equation}
where $\bm{s}$ denotes the realization of $\bm{S}$
(i.e., some finite set of variables derived from $\bm{S}^0$).
Because 
\[
\E\left\{ Y -  \mu_{2Y}(\bm{S}^0) - Q_{2c}(\bm{S}^0, A_{2}; \bm{\beta}_2) | \bm{S}^0 = \bm{s}^0, A_2 = a_2 \right\} = 
\bigl\{ a_{2} - \mu_{2A}(\bm{s}^0) \bigr\}   \bigl\{ \Delta_2(\bm{s}^0) - \bm{s}^\top \bm{\beta}_2 \bigr\},
\]
it is easily shown that
\begin{eqnarray*}
\bm{\beta}^*_2 
& = & 
\underset{\bm{\beta}_2}{\arg \! \min}~
\E\left[ \left\{ Y -  \mu_{2Y}(\bm{S}^0) - Q_{2c}(\bm{S}^0, A_{2}; \bm{\beta}_2) \right\}^2 \right] \\
& = & \underset{\bm{\beta}_2}{\arg \! \min}~
\E\left[ \bigl\{ A_{2} - \mu_{2A}(\bm{S}^0) \bigr\}^2  \bigl\{ \Delta_2(\bm{S}^0) - \bm{S}^\top \bm{\beta}_2  \bigr\}^2 \right].
\end{eqnarray*}
The second expression shows that $\bm{S}^\top \bm{\beta}^*_2$ is the best (weighted) linear predictor of  $\Delta_2(\bm{S}^0).$
For data $(Y_i,\bm{S}^0_i),i = 1,\ldots,N,$ the above developments further show that one can estimate $\bm{\beta}^*_2$ by 
$\tilde{\bm{\beta}}_{2N},$ calculated as the minimizer of
\[
\sum_{i=1}^N \bigl\{ Y_i -  \mu_{2Y}(\bm{S}^0_i) - Q_{2c}(\bm{S}^0_i, A_{2i}; \bm{\beta}_2) \bigr\}^2.
\]
Finally, maximizing $Q_{2c}( \bm{s}^0, a_{2}; \bm{\beta}^*_2 )$
for $a_2 \in \{0,1\}$ gives $d^{opt}_{2}(\bm{s}; \bm{\beta}^*_2) = I\{\bm{s}^\top \bm{\beta}^*_2 >0\}$
as the optimal model-based treatment decision in Stage 2, with 
a corresponding estimated decision rule of $\tilde{d}^{opt}_2(\bm{s}) = I\{\bm{s}^\top \tilde{\bm{\beta}}_{2N} >0\}$
(i.e., assuming $\mu_{2Y}(\cdot)$ and $\mu_{2A}(\cdot)$ are known).

The calculations above evidently rely on the availability of $\mu_{2Y}(\cdot)$ and  $\mu_{2A}(\cdot).$ The assumption that
$\mu_{2Y}(\cdot)$ is known is particularly unrealistic; hence, in Section \ref{EstOpt}, we establish the properties of the 
corresponding least squares estimator when these functions are estimated from the available
data using suitable consistent nonparametric estimators, such as those derived from random forests \citep[e.g.,][]{scornet2015consistency}
or Super Learner \citep{van2007super}.

\subsection{Regression model for Stage 1 decision function}
\label{stage1}

By construction, the relevant first stage Q-function depends on the model for  \eqref{eq:stdQ2}.
Using the partially linear model of Section \ref{stage2}, the model-based analog of the second stage Q-function \eqref{eq:stdQ2}
is given by
\begin{equation}
\label{eq:Q2mod}
Q_{2}(\bm{s}^0, a_{2}; \bm{\beta}^*_2) = \mu_{2Y}(\bm{s}^0) + Q_{2c}(\bm{s}^0, a_{2}; \bm{\beta}^*_2), 
\end{equation}
where the second term is defined in \eqref{eq:cenQ2mod}. In view of \eqref{eq:stdQ1}, we therefore re-define the first stage
Q-function of interest as
\begin{equation}
\label{eq:newQ1}
Q_{1}(\bm{w}^0, a_{1}) = \E\bigl\{ \max_{a_2} Q_2(\bm{S}^0, a_{2}; \bm{\beta}^*_2)  | \bm{W}^0 = \bm{w}^0, A_{1} = a_1 \bigr\}.
\end{equation}
Without loss of generality, and arguing similarly to the previous section, 
the fact that $A_1$ is binary means 
\begin{equation}
\label{eq:newQ1mod}
Q_{1}(\bm{w}^0, a_{1}) = \eta_1(\bm{w}^0) + a_1 \Delta_1(\bm{w}^0)
\end{equation}
is a saturated nonparametric model for \eqref{eq:newQ1}; here, $a_1 \in \{0,1\}$ and
the real-valued functions $\eta_1(\cdot)$ and $\Delta_1(\cdot)$ are defined on $\mathcal{X}_1$
and are not necessarily assumed to be the same functions that were initially used to define $Q_{1}(\bm{w}^0, a_{1})$ a
the beginning of Section \ref{sec:Qform}.

Define
\begin{equation}
\label{eq:Ytil-2}
\tilde{Y}^\dag = \max_{a_2} Q_{2}(\bm{S}^0, a_{2}; \bbeta^*_2); 
\end{equation}
then, it is not difficult to show that
\begin{eqnarray*}
\label{eq:Q2max}
\tilde{Y}^\dag & = & \mu_{2Y}(\bm{S}^0) + \bm{S}^\top \bbeta^*_2 
\bigl\{ I(\bm{S}^\top \bbeta^*_2  > 0 ) - \mu_{2A}(\bm{S}^0) \bigr\}.
\end{eqnarray*}
Under \eqref{eq:newQ1mod}, 
$\E\bigl( \tilde{Y}^\dag \big |  \bm{W}=\bm{w}^0, A_1 = a_1 \bigr) = \eta_1(\bm{w}^0) + a_1 \Delta_1(\bm{w}^0);$
similarly,
\[
\mu_{1Y}(\bm{w}^0) =  \E\bigl(\tilde{Y}^\dag \big |  \bm{W}=\bm{w}^0 \bigr)
= \eta_1(\bm{w}^0) + \mu_{1A}(\bm{w}^0)  \Delta_1(\bm{w}^0),
\]
where $\mu_{1A}(\bm{w}^0) = \E( A_1 | \bm{W}=\bm{w}^0).$  Similarly to \eqref{eq:cenQ2}, 
 we can write
\begin{equation}
\label{eq:cenQ1}
Q_{1c}(\bm{w}^0, a_{1}) = \E\{ \tilde{Y}^\dag -  \mu_{1Y}(\bm{W}^0) | \bm{W}^0 = \bm{w}^0, A_{1} = a_1 \} = 
\{ a_1 - \mu_{1A}(\bm{w}^0) \} \Delta_1(\bm{w}^0)
\end{equation}
and, considering  \eqref{eq:cenQ2mod}, can model \eqref{eq:cenQ1} via
\begin{equation}
\label{eq:cenQ1mod}
Q_{1c}(\bm{w}^0, a_{1}; \bm{\beta}_1 ) = \{ a_1 - \mu_{1A}(\bm{w}^0) \} \bm{w}^\top \bm{\beta}_1,
\end{equation}
where $\bm{w}$ is defined analogously to $\bm{s}$.
Together, these results imply that
\[
\E\left\{ \tilde{Y}^\dag -  \mu_{1Y}(\bm{W}^0) - Q_{1c}(\bm{W}^0, A_{1}; \bm{\beta}_1) | \bm{W}^0 = \bm{w}^0, A_1 = a_1 \right\} = 
\bigl\{ a_{1} - \mu_{1A}(\bm{w}^0) \bigr\}   \bigl\{ \Delta_1(\bm{w}^0) - \bm{w}^\top \bm{\beta}_1 \bigr\}.
\]
Similarly to the second stage problem, it now follows that
\begin{eqnarray*}
\bm{\beta}^*_1 
& = & 
\underset{\bm{\beta}_1}{\arg \! \min}~
\E\left[ \left\{ \tilde{Y}^\dag -  \mu_{1Y}(\bm{W}^0) - Q_{1c}(\bm{W}^0, A_{1}; \bm{\beta}_1) \right\}^2 \right] \\
& = & \underset{\bm{\beta}_1}{\arg \! \min}~
\E\left[ \bigl\{ A_{1} - \mu_{1A}(\bm{W}^0) \bigr\}^2  \bigl\{ \Delta_1(\bm{W}^0) - \bm{W}^\top \bm{\beta}_1  \bigr\}^2 \right],
\end{eqnarray*}
the latter implying that $\bm{W}^\top \bm{\beta}^*_1$ is the best (weighted) linear predictor of  $\Delta_1(\bm{W}^0).$ 
For data $(Y_i,\bm{S}^0_i),i = 1,\ldots,N,$ 
and assuming that  $\bm{\beta}^*_2,$ $\mu_{jY}(\cdot)$ and $\mu_{jA}(\cdot), j = 1,2$ are all known,
the above developments further imply that one can estimate $\bm{\beta}^*_1$ using
\[
\tilde{\bm{\beta}}_{1N} = \underset{\bm{\beta}_1}{\arg \! \min}
\sum_{i=1}^N \bigl\{ \tilde Y^\dag_i -  \mu_{1Y}(\bm{W}^0_i) - Q_{1c}(\bm{W}^0_i, A_{1i}; \bm{\beta}_1) \bigr\}^2.
\]
Parallel to the second stage problem, 
the optimal model-based treatment decision in Stage 1, assuming
the optimal model-based treatment is also given in Stage 2, would be
$d^{opt}_{1}(\bm{w}; \bm{\beta}^*_1) = I\{\bm{w}^\top \bm{\beta}^*_1 >0\}$,
and may be estimated by $\tilde{d}^{opt}_1(\bm{w}) = 
I\{\bm{w}^\top \tilde{\bm{\beta}}_{1N} >0\}$.

Of course, none of $\bm{\beta}^*_2,$ $\mu_{jY}(\cdot)$ and possibly $\mu_{jA}(\cdot), j = 1,2$ are known in practice; in Section \ref{EstOpt}, we 
establish the properties of the corresponding least squares estimator when these quantities are all
estimated.

{\sc Remark:} As an alternative to \eqref{eq:Ytil-2}, one can substitute 
\begin{equation}
\label{eq:Ytil}
\tilde Y = Y + \bm{S}^\top \bm{\beta}^*_{2} 
\bigl\{ I(\bm{S}^\top  \bm{\beta}^*_{2} > 0 ) - A_2 \bigr\}
\end{equation}
for \eqref{eq:Ytil-2} when calculating $\tilde{\bm{\beta}}_{1N};$
this follows directly from the equalities
$\E( \tilde{Y} | \bm{W}^0 = \bm{w}^0, A_1 = a_1 ) = \E( \tilde{Y}^\dag  | \bm{W}^0 = \bm{w}^0, A_1 = a_1 )$
and 
$\E( \tilde{Y}  | \bm{W}^0 = \bm{w}^0 ) = \E(\tilde{Y}^\dag | \bm{W}^0 = \bm{w}^0 ).$ 
$\Box$

\section{Robust Q-learning: estimation in practice and corresponding theory}
\label{EstOpt}

The two-stage procedure described in the previous section leads to a class of decision rules indexed by finite-dimensional parameter vectors, that is,
$d^{opt}_1(\bm{W};\bm{\beta}_1) =I( \bm{W}^\top \bm{\beta}_1>0)$ and $d^{opt}_1(\bm{S};\bm{\beta}_2)=I( \bm{S}^\top \bm{\beta}_2>0)$ 
\citep[e.g.,][]{chakraborty2013statistical}. 
Although not explicit in prior developments, the variable sets $\bm{S}$ and $\bm{W}$ are each assumed to contain a column of ones, 
so that the main effects of treatment at each stage can be included as part of the decision rule.
The proposed approach eliminates the nuisance parameters $\eta_j(\cdot), j=1,2$ from the first and 
the second stage decision rules at the expense of introducing the four additional unknown functions  $\mu_{jY}(\cdot)$ and $\mu_{jA}(\cdot), j = 1,2.$ The advantage
of the proposed  approach is that the indicated functions depend on observables and can be easily estimated using any nonparametric regression or statistical learning method having sufficiently good prediction performance. Importantly, in the case of a sequentially randomized clinical trial, the functions $\mu_{jA}(\cdot), j = 1,2$ are known and correct models are easily formulated.

\subsection{Estimation in practice}

The  developments in the next two subsections assume that the original sample, with $N$ elements independently and identically distributed
as $P_0,$ has been randomly split into two disjoint and independent samples, say $\bm{D}_{I_n}$ and $\bm{D}_{I^c_n},$ with 
$n = O(N)$ (e.g., $n = N/2$) and where $\bm{I}_n$ and its complement $\bm{I}^c_{n}$ form a partition of the index set $\{1,\ldots,N\}$.
The induced nuisance parameters $\hat \mu_{2Y}(\cdot),$ $\hat \mu_{2A}(\cdot),$ $\hat \mu_{1Y}(\cdot),$ and $\hat \mu_{1A}(\cdot)$
are to be estimated as described earlier using the data in $\bm{D}_{I^c_n};$
the finite dimensional parameters of interest are then estimated using the data $\bm{D}_{I_n},$
treating $\hat \mu_{2Y}(\cdot),$ $\hat \mu_{2A}(\cdot),$ $\hat \mu_{1Y}(\cdot),$ and $\hat \mu_{1A}(\cdot)$
as known functions. 
As described above, the use of such sample-splitting is a particularly simple form of cross-fitting and can be generalized
easily \citep{Cherno18}; our use of sample splitting as described above will be sufficient to establish the main ideas for
both estimation and asymptotics without unnecessarily complicating notation. 
Generalization to cross-fitting is straightforward and will be discussed at the end of Section \ref{EstOpt}.

Let $\hat\mu_{2Y}(\cdot)$, $\hat\mu_{1Y}(\cdot)$, $\hat\mu_{2A}(\cdot)$, and $\hat\mu_{1A}(\cdot)$ denote suitable estimates of 
$\mu_{jY}(\cdot)$ and $\mu_{jA}(\cdot), j = 1,2$ derived from the data in $\bm{D}_{I^c_n}.$
Backward induction, implemented as described earlier with obvious modifications, 
 can be used estimate the optimal dynamic treatment regime. In particular, for the second stage, we compute
\begin{equation}
\label{b2hat-LS}
\hat {\bm{\beta}}_{2n}= 
\underset{\bm{\beta}_2}{\arg \! \min}~
\sum_{i \in \bm{I}_n} \left[ Y_i-\hat\mu_{2Y}(\bm{S}^0_i) - \{A_{2i}-\hat\mu_{2A}(\bm{S}^0_i)\} \cdot \bm{S}_i^\top \bm{\beta}_2    \right]^2.
\end{equation}
To estimate the first stage parameters, we first calculate the estimated first stage pseudo-outcome
\begin{equation}
\label{Yhat}
\hat {\tilde {Y}}_i
=Y_i + \bm{S}_i^\top \hat{\bm{\beta}}_{2n} \{I(\bm{S}_i^\top \hat{\bm{\beta}}_{2n}  >0) -A_{2i}\};
\end{equation}
and then compute
\begin{equation}
\label{b1hat-LS}
\hat {\bm{\beta}}_{1n}= \underset{\bm{\beta}_1}{\arg \! \min}~ \sum_{i \in \bm{I}_n} \left[ \hat {\tilde {Y_i}}-\hat \mu_{1Y}(\bm{W}^0_{i}) - \{A_{1i}-\hat\mu_{1A}(\bm{W}^0_{i})\} \cdot \bm{W}_{i}^\top \bm{\beta}_1    \right]^2.
\end{equation}
The notation in \eqref{b2hat-LS} and \eqref{b1hat-LS} emphasizes the fact that the nuisance parameters
$\mu_{2Y}(\cdot),$ $\mu_{2A}(\cdot),$ $\mu_{1Y}(\cdot),$ and $\mu_{1A}(\cdot)$ are estimated using
the outcome and full set of either second and first stage covariates, whereas the linear specifications used for modeling
the centered Q-functions might not use all available covariate information.

As defined, the pseudo-outcomes are non-smooth functions of the data, hence so is $\hat {\bm{\beta}}_{1n}$; this can cause non-regularity problems for 
$\hat {\bm{\beta}}_{1n}$ \citep{laber2014dynamic}. In particular, when $Pr(| \bm{S}^\top \bm{\beta}^*_{2} | =0)> 0,$ 
i.e., there exists a strata of the covariates $\bm{S}$ used to model the Q-function that occurs with positive probability and 
for which treatment is neither beneficial nor harmful, the estimators of first stage regression coefficients become non-regular 
due to the non-differentiability of the indicator function in  the definition of the pseudo-outcome. 

The proposed Q-learning models essentially utilize the propensity score regression approach of \cite{robins1992estimating} to eliminate the problem 
of mismodeling hard-to-estimate infinite-dimensional parameters (i.e., $\eta_j(\cdot), j=1,2$) on the estimators of the $\bm{\beta}^*_j$s. The resulting 
estimator of $\bm{\beta}^*_j$ is consistent and asymptotically normal under suitable conditions on $\hat\mu_{jA}(.)$ and $\hat\mu_{jY}(.)$, $j=1,2$.  
In particular, the estimate of $\bm{\beta}^*_j$ is robust to misspecification of $\mu_{jY}(\cdot)$ provided that $\mu_{jA}(\cdot)$ is consistently 
estimated where $\bm{\beta}^*_j$ represents the parameters of the best linear approximation of the unknown $\Delta_j(\cdot).$
In practice, we recommend using ensemble learning methods such as Super Learner \citep{van2007super} for estimating 
both $\mu_{jA}(\cdot)$ and $ \mu_{jY}(\cdot)$. Asymptotically, Super Learner performs as well as the best convex combination of the 
base learners in the chosen library, in the sense of minimizing the difference in risk compared to the corresponding oracle estimator. 
Moreover, the size of the library can grow at a polynomial rate compared with the sample size without affecting its oracle performance 
 \citep{van2003unifiedcross, dudoit2003asymptotics, van2006oracle}.  For these reasons, it is recommended that the library consist of a 
 large and diverse set of regression modeling procedures (i.e., nonparametric, semiparametric, parametric).
Importantly, these theoretical results only imply that Super Learner can match the performance of the (unknown) best possible convex 
combination of choices in the specified library. Thus, consistency is not guaranteed unless the corresponding oracle estimator is consistent 
and converges sufficiently fast. However, with the use of a sufficiently flexible library, Super Learner clearly improves one's ability to 
construct a consistent estimator because  it eliminates the need to select and subsequently rely on a single method of estimation.

\subsection{Theoretical results}

To further simplify notation, let $\Delta_{2i} = \Delta_2(\bm{S}^0_{i}),$ $\Delta_{1i} = \Delta_1(\bm{W}^0_{i}),$ 
$\mu_{2Ai} = \mu_{2A}(\bm{S}^0_i),$ $\hat \mu_{2Ai} = \hat \mu_{2A}(\bm{S}^0_i),$ $ \mu_{1Ai} = \mu_{1A}(\bm{W}^0_i),$ 
and $\hat \mu_{1Ai} = \hat \mu_{1A}(\bm{W}^0_i).$  In addition, with
$\bm{x}^{\otimes 2} = \bm{x}  \bm{x}^\top$ for any vector $\bm{x},$
define the matrices
\begin{equation*}
\bV_{2n} = \frac{1}{n} \sum_{i \in \bm{I}_n}   (A_{2i}- \mu_{2Ai})^2  \bm{S}^{\otimes 2}_i  
~\mbox{ and }~
\hat \bV_{2n} = \frac{1}{n} \sum_{i \in \bm{I}_n}   (A_{2i}- \hat \mu_{2Ai})^2  \bm{S}^{\otimes 2}_i   
\end{equation*}
\begin{equation*}
\bV_{1n} = \frac{1}{n} \sum_{i \in \bm{I}_n}   (A_{1i}- \hat \mu_{1Ai})^2  \bm{W}^{\otimes 2}_i  
~\mbox{ and }~
\hat \bV_{1n} = \frac{1}{n} \sum_{i \in \bm{I}_n}   (A_{1i}- \hat \mu_{1Ai})^2  \bm{W}^{\otimes 2}_i.  
\end{equation*}

Let $\| \bm{x} \|_q$ denote the usual Q- norm of a vector $\bm{x}$  for $q=1,2,\infty$. Also, for $Z \sim P$ for 
some probability measure $P,$ suppose $f(\cdot)$ is any real-valued, $P-$measurable function; then, we  
define the $L^2(P)$ norm of $f(\cdot)$ as $\| f(Z) \|_{P,2} = \{ \int \! f(\omega)^2 d P(\omega) \}^{1/2}.$ For a 
real-valued function $h(\bm{s}^0; \bm{D}_{I^c_n})$ defined for $\bm{s}^0 \in \mathcal S$ whose calculation 
may depend on the data contained in $\bm{D}_{I^c_n},$ we can also define the random norm
$\left\|  h(\bm{S}^0; \bm{D}_{I^c_n}) \right\|_{P_0,2}$ as the square-root of
\begin{equation}
\label{P2norm-def}
\left\|  h(\bm{S}^0; \bm{D}_{I^c_n})  \right\|^2_{P_0,2}
= \E \left\{ \left\|  h(\bm{S}^0) \right\|^2_{\mathbb{P}_n,2} \big| \bm{D}_{I^c_n} \right\},
\end{equation}
where $\mathbb{P}_n$ denotes the empirical measure on $\bm{D}_{I_n}.$

Our results are established under the following assumptions.

\begin{assumption} \label{assump:support-X}
(i)   The support of $\bm{W}^0$ and the conditional treatment effect $\Delta_1(\bm{W}^0)$ are uniformly bounded; 
(ii)  the support of $\bm{S}^0$ and the conditional treatment effect $\Delta_2(\bm{S}^0)$ are uniformly bounded; and,
the supports of $\bm{S}$ and $\bm{W}$ are uniformly bounded.
\end{assumption}

\begin{assumption} \label{assump:accuracy-treatment}
  (i) $\left\|  \hat \mu_{1A}(\bm{W}^0; \bm{D}_{I^c_n}) - \mu_{1A}(\bm{W}^0) \right\|^2_{P_0,2} = o_p(n^{-1/2});$ 
  (ii) $\left\|  \hat \mu_{2A}(\bm{S}^0; \bm{D}_{I^c_n}) - \mu_{2A}(\bm{S}^0) \right\|^2_{P_0,2} = o_p(n^{-1/2}).$
\end{assumption}

\begin{assumption}  \label{assump:accuracy-outcome}
  (i) $\left\|  \hat \mu_{1Y}(\bm{W}^0; \bm{D}_{I^c_n}) - \mu_{1Y}(\bm{W}^0) \right\|^2_{P_0,2} = o_p(1);$ 
  (ii) $\left\|  \hat \mu_{2Y}(\bm{S}^0; \bm{D}_{I^c_n}) - \mu_{2Y}(\bm{S}^0) \right\|^2_{P_0,2} = o_p(1).$
\end{assumption}

\begin{assumption}  \label{assump:accuracy-2} 
(i) $\left\|  \hat \mu_{1Y}(\bm{W}^0; \bm{D}_{I^c_n}) - \mu_{1Y}(\bm{W}^0) \right\|_{P_0,2}
\left\|  \hat \mu_{1A}(\bm{W}^0; \bm{D}_{I^c_n}) - \mu_{1A}(\bm{W}^0) \right\|_{P_0,2} = o_p(n^{-1/2});$\\
(ii) $\left\|  \hat \mu_{2Y}(\bm{S}^0; \bm{D}_{I^c_n}) - \mu_{2Y}(\bm{S}^0) \right\|_{P_0,2}
\left\|  \hat \mu_{2A}(\bm{S}^0; \bm{D}_{I^c_n}) - \mu_{2A}(\bm{S}^0) \right\|_{P_0,2} = o_p(n^{-1/2})$
\end{assumption}

\begin{assumption} 
\label{assump:posdef}
There exists $1 \leq n_0 < \infty$ such that $\bV_{jn}$ and $\hat \bV_{jn}, j = 1,2$ are positive definite for $n \geq n_0$.
\end{assumption}

\begin{assumption} 
\label{assump:unique} 
$P\big( | \bm{S}_1^\top  \bm{\beta}^*_{2} | = 0 \big) = 0.$
\end{assumption}

Assumption \ref{assump:support-X} requires no discussion. Assumptions \ref{assump:accuracy-treatment} - \ref{assump:accuracy-2} 
impose reasonable conditions on the estimators of the nuisance parameters estimated using cross fitting that are satisified by many 
machine learning algorithms; see \cite{Cherno18} for further discussion.
Assumption \ref{assump:posdef} imposes reasonable conditions on the existence and uniqueness of the least squares estimators
\eqref{b2hat-LS} and \eqref{b1hat-LS}. Assumption \ref{assump:posdef} combined with independent, identically distributed sampling ensures
that the limiting matrices
\[
\bV_2  =  \E\left\{  \mbox{var}\left( A_{2} | \bm{S}^0 \right) \bm{S}^{\otimes 2} \right\} 
~\mbox{ and }~
\bV_1  =  \E\left\{ \mbox{var}\left( A_{1} | \bm{W}^0 \right) \bm{W}^{\otimes 2} \right\}
\]
both exist and are positive definite.
Finally, Assumption \ref{assump:unique} is imposed to avoid non-regular asymptotic behavior
in the first stage least squares estimator \eqref{b1hat-LS}.  Inferences for the parameters that define the
estimated optimal dynamic treatment regime 
$\hat d^{opt}_{2}(\bm{s}) = I\{\bm{s}^\top \hat {\bm{\beta}}_{2n} >0\}$
and
$\hat d^{opt}_{1}(\bm{w}) = I\{\bm{w}^\top \hat {\bm{\beta}}_{1n} >0\}$
can now be derived using the results in the following theorem.
\begin{theorem}
\label{main thm}
Suppose that Assumptions \ref{assump:support-X} - \ref{assump:unique} hold. 
\begin{itemize}
\item[(a)] Let $\hat {\bm{\beta}}_{2n}$ be given by \eqref{b2hat-LS}. Then,
$
\sqrt{n} ( \hat {\bm{\beta}}_{2n}  - {\bm{\beta}}^*_{2} ) 
  \stackrel{d}{\rightarrow} N( \bm{0},   \bV^{-1}_2 \bQ_2 \bV^{-1}_2 )
  $
where the matrices $\bV_2  =  \E\left\{  \mbox{var}\left( A_{2} | \bm{S}^0 \right) \bm{S}^{\otimes 2} \right\}$ 
and $\bQ_2 = \E ( \bm{J}_2^{\otimes 2} )$ for
\[
\bm{J}_2 = 
\{A_{2}-  \mu_{2A}(\bm{S}^0)\}  \bm{S}  
\left[
Y - \mu_{2Y}(\bm{S}^0)  - \{ A_{2}-  \mu_{2A}(\bm{S}^0) \}  \bm{S}^{\top} {\bm{\beta}}^*_{2}
\right].
\]
\item[(b)] Let $\hat {\bm{\beta}}_{1n}$ be given by \eqref{b1hat-LS}. Then,
$
\sqrt{n} ( \hat {\bm{\beta}}_{1n}  - {\bm{\beta}}^*_{1} ) 
  \stackrel{d}{\rightarrow} N( \bm{0},   \bV^{-1}_1 \bQ_1 \bV^{-1}_1 )
  $
where the matrices $\bV_1  =  \E\left\{ \mbox{var}\left( A_{1} | \bm{W}^0 \right) \bm{W}^{\otimes 2} \right\}$
and $\bQ_1 = \E \{ ( \bm{J}_1 + \bm{K} \bV^{-1}_2 \bm{J}_2) ^{\otimes 2} \}$ for
\[
\bm{J}_1 = \{A_{1}-  \mu_{1A}(\bm{W}^0)\}  \bm{W} 
\left[
\tilde{Y}- \mu_{1A}(\bm{W}^0) - \{ A_{1} - \mu_{1A}(\bm{W}^0) \} \bm{W}^{\top} {\bm{\beta}}^*_{1} 
\right]
\]
and
\[
\bm{K} = \E \big[ \{ A_{1}-  \mu_{1A}(\bm{W}^0) \} \{ I(\bm{S}^\top   \bm{\beta}^*_{2}  > 0 ) - A_{2} \} \bm{W}  \bm{S}^{\top} \big].
\]
\end{itemize}
\end{theorem}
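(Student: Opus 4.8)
My plan is to handle both parts by the same route: use the normal equations to write each least squares estimator as (an empirical weight matrix)$^{-1}$ times an empirical ``score'', then show (i) the weight matrix converges in probability to $\bV_j$ and (ii) the score admits an i.i.d.\ asymptotically linear expansion, after which the CLT and Slutsky's theorem finish the job. Part (a) is essentially a cross-fitted Robinson partially linear estimator; part (b) will reuse part (a) and add the term generated by plugging $\widehat{\bm{\beta}}_{2n}$ into the non-smooth first-stage pseudo-outcome.

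\textbf{Part (a).}
I would differentiate \eqref{b2hat-LS} to get $\sqrt n(\widehat{\bm{\beta}}_{2n}-\bm{\beta}^*_2)=\widehat{\bV}_{2n}^{-1}\, n^{-1/2}\sum_{i\in\bm{I}_n}\{A_{2i}-\widehat\mu_{2Ai}\}\bm{S}_i\big[Y_i-\widehat\mu_{2Yi}-\{A_{2i}-\widehat\mu_{2Ai}\}\bm{S}_i^\top\bm{\beta}^*_2\big]$, set $r_{2Ai}=\widehat\mu_{2Ai}-\mu_{2Ai}$, $r_{2Yi}=\widehat\mu_{2Yi}-\mu_{2Yi}$, $e_i=A_{2i}-\mu_{2Ai}$, $\xi_{2i}=e_i\{\Delta_{2i}-\bm{S}_i^\top\bm{\beta}^*_2\}+\epsilon_{2i}$, and expand the summand into six products using \eqref{eq:causal-model-2} and $\{A_{2i}-\widehat\mu_{2Ai}\}=e_i-r_{2Ai}$. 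The leading term is $n^{-1/2}\sum_{i\in\bm{I}_n}\bm{J}_{2i}$, with $\E\bm{J}_2=0$ because $\E(\epsilon_2\mid\bm{S}^0,A_2)=0$ and $\bm{\beta}^*_2$ solves the population weighted normal equations. Three ``Neyman'' cross terms, namely $e_i\bm{S}_i r_{2Yi}$, $e_i\bm{S}_i r_{2Ai}\bm{S}_i^\top\bm{\beta}^*_2$, and $r_{2Ai}\bm{S}_i\xi_{2i}$, have conditional mean zero given $\bm{D}_{I^c_n}$ (using $\E(e_i\mid\bm{S}^0_i)=0$ and $\E(\xi_{2i}\mid\bm{S}^0_i)=0$) and, by Assumption \ref{assump:support-X}, conditional variance $O_p(\|r_{2A}\|_{P_0,2}^2+\|r_{2Y}\|_{P_0,2}^2)=o_p(1)$ (Assumptions \ref{assump:accuracy-treatment}(ii), \ref{assump:accuracy-outcome}(ii)); the standard cross-fitting lemma (conditional Chebyshev, then bounded convergence) then makes each $o_p(1)$. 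The two pure-bias terms $r_{2Ai}\bm{S}_i r_{2Yi}$ and $r_{2Ai}^2\bm{S}_i\bm{S}_i^\top\bm{\beta}^*_2$ I would bound directly by $C\sqrt n\,\|r_{2A}\|_{P_0,2}\|r_{2Y}\|_{P_0,2}=o_p(1)$ (Assumption \ref{assump:accuracy-2}(ii)) and $C\sqrt n\,\|r_{2A}\|_{P_0,2}^2=o_p(1)$ (Assumption \ref{assump:accuracy-treatment}(ii)). For the weight matrix I would use the LLN to get $\bV_{2n}\CiP\bV_2$ together with $\|\widehat\bV_{2n}-\bV_{2n}\|\le C\, n^{-1}\sum_{i\in\bm{I}_n}|r_{2Ai}|=O_p(\|r_{2A}\|_{P_0,2})=o_p(1)$, so $\widehat\bV_{2n}\CiP\bV_2$; by Assumption \ref{assump:posdef}, $\widehat\bV_{2n}$ is invertible for $n\ge n_0$ and $\bV_2$ is positive definite. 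The CLT and Slutsky then give $\sqrt n(\widehat{\bm{\beta}}_{2n}-\bm{\beta}^*_2)\CiD\Normal(\bm{0},\bV_2^{-1}\bQ_2\bV_2^{-1})$.

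\textbf{Part (b).}
I would expand \eqref{b1hat-LS} in the same way with $\widehat{\tilde Y}_i$ replacing $Y_i$ and first-stage objects throughout, and split $\widehat{\tilde Y}_i=\tilde Y_i+(\widehat{\tilde Y}_i-\tilde Y_i)$. From the Remark and \eqref{eq:newQ1mod}, $\tilde Y_i-\mu_{1Yi}=\{A_{1i}-\mu_{1Ai}\}\Delta_{1i}+\epsilon_{1i}$ with $\E(\epsilon_{1i}\mid\bm{W}^0_i,A_{1i})=0$, so the ``$\tilde Y_i$ part'' of the score reduces, exactly as in part (a) and using Assumptions \ref{assump:accuracy-treatment}(i), \ref{assump:accuracy-outcome}(i), \ref{assump:accuracy-2}(i), to $n^{-1/2}\sum_{i\in\bm{I}_n}\bm{J}_{1i}+o_p(1)$ with $\E\bm{J}_1=0$. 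For the increment I would use $xI(x>0)=(x)_+$ and the fact that $(\cdot)_+$ is $1$-Lipschitz with a.e.\ derivative $I(\cdot>0)$ to write $\widehat{\tilde Y}_i-\tilde Y_i=\{I(\bm{S}_i^\top\bm{\beta}^*_2>0)-A_{2i}\}\bm{S}_i^\top(\widehat{\bm{\beta}}_{2n}-\bm{\beta}^*_2)+R_i$, where $R_i=0$ unless $|\bm{S}_i^\top\bm{\beta}^*_2|\le|\bm{S}_i^\top(\widehat{\bm{\beta}}_{2n}-\bm{\beta}^*_2)|$, in which case $|R_i|\le|\bm{S}_i^\top(\widehat{\bm{\beta}}_{2n}-\bm{\beta}^*_2)|$. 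Plugging the linear piece into the score gives $\big[n^{-1}\sum_{i\in\bm{I}_n}\{A_{1i}-\widehat\mu_{1Ai}\}\{I(\bm{S}_i^\top\bm{\beta}^*_2>0)-A_{2i}\}\bm{W}_i\bm{S}_i^\top\big]\sqrt n(\widehat{\bm{\beta}}_{2n}-\bm{\beta}^*_2)$; replacing $\widehat\mu_{1Ai}$ by $\mu_{1Ai}$ and applying the LLN, the bracket tends to $\bm{K}$, and by part (a), $\sqrt n(\widehat{\bm{\beta}}_{2n}-\bm{\beta}^*_2)=\bV_2^{-1}n^{-1/2}\sum_{i\in\bm{I}_n}\bm{J}_{2i}+o_p(1)$, so this piece is $\bm{K}\bV_2^{-1}n^{-1/2}\sum_{i\in\bm{I}_n}\bm{J}_{2i}+o_p(1)$. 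The remainder piece $n^{-1/2}\sum_{i\in\bm{I}_n}\{A_{1i}-\widehat\mu_{1Ai}\}\bm{W}_iR_i$ I would bound by $C\sqrt n\,\|\widehat{\bm{\beta}}_{2n}-\bm{\beta}^*_2\|\cdot n^{-1}\sum_{i\in\bm{I}_n}I\big(|\bm{S}_i^\top\bm{\beta}^*_2|\le C\|\widehat{\bm{\beta}}_{2n}-\bm{\beta}^*_2\|\big)$; since $\|\widehat{\bm{\beta}}_{2n}-\bm{\beta}^*_2\|=O_p(n^{-1/2})$, the prefactor is $O_p(1)$, while on $\{\sqrt n\|\widehat{\bm{\beta}}_{2n}-\bm{\beta}^*_2\|\le M\}$ the empirical fraction has expectation $P(|\bm{S}_1^\top\bm{\beta}^*_2|\le CMn^{-1/2})\to P(|\bm{S}_1^\top\bm{\beta}^*_2|=0)=0$ for each fixed $M$ by Assumption \ref{assump:unique}; letting $M\to\infty$ after $n\to\infty$ makes the remainder $o_p(1)$. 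Since $\bm{J}_{1i}$ and $\bm{J}_{2i}$ are summed over the common index set $\bm{I}_n$, the score equals $n^{-1/2}\sum_{i\in\bm{I}_n}(\bm{J}_{1i}+\bm{K}\bV_2^{-1}\bm{J}_{2i})+o_p(1)$, a normalized sum of i.i.d.\ mean-zero vectors; with $\widehat\bV_{1n}\CiP\bV_1$ (positive definite, by Assumption \ref{assump:posdef}) as in part (a), the CLT and Slutsky give $\sqrt n(\widehat{\bm{\beta}}_{1n}-\bm{\beta}^*_1)\CiD\Normal(\bm{0},\bV_1^{-1}\bQ_1\bV_1^{-1})$.

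\textbf{Expected main obstacle.}
The hard part will be the first-stage increment $\widehat{\tilde Y}_i-\tilde Y_i$ in part (b): getting its linear-in-$(\widehat{\bm{\beta}}_{2n}-\bm{\beta}^*_2)$ expansion and, above all, showing the non-smooth remainder $R_i$ averages to $o_p(n^{-1/2})$. This is precisely where Assumption \ref{assump:unique}, no probability mass on $\{\bm{S}^\top\bm{\beta}^*_2=0\}$, is indispensable and where the well-known Q-learning non-regularity would otherwise appear. A second, more mechanical difficulty will be the cross-fitting/Neyman-orthogonality bookkeeping: keeping track of which terms in each expansion are conditionally mean zero (hence controlled by conditional variances that vanish) versus which are genuine bias terms (hence controlled by the product rate of Assumption \ref{assump:accuracy-2}), so that the nuisance estimation errors provably enter only at order $o_p(n^{-1/2})$ and misspecification of $\mu_{jY}$ is harmless whenever $\mu_{jA}$ is estimated well enough.
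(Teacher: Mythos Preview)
Your proposal is correct and follows essentially the same approach as the paper: the same six-term expansion of the score in part (a) handled by cross-fitting conditional-variance and product-rate arguments, and the same linearization of the pseudo-outcome increment plus indicator-remainder control via Assumption \ref{assump:unique} in part (b). The only organizational difference is that the paper routes part (a) through auxiliary weighted least squares targets $\tilde{\bm{\beta}}^*_{jn}$ and $\bm{\beta}^*_{jn}$ (its Lemma~5) before recombining, whereas you expand $\hat{\bm{\beta}}_{jn}-\bm{\beta}^*_j$ directly; your route is slightly more streamlined but the underlying analysis is the same.
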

The following corollary to Theorem \ref{main thm} shows
that Assumption \ref{assump:unique} is not required to establish the results
in part (b) in certain settings, in contrast to the standard form of Q-learning. 

\begin{cor}
Suppose Assumptions \ref{assump:support-X} - \ref{assump:posdef} hold. In addition, suppose 
$
\E\left( A_{1i} - {\mu}_{1Ai}  \mid \bm{W}_{i}, \bm{S}_{i},  \bm{I}_n \right) = 0, i \in \bm{I}_n.
$
Then, Theorem \ref{main thm}, part (b) remains true.
\label{cor1}
\end{cor}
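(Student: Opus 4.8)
The plan is to re-run the proof of Theorem~\ref{main thm}(b) essentially verbatim, isolating the one place where Assumption~\ref{assump:unique} is invoked and substituting there an argument based on the extra conditional mean-zero hypothesis. First note that Theorem~\ref{main thm}(a) does not use Assumption~\ref{assump:unique}: the second-stage estimator \eqref{b2hat-LS} is an ordinary weighted least squares problem with no indicator functions. Hence, under Assumptions~\ref{assump:support-X}--\ref{assump:posdef}, both $\hat{\bm\beta}_{2n}-\bm\beta^*_2 = O_p(n^{-1/2})$ and the linear representation $\sqrt n(\hat{\bm\beta}_{2n}-\bm\beta^*_2) = \bV^{-1}_2\, n^{-1/2}\sum_{i\in\bm{I}_n}\bm{J}_{2i} + o_p(1)$ remain available. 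In the first-stage estimating equation \eqref{b1hat-LS}, the only appearance of $\hat{\bm\beta}_{2n}$ is through the pseudo-outcome $\hat{\tilde Y}_i$; writing $\tilde Y_i$ for the right-hand side of \eqref{eq:Ytil} evaluated at the true $\bm\beta^*_2$ and setting $(x)_+ = x\,I(x>0)$,
\[
\hat{\tilde Y}_i - \tilde Y_i = \bigl\{(\bm{S}_i^\top\hat{\bm\beta}_{2n})_+ - (\bm{S}_i^\top\bm\beta^*_2)_+\bigr\} - A_{2i}\,\bm{S}_i^\top(\hat{\bm\beta}_{2n}-\bm\beta^*_2).
\]

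Multiplying by $\{A_{1i}-\hat\mu_{1Ai}\}\bm{W}_i$ and averaging over $\bm{I}_n$, the linear term in this decomposition involves no non-smoothness and, by the law of large numbers together with part (a), supplies the $-\E[\{A_1-\mu_{1A}(\bm{W}^0)\}A_2\bm{W}\bm{S}^\top]\,\bV^{-1}_2\,n^{-1/2}\sum_{i\in\bm{I}_n}\bm{J}_{2i}$ portion of the stage-1 influence function exactly as in the proof of Theorem~\ref{main thm}(b), with no appeal to Assumption~\ref{assump:unique}. For the positive-part increment I would first replace $\hat\mu_{1Ai}$ by the true $\mu_{1Ai}$ (the extra hypothesis concerns $\mu_{1A}$): since $\bigl|(\bm{S}_i^\top\hat{\bm\beta}_{2n})_+ - (\bm{S}_i^\top\bm\beta^*_2)_+\bigr| \le \|\bm{S}_i\|\,\|\hat{\bm\beta}_{2n}-\bm\beta^*_2\|$, Assumption~\ref{assump:support-X}, Assumption~\ref{assump:accuracy-treatment}(i), \eqref{P2norm-def}, and $\hat{\bm\beta}_{2n}-\bm\beta^*_2=O_p(n^{-1/2})$ together give a discrepancy of order $O_p(1)\cdot o_p(n^{-1/4}) = o_p(1)$. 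Then, using the elementary inequality $|(a)_+-(b)_+-I(b>0)(a-b)|\le|a-b|\,I(|b|\le|a-b|)$,
\[
\frac{1}{\sqrt n}\sum_{i\in\bm{I}_n}\{A_{1i}-\mu_{1Ai}\}\bm{W}_i\bigl\{(\bm{S}_i^\top\hat{\bm\beta}_{2n})_+ - (\bm{S}_i^\top\bm\beta^*_2)_+\bigr\} = \widehat{\bm{K}}_{1n}\,\sqrt n(\hat{\bm\beta}_{2n}-\bm\beta^*_2) + G_n(\hat{\bm\beta}_{2n}-\bm\beta^*_2),
\]
where $\widehat{\bm{K}}_{1n} = n^{-1}\sum_{i\in\bm{I}_n}\{A_{1i}-\mu_{1Ai}\}\bm{W}_i\bm{S}_i^\top I(\bm{S}_i^\top\bm\beta^*_2>0)\CiP\E[\{A_1-\mu_{1A}(\bm{W}^0)\}\bm{W}\bm{S}^\top I(\bm{S}^\top\bm\beta^*_2>0)]$ and $G_n(\bm u) = n^{-1/2}\sum_{i\in\bm{I}_n}\{A_{1i}-\mu_{1Ai}\}\bm{W}_i\, r_{\bm u}(\bm{S}_i)$ with $r_{\bm u}(\bm s) = (\bm s^\top(\bm\beta^*_2+\bm u))_+ - (\bm s^\top\bm\beta^*_2)_+ - I(\bm s^\top\bm\beta^*_2>0)\bm s^\top\bm u$. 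Adding $\widehat{\bm{K}}_{1n}\sqrt n(\hat{\bm\beta}_{2n}-\bm\beta^*_2)$ to the linear-term contribution and using the part (a) representation produces exactly the $\bm{K}\bV^{-1}_2\bm{J}_2$ piece in $\bQ_1$, so the corollary reduces to showing $G_n(\hat{\bm\beta}_{2n}-\bm\beta^*_2) = o_p(1)$.

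That last display is precisely where the proof of Theorem~\ref{main thm}(b) invokes Assumption~\ref{assump:unique} (there, to force $n^{-1}\sum_{i\in\bm{I}_n}I(|\bm{S}_i^\top\bm\beta^*_2|\le\|\bm{S}_i\|\,\|\hat{\bm\beta}_{2n}-\bm\beta^*_2\|)\to 0$), and it is here that the extra hypothesis takes over. Because $r_{\bm u}(\bm{S}_i)$ is $\sigma(\bm{S}_i)$-measurable and $\bm{W}_i$ is a function of $\bm{W}^0_i$, the hypothesis $\E(A_{1i}-\mu_{1Ai}\mid\bm{W}_i,\bm{S}_i,\bm{I}_n)=0$ makes every summand of $G_n(\bm u)$ mean zero for each fixed $\bm u$; moreover, $|r_{\bm u}(\bm s)|\le\|\bm s\|\,\|\bm u\|\,I(|\bm s^\top\bm\beta^*_2|\le\|\bm s\|\,\|\bm u\|)$ together with Assumption~\ref{assump:support-X} implies that the function class $\mathcal{G}_M=\{z\mapsto\{A_1-\mu_{1A}(\bm{W}^0)\}\bm{W}\,r_{\bm u}(\bm{S}):\|\bm u\|\le M/\sqrt n\}$ is VC-type with index bounded uniformly in $n$ (its elements are built from half-space indicators $I(\bm s^\top(\bm\beta^*_2+\bm u)>0)$, a fixed indicator, and linear maps, all multiplied by fixed bounded functions) and has an envelope $F_n$ with $\|F_n\|_{P_0,2}=O(n^{-1/2})$. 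A maximal inequality for VC-type classes of the kind routinely used in this setting (cf.\ \citealp{Cherno18}) then yields $\E[\sup_{\|\bm u\|\le M/\sqrt n}\|G_n(\bm u)\|]\lesssim\|F_n\|_{P_0,2}\to 0$ for each fixed $M$; combined with $\hat{\bm\beta}_{2n}-\bm\beta^*_2=O_p(n^{-1/2})$ via the usual ``choose $M$ large enough'' argument, this gives $G_n(\hat{\bm\beta}_{2n}-\bm\beta^*_2)=o_p(1)$. With this one ingredient substituted in, the rest of the proof of Theorem~\ref{main thm}(b) --- the treatment of $\hat\mu_{1Y}$ and $\hat\mu_{1A}$ via Assumptions~\ref{assump:accuracy-treatment}--\ref{assump:accuracy-2}, the convergence of $\hat\bV_{1n}$, and the central limit theorem for the assembled linear representation --- carries over unchanged, and we obtain $\sqrt n(\hat{\bm\beta}_{1n}-\bm\beta^*_1)\CiD N(\bm{0},\bV^{-1}_1\bQ_1\bV^{-1}_1)$ with $\bQ_1$ as in Theorem~\ref{main thm}.

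The main obstacle is the maximal-inequality step. For a \emph{fixed} $\bm u$, negligibility of $G_n(\bm u)$ is immediate from the mean-zero property and a variance of order $\|\bm u\|^2 P(|\bm{S}^\top\bm\beta^*_2|\le\|\bm{S}\|\,\|\bm u\|)=O(n^{-1})$, so $\|G_n(\bm u)\|=O_p(n^{-1/2})$ pointwise \emph{even when} $P(|\bm{S}_1^\top\bm\beta^*_2|=0)>0$. The difficulty is that $\hat{\bm\beta}_{2n}$ is computed from the same fold $\bm{I}_n$ as $G_n$, so one cannot just plug in a deterministic $\bm u$ and is forced to control the supremum over a ball of radius $O(n^{-1/2})$; this is where care is needed with the uniform entropy bound for the $n$-dependent class $\mathcal{G}_M$ and its shrinking envelope. (Had an additional sample split been used to form $\hat{\bm\beta}_{2n}$, the pointwise bound alone would suffice.)
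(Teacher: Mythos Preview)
Your proposal is correct and identifies precisely the same problematic remainder as the paper does, but your route to killing it differs from the paper's. The paper also reduces everything to showing that the two terms corresponding to your $G_n(\hat{\bm\beta}_{2n}-\bm\beta^*_2)$ (their (G2) and (G3)) are $o_p(1)$. Rather than invoking a maximal inequality, the paper decouples $\hat{\bm\beta}_{2n}$ from observation $i$ by replacing it with the leave-one-out estimator $\hat{\bm\beta}^{(-i)}_{2n}$; then $\hat R_{ni}$ becomes a function of $(\bm{S}_i,\hat{\bm\beta}^{(-i)}_{2n})$ alone, the corollary's hypothesis yields $\E[(A_{1i}-\mu_{1Ai})\mid\bm{W}_i,\bm{S}_i,\bm{I}_n,\hat{\bm\beta}^{(-i)}_{2n}]=0$, and a direct second-moment (Chebyshev) bound --- using $|\bm{S}_i^\top\bm\beta^*_2|\,\tilde R_{ni}\le|\bm{S}_i^\top(\hat{\bm\beta}^{(-i)}_{2n}-\bm\beta^*_2)|\,\tilde R_{ni}$ --- shows the conditional variance is bounded by a constant times $\E\|\hat{\bm\beta}^{(-i)}_{2n}-\bm\beta^*_2\|_2^2\to 0$.

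Your empirical-process argument is the more modern and arguably cleaner alternative: it handles the random $\hat{\bm\beta}_{2n}$ from the same fold head-on via uniform control over the shrinking ball, and you correctly note that the envelope is $O(n^{-1/2})$ regardless of whether $P(|\bm{S}^\top\bm\beta^*_2|=0)>0$, which is exactly what makes the non-regular case go through. The paper's leave-one-out device is more elementary (no entropy bounds needed) but, as written, does not explicitly justify swapping $\hat{\bm\beta}_{2n}$ for $\hat{\bm\beta}^{(-i)}_{2n}$ in (G2)--(G3); that swap is standard but would need a line. Your closing remark that a sample split for $\hat{\bm\beta}_{2n}$ would reduce the problem to a pointwise bound is essentially the same idea as the paper's leave-one-out.
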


The set of variables $\bm{S}$ used for modeling the second stage decision rule are those thought to be potential effect modifiers
for the second stage treatment assigment $A_2;$ hence, a sufficient condition for the Corollary to hold is that $A_{1}$ is 
independent of $\bm{S}$, conditionally on the set of pre-treatment covariates $\bm{W}$ included in the first stage model. Note
that this does not preclude the possibility that $A_1$ can affect variables in $\bm{S}^0$ that are not part of $\bm{S}.$
Due to the way in which $A_{1i} - {\mu}_{1Ai}$ enters the estimating equation of \citet[][Eqn.\ (4)]{wallace2015doubly}, 
it is unclear whether their approach avoids non-regularity under the same conditions as Corollary \ref{cor1} even
when the propensity model $\mu_{1A}(\bm{W}^0)$ has been correctly specified. 

\subsection{Generalization to cross-fitting}
\label{sec:cfit}
Sample splitting, as used in the previous two sections, does not make use of the full sample of $N$ observations
to estimate the finite-dimensional regression parameters, and this can negatively impact efficiency. We now
describe an alternative approach, cross-fitting, that uses the full sample to estimate the desired target parameters.

Suppose that $N = n K$ for some integer $n$ and some integer $K \geq 2.$ Using an extension of previous
notation, we first randomly split the original sample into 
disjoint (hence independent) samples $(\bm{D}_{I_{n,k}})_{k=1}^K$ such that the size of each sample is $n=N/K$ and 
$\bm{I}_{n,k}, k = 1,\ldots,K$ partition the indices $\{1,\ldots,N\}.$  
Analogously to before, define $\bm{I}_{n,k}^c$ as the set of sample indices that are not included in $\bm{I}_{n,k};$ that is, 
$\bm{I}_{n,k}^c = \{1,2,\cdots,N\} \setminus \bm{I}_{n,k}, k=1,\ldots,K$. Then, for each $k=1,2,\cdots,K$, estimate the 
nuisance parameters  $\mu_{2Y}(\cdot),$ $ \mu_{2A}(\cdot),$ $ \mu_{1Y}(\cdot),$ and $ \mu_{1A}(\cdot)$ using the data in $\bm{D}_{I_{n,k}^c};$
we respectively denote these estimators 
$\hat \mu_{2Y}(\cdot;\bm{D}_{I_{n,k}^c}),$ $\hat \mu_{2A}(\cdot;\bm{D}_{I_{n,k}^c}),$ $\hat \mu_{1Y}(\cdot;\bm{D}_{I_{n,k}^c}),$ 
and $\hat \mu_{1A}(\cdot;\bm{D}_{I_{n,k}^c}),$ $k=1,\ldots K.$ 
Finally, we define
\begin{equation}
\label{b2hat-LS-cf}
\hat {\bm{\beta}}_{2n}= 
\underset{\bm{\beta}_2}{\arg \! \min}~
\sum_{k=1}^K \sum_{i \in \bm{I}_{n,k}} \left[ Y_i-\hat\mu_{2Y}(\bm{S}^0_i;\bm{D}_{I_{n,k}^c}) - \{A_{2i}-\hat\mu_{2A}(\bm{S}^0_i;\bm{D}_{I_{n,k}^c})\} \cdot \bm{S}_i^\top \bm{\beta}_2    \right]^2
\end{equation}
and
\begin{equation}
\label{b1hat-LS-cf}
\hat {\bm{\beta}}_{1n}= \underset{\bm{\beta}_1}{\arg \! \min}~ 
\sum_{k=1}^K \sum_{i \in \bm{I}_{n,k}} \left[ \hat {\tilde {Y_i}}-\hat \mu_{1Y}(\bm{W}^0_{i};\bm{D}_{I_{n,k}^c}) - \{A_{1i}-\hat\mu_{1A}(\bm{W}^0_{i};\bm{D}_{I_{n,k}^c})\} \cdot \bm{W}_{i}^\top \bm{\beta}_1    \right]^2.
\end{equation}
This form of cross-fitting essentially corresponds to ``DML2'' as described in \citet[Def.\ 3.2]{Cherno18}. Like sample splitting,  cross-fitting helps
to guarantee that some of the remainder terms in the asymptotic linearity expansion converge to zero at an appropriately fast rate. However, in contrast
to sample splitting, cross-fitting as described above is also capable of asymptotically achieving the same efficiency as in the case 
where estimators of the regression parameters are computed using all $N$ observations (i.e., with $\mu_{jA}(\cdot)$ and $\mu_{jY}(\cdot)$ $j=1,2$ being known).

\section{Simulation studies}
\label{sec:sim}
 We examined the performance of our proposed Q-learning method under different simulation scenarios 
 with various functional complexities and degrees of non-regularity (i.e., violation of Assumption \ref{assump:unique}). 

The main simulation in the regular setting uses the following data generation mechanism. 
Let $\bm{X}_1=(X_{11},X_{12},X_{13},X_{14},X_{15})^\top$ be a 5-dimensional vector of baseline covariates independently
generated and uniformly distributed on [-0.5,0.5]. 
Let  $\bm{X}_2 = (X_{21},X_{22},X_{23},X_{24},X_{25})^\top,$  $X_{2l} = X_{1l} + U_{l},~ l=1,2,3;$ 
$X_{24} = 0.35 X_{15} + U_{4};$ and, $X_{25} = U_{5},$ where $U_{l}, l =1,\ldots,5$ are independent and uniformly
distributed on [-0.5,0.5]. 
It is assumed that only nonresponders to the first stage treatment will receive the second stage treatment. This nonresponse indicator 
$R$ equals 1 if $X_{24}$ is less than its median value and is 0 otherwise. 
Finally, the first and second stage treatments $A_{j}$ are generated from a Bernoulli distribution with success probability 
$\mu_{jA}(\cdot) = [ 1+\exp\{-\lambda_{jA}(\cdot) \} ]^{-1}$, where  $\lambda_{jA}(\cdot)$
depends on either $\bm{S}^0 = (\bm{X}^\top_1,A_{1}, \bm{X}^\top_2)^\top$ ($j$=2) or 
$\bm{W}^0 = \bm{X}_1$ ($j$=1); see Section \ref{sec:reg}.

\subsection{Performance: regular setting} \label{sec:reg}

In this case, we consider performance for models that satisfy Assumption \ref{assump:unique}.
To implement our proposed method, we used the \texttt{R} package \texttt{SuperLearner} \citep{SLref} to estimate 
$\mu_{1Y}(\cdot)$, $\mu_{2Y}(\cdot)$, $\mu_{1A}(\cdot)$, and $\mu_{2A}(\cdot)$.  The library used for \texttt{SuperLearner} included 
generalized linear models (i.e., \texttt{glm}), generalized additive models (i.e., \texttt{gam}; \citealp{GAMref}), multivariate adaptive 
regression splines (i.e., \texttt{earth}; \citealp{MARSref}),
random forests (i.e., \texttt{randomForest}; \citealp{RFref}),  and  
support vector machines (i.e., \texttt{svm} from the \texttt{R} package \texttt{e1071}; \citealp{SVMref});
estimation was implemented with all tuning parameters set to their respective default values.  
This simulation study uses four different functional forms for the treatment assignment model
$\mu_{jA}(\cdot) = [ 1+\exp\{-\lambda_{jA}(\cdot) \} ]^{-1}$, $j=1,2$:
 \begin{itemize}
\item {\it Randomized:} $\lambda_{jA}(\cdot) =0$
\item {\it Linear:} $\lambda_{jA}(\cdot) =2X_{j1}+2X_{j2}+X_{j3}+0.1X_{j4}+0.1X_{j5}$
\item  {\it Quadratic}: $\lambda_{jA}(\cdot) = 1.4 \{(X_{j1}-0.5)^2+(X_{j2}-0.5)^2+0.6(X_{j3}-0.5)^2+0.5(X_{j4}-0.5)^2+0.5(X_{j5}-0.5)^2+X_{j1}+X_{j2}+0.6X_{j3}+0.5X_{j4}+0.5X_{j5}-2\}$.
\item {\it InterQuad}: $\lambda_{jA}(\cdot) = 1.4 \{(X_{j1}-0.5)^2+(X_{j2}-0.5)^2+0.6(X_{j3}-0.5)^2+0.5(X_{j4}-0.5)^2+0.5(X_{j5}-0.5)^2+X_{j1}+X_{j2}+0.6X_{j3}+0.5X_{j4}+0.5X_{j5}+X_{j1}X_{j2}-2\}.$\end{itemize}

The {\it Randomized} model corresponds to a SMART-like trial where simple randomization is used at baseline and then
simple re-randomization occurs among the set of non-responders at the first stage.  The randomization model, part of the trial design,
is therefore known and the inclusion of an appropriate \texttt{glm} model in the Super Learner library should ensure that 
$\mu_{jA}(\cdot)$ can be consistently estimated at the usual parametric rate. 
The other three settings are meant to correspond to increasingly complex observational data settings,  where the ``assignment'' mechanism by 
which patients follow a particular treatment regimen is covariate-dependent, not randomized, and is not considered to be known by design. 
Hence, the analyst cannot knowingly select a correctly specified parametric model a priori. 
In the case of the {\it Linear} model, the inclusion of a \texttt{glm} model in the Super Learner library again 
ensures that $\mu_{jA}(\cdot)$ can be consistently estimated at the usual parametric rate. For the
other two models, the inclusion of methods such as \texttt{gam} and \texttt{randomForest} will help to mitigate, 
but not necessarily eliminate, the possibility of inconsistent estimation. 
These observations highlight the importance of using flexible methods when modeling $\mu_{jA}(\cdot), j = 1,2,$ 
particularly in observational data settings.

The outcome models are given by :
\begin{itemize}
\item Linear$^R$: $Y= \bX_1^\top \balpha_{1}+\bX_2^\top \balpha_{2}+A_1\bX_1^\top \btheta_{1}+ A_2 R \bX_2^\top \btheta_{2}+\epsilon$ 
where $\balpha_{1}=\balpha_{2}=(1,0.1,0.1,0.1,0.1)^\top$, $\btheta_{1}=(0,0,0,0,0)^\top$ and $\btheta_{2}=(1,1,0,0,0)^\top$; 
\item FGS$^R$: $Y= f(\bX_1)+f(\bX_2)+A_1\bX_1^\top \btheta_{1}+A_2 R g(\bX_2)+\epsilon$ where $\btheta_{1}=(0,0,0,0,0)^\top$ and
for $\bm{x} = (x_1,x_2,x_3,x_4,x_5)^\top,$ we set  $g(\bm{x} ) = 2 \sin(\pi x_1 x_2) + 2
  (x_2-0.5)^2$ and
\[
f(\bm{x}) = -1.5 + \sin(\pi x_1 x_2) + 2
  (x_3-0.5)^2 +  x_4 + 1.5 \frac{x_1}{|x_2|+|x_3|} +2 x_1 (x_2+x_3).
  \]
\end{itemize}
The noise variable $\epsilon$ is generated from $ \mathrm{N}(0,\sigma=0.5)$.  

In connecting the above Linear$^R$ outcome model specification with earlier notation, we have  
$\bm{S}^0 = (\bm{X}_1^\top ,A_1,\bm{X}_2^\top)^\top$, $\eta_2(\bm{S}^0) = \bX_1^\top \balpha_{1}+\bX_2^\top \balpha_{2} + A_1\bX_1^\top \btheta_{1}$ 
and $\Delta_2(\bm{S}^0) = R \bX_2^\top \btheta_{2},$ where $R$  is a function of $X_{24}$ only; we further have $\bm{W}^0 = \bm{X}_1$. 
We respectively use $\bm{S} = R (1,X_{21}, X_{22},X_{23})^\top$ and $\bm{W} = (1,X_{11},X_{12})^\top$ for modeling the relevant Q-functions.
In this case, the target of estimation $\bbeta^*_2 = (0,\theta_{21},\theta_{22},\theta_{23})^\top$ and
it can additionally be shown that \eqref{eq:cenQ2} coincides with \eqref{eq:cenQ2mod}.
However, for the FGS$^R$ outcome model, $\eta_2(\bm{S}^0) = f(\bX_1)+f(\bX_2)+A_1\bX_1^\top \btheta_{1}$ and 
$\Delta_2(\bm{S}^0)  = R g(\bX_2);$ here, \eqref{eq:cenQ2} does not coincide with \eqref{eq:cenQ2mod} since the 
linear parametric specification used in the latter is not
equal to $\Delta_2(\bm{S}).$  In this case $\bm{S}^\top\bbeta^*_2$ still exists as the best linear 
projection of $\Delta_2(\bm{S}^0)$ on to the linear space spanned by $\bm{S};$ however, 
its value for this simulation study must be determined  numerically (e.g., through simulation). 

In general, it is not similarly straightforward to characterize the functions $\eta_1(\bm{W}^0)$ and $\Delta_1(\bm{W}^0),$ or the value of $\bbeta^*_1$ 
in the first stage models, without appealing to numerical methods. However, in the current simulation setting, the value of $\bbeta^*_1$ can 
be determined exactly for both the Linear$^R$ and FGS$^R$ outcome model specifications. Specifically, 
neither model involves an interaction between $A_1$ and $A_2$; more generally, there is no correlation between $A_1$ and the
second stage variables $A_2$ and $\bX_2.$ As a result, the linear term  $A_1\bX_1^\top \btheta_{1}$ that appears in both the 
Linear$^R$ and FGS$^R$ outcome model specifications accurately describes the interaction between 
treatment $A_1$ and $\bX_1$ in the true first stage Q-function (i.e., $\Delta_1(\bm{W}^0) = \bm{W}^\top \btheta_{1} = \bX_1^\top \btheta_{1} = \bm{0})$.
It follows that $\bbeta^*_1 = \bm{0}$ and hence that expression \eqref{eq:cenQ1} also coincides with \eqref{eq:cenQ1mod}.

\begin{table}[h]
\centering
\caption{Performance of the proposed Q-learning method for
estimating the second stage parameters under different model complexities.
The true parameters for the linear and FGS outcome models are respectively $\beta^*_{2,1}=1$, $\beta^*_{2,2}=1$ and 
$\beta^*_{2,1} \approx 0$, $\beta^*_{2,2} \approx -2.$ }
\resizebox{\textwidth}{!}
{\begin{tabular}{lcccccc|cccccc|cccc}
  \hline
           & \multicolumn{6}{c}{$\beta^*_{2,1}$} &\multicolumn{6}{c}{$\beta^*_{2,2}$} \\
& \multicolumn{2}{c}{Q$_{N,N}$} &   \multicolumn{2}{c}{Proposed} & \multicolumn{2}{c}{dWOLS$_{N,N}$} & \multicolumn{2}{c}{Q$_{N,N}$} & \multicolumn{2}{c}{Proposed} & \multicolumn{2}{c}{dWOLS$_{N,N}$}  \\
Outcome &  Bias & S.D.&  Bias & S.D. & Bias & S.D.  & Bias & S.D. & Bias & S.D. & Bias & S.D.  \\
  \hline
\multicolumn{13}{c}{\it Randomized Treatment Assignment Model} \\ \hline
Linear$^R$&0.003&0.041&  0.003 & 0.081&0.002 & 0.082 & 0.001 & 0.038  & 0.004 & 0.081& 0.002 & 0.076 \\
FGS$^R$& 0.041& 0.404 & 0.005 & 0.514  & 0.024 & 0.763& 0.002 & 0.241&   0.037&0.211  &   0.031 & 0.377\\\hline
\multicolumn{13}{c}{\it Linear Treatment Assignment Model} \\ \hline
Linear$^R$& 0.004&0.041 & 0.004 & 0.101&0.006 & 0.095 & 0.003& 0.042 & 0.000 & 0.101  & 0.002& 0.098\\
FGS$^R$& 2.500&  0.368 &  0.060 & 0.662  & 0.050 & 0.886 &     2.527& 0.238  &0.064 & 0.365& 0.055 & 0.526\\ \hline
\multicolumn{13}{c}{\it Quadratic Treatment Assignment Model} \\ \hline
Linear$^R$&0.006& 0.040  & 0.005 & 0.082  &0.007 & 0.081  & 0.004 &0.040& 0.011 & 0.082 & 0.003&0.082\\
FGS$^R$&0.797  & 0.419  & 0.093 & 0.586 &0.811 & 0.827 & 0.012 &0.247  & 0.017 & 0.276 & 0.022 & 0.409 \\ \hline
\multicolumn{13}{c}{\it InterQuad Treatment Assignment Model} \\ \hline
Linear$^R$& 0.000& 0.041 & 0.014 & 0.094  &0.001 & 0.084& 0.002  & 0.039  &0.019 & 0.086 & 0.002 & 0.079\\
FGS$^R$& 0.749& 0.470&  0.070 & 0.612 &0.758 & 0.916& 0.442  &  0.234 & 0.019 & 0.271& 0.455 & 0.402\\
\hline
\end{tabular}}
\label{tab:rbeta2}
\end{table}

In our main simulation study, there are 8 possible model combinations represented by the outcome and treatment assignment models,
and within each setting we compare the performance of the proposed method for estimating $\bbeta_j^*,j=1,2$
to the standard form of Q-learning (Q$_{N,N}$) and also to the weighted least squares (dWOLS$_{N,N}$) estimator 
proposed by \cite{wallace2015doubly}. The subscripts on these latter two estimators denote the fact that 
standard errors would normally be calculated using the $N$-out-of-$N$ bootstrap (i.e., in the regular setting). 
In the case of dWOLS$_{N,N},$ linear models are used for the relevant Q-function model specification and logistic 
regression models are used for estimating the treatment assignment probabilities.
The estimation of $\bbeta_j^*,j=1,2$  is not subject to residual confounding
bias for any of the proposed methods under the Linear$^R$ outcome model specification.
However, there is a possibility of such bias under the FGS$^R$ in the case of Q$_{N,N}$ and
dWOLS$_{N,N}$. To be more specific, residual confounding bias under the FGS$^R$ outcome model
is expected for Q$_{N,N}$ regardless of the treatment assignment model.
For dWOLS$_{N,N}$, the {\it Randomized} and {\it Linear} first and second stage treatment assignment models are 
correctly specified and easily modeled. Hence, under the FGS$^R$ outcome model specification, a significant potential for bias 
arises only under the  {\it Quadratic} or {\it InterQuad} treatment assignment rules.  
For the proposed method, residual confounding bias when estimating $\bbeta_j^*$ is not anticipated provided 
that $\mu_{jA}(\cdot), j=1,2$ are sufficiently well-estimated.

We generate 500 datasets of size 2000 to examine the performance of our proposed method
and use cross-fitting as described in Section \ref{sec:cfit} with $K=2$ to estimate the desired target parameters.
Tables \ref{tab:rbeta2} and \ref{tab:rbeta1} show the empirical absolute bias and standard deviations of the 
second and first stage parameter estimates (i.e., standard errors). The values of $\beta^*_{2,1} \approx 0$ and $\beta^*_{2,2} \approx -2$ 
are determined by simulation.  As expected, standard Q-learning performs poorly except under the {\it Randomized}  
treatment assignment model.  The proposed method and dWOLS$_{N,N}$ also perform similarly well under the 
{\it Randomized} and {\it Linear} treatment assignment models for estimating the first and second stage parameters. 
However, under the  FGS$^R$ outcome model, the proposed method exhibits similar biases and substantially smaller
standard errors.
For the {\it Quadratic} and {\it InterQuad} treatment assignment mechanism, both of which are mis-modeled 
in the case of dWOLS$_{N,N},$ the corresponding estimators show substantial bias in some of the parameters, 
whereas those for the proposed method remain comparatively low.
For example, under the  {\it InterQuad} treatment assignment model and FGS$^R$ outcome model,  the proposed method 
respectively results in estimators for $\beta^*_{21}$ and $\beta^*_{22}$ with absolute biases of 0.070 and 0.019; in contrast, 
those for the dWOLS$_{N,N}$ estimators are 0.758 and 0.455, respectively.  We again see a substantial reduction in standard 
errors; in this same example, the standard errors under the proposed method are 0.612 and 0.271, whereas for dWOLS$_{N,N}$ these
are respectively 0.916 and 0.402, the degree of reduction exceeding 30\%. Overall, the proposed method is observed to be
more robust,  typically producing less biased estimators with smaller standard errors compared with the other two approaches.  

The performance of our proposed method was also assessed using smaller sample sizes. Tables S1-S6 in the supplementary material respectively show the 
results for $\bbeta_2^*$ and $\bbeta_1^*$ with sample sizes of 1000, 500, and 250. Overall, the proposed method continues to outperform both Q$_{N,N}$ and dWOLS$_{N,N},$ particularly when the underlying treatment assignment and the outcome models are both nonlinear (i.e., settings in which bias
can be expected for both Q$_{N,N}$ and dWOLS$_{N,N}$.). However, the performance of the proposed method is also affected by sample size. 
For example, under the {\it Linear} treatment assignment and FGS$^R$  outcome model with a sample size of $N=$250, the proposed method 
shows unacceptably high bias when estimating $\bm{\beta}^*_2$ when compared to dWOLS$_{N,N};$ see Table S5. We conjecture that this 
occurs because the information available for estimating the second stage propensity model is limited to the set
non-responders (i.e., 50\% of the sample) that are re-randomized.  The value functions for the estimated rules in all cases were
calculated for all sample sizes and show that the proposed method, followed by dWOLS, typically results in value functions that are closest to optimal; 
see Section 9.4 of the supplementary materials.

The supplementary material includes simulation results in which \texttt{SuperLearner} is replaced by alternative data adaptive techniques. Specifically, in 
Tables S7 and S8 in the supplementary material, the columns RF-RF and GAM-GAM represent modeling approaches in which \texttt{randomForest}
and  \texttt{gam} are used for both the marginalized outcome (i.e., $\mu_{1Y}(\cdot)$ and $\mu_{2Y}(\cdot)$) and treatment assignment models  
(i.e., $\mu_{1A}(\cdot)$ and $\mu_{2A}(\cdot)$). The column RF-GAM instead uses \texttt{randomForest} for the outcome model and \texttt{gam}
for the treatment assignment model.  Comparing these results with those summarized in Tables \ref{tab:rbeta2} and \ref{tab:rbeta1} 
shows that the use of \texttt{SuperLearner} improves performance.

\begin{table}[t]
\centering
\caption{Performance of the proposed Q-learning method for estimating the first stage parameters under different model complexities. 
The true parameters are $\beta^*_{1,1}=\beta^*_{1,2}=0$.}
\resizebox{\textwidth}{!}{\begin{tabular}{lcccccc|cccccc|cccc}
  \hline
           & \multicolumn{6}{c}{$\beta^*_{1,1}$} &\multicolumn{6}{c}{$\beta^*_{1,2}$} \\
& \multicolumn{2}{c}{Q$_{N,N}$} &   \multicolumn{2}{c}{Proposed} & \multicolumn{2}{c}{dWOLS$_{N,N}$} & \multicolumn{2}{c}{Q$_{N,N}$} & \multicolumn{2}{c}{Proposed} & \multicolumn{2}{c}{dWOLS$_{N,N}$}  \\
Outcome &  Bias & S.D.&  Bias & S.D. & Bias & S.D.  & Bias & S.D. & Bias & S.D. & Bias & S.D.  \\
  \hline
\multicolumn{13}{c}{\it Randomized Treatment Assignment Model} \\ \hline
Linear$^R$&0.007& 0.101&  0.001 & 0.105&0.001 & 0.103 & 0.005& 0.092 & 0.001 & 0.105  & 0.007 & 0.100 \\
FGS$^R$& 0.000 & 0.544 &  0.062 & 0.577 & 0.073 & 0.826& 0.000& 0.395& 0.008 & 0.404& 0.003 & 0.542\\\hline
\multicolumn{13}{c}{\it Linear Treatment Assignment Model} \\ \hline
Linear$^R$& 0.173 & 0.103&  0.001 & 0.116 & 0.005 & 0.114 & 0.163& 0.094 &  0.002 & 0.116 & 0.003 & 0.120    \\
FGS$^R$& 1.915& 0.582  &  0.058 & 0.693  & 0.043 & 0.865 &  1.657 & 0.433   & 0.010 & 0.497 &0.016 & 0.617   \\\hline
\multicolumn{13}{c}{\it Quadratic Treatment Assignment Model} \\  \hline
Linear$^R$& 2.404& 0.093 & 0.015 & 0.114  &0.003 & 0.117 &  0.676 & 0.087  & 0.003 & 0.114& 0.000 & 0.122\\
FGS$^R$& 7.235& 0.598 &   0.009 & 0.701&0.281 & 0.820& 1.631& 0.400& 0.044 & 0.503& 0.026 & 0.619\\ \hline
\multicolumn{13}{c}{\it InterQuad Treatment Assignment Model} \\ \hline
Linear$^R$& 2.316&  0.091 &0.001 & 0.120  &0.006 & 0.115 &0.430& 0.089 &0.012 & 0.113 & 0.009 & 0.111\\
FGS$^R$& 7.500&  0.584 &  0.036 & 0.687 &0.344 & 0.862& 2.470&0.414 &0.070 & 0.506& 0.169 & 0.652\\
\hline
\end{tabular}}
\label{tab:rbeta1}
\end{table}


\subsection{Performance: non-regular setting} 
\label{sim-nonreg}

The treatment assignment models considered here are respectively {\it Randomized}, {\it Linear} and {\it InterQuad}, 
defined as in Section \ref{sec:reg}.
Additionally, define 
$\tilde{\bm{X}}_2=(\tilde X_{21},\tilde X_{22},X_{23},X_{24},X_{25})^\top$ where 
$\tilde X_{21}$ is generated from a Bernoulli distribution with 
success probability $[ 1+\exp \{-(2X_{11}+2X_{12}-1)\} ]^{-1},$ 
$\tilde X_{22}$ is generated from a Bernoulli distribution with 
success probability $[ 1+\exp \{-(2X_{12}+X_{21}-1)\} ]^{-1},$ 
$X_{23} = U_{1},$
$X_{24} = 0.35 X_{15} + U_{2},$
and $X_{25} = U_{3},$
where $U_{l}, l =1,\ldots,3$ are independent and uniformly
distributed on [-0.5,0.5]. 
We consider the following outcome models:
\begin{itemize}
\item Linear$^{NR,\varpi}$: $Y= \bX_1^\top \balpha_{1}+ \tilde \bX_2^\top \balpha_{2}+A_1\bX_1^\top \btheta_{1}+  A_2 \cdot (\theta_2  R  \tilde X_{21})+\epsilon$ where $\balpha_{1}=\balpha_{2}=(1,0.1,0.1,0.1,0.1)^\top,$ $\btheta_{1}=(0,0,0,0,0)^\top$ and $\theta_2 = 2 \varpi.$
\item Non-linear$^{NR,\varpi}$: $Y= f(\bX_1)+A_1\bX_1^\top \btheta_{1}+  A_2 \cdot (\theta_2 R  \tilde X_{21})+\epsilon$ where $\btheta_{1}=(0,0,0,0,0)^\top,$
$\theta_2 = 2 \varpi$ and, for $\bm{x} = (x_1,x_2,x_3,x_4,x_5)^\top,$ we set
\[
f(\bm{x}) = -1.5 + \sin(\pi x_1 x_2) + 2
  (x_3-0.5)^2 +  x_4 + 1.5 \frac{x_1}{|x_2|+|x_3|} +2 x_1 (x_2+x_3).
\]
\end{itemize}   
The noise variable $\epsilon$ is generated from $ \mathrm{N}(0,\sigma=0.5)$ and the constant $\varpi \in \{0,1\}$ specifies the degree of non-regularity,
as will be discussed further below. 
In the above models, $\bm{S}^0 = (\bm{X}^\top_1,A_1,\tilde{\bm{X}}^\top_2)^\top$, $\bm{W}^0 = \bm{X}_1,$
$\Delta_2(\bm{S}^0) = \theta_2 R  \tilde X_{21},$ $\eta_2(\bm{S}_0)$ is determined by the remaining model 
terms, and $R$  is a function of $X_{24}$ only.
The second and first stage Q-functions are respectively modeled as linear functions of
$\bm{S}= R (1, \tilde{X}_{21}, \tilde{X}_{22}, X_{23})^\top$ and $\bm{W} = (1, X_{11}, X_{12})^\top.$ 
For both models, it is not difficult to show that $\bm{\beta}^*_{2} = (0,2 \varpi,0,0)^\top$ and that
$\bbeta^*_1 = \bm{0}.$

In both scenarios, for each subject $i$, the first-stage pseudo outcome is defined as in 
\eqref{eq:Ytil} and estimated by substituting in $\hat{\bm{\beta}}_{2n}$ for
$\bm{\beta}^*_{2}.$   The construction of the pseudo-outcome, specifically the projection 
$\bm{S}^\top \bbeta^*_2,$ violates  Assumption \ref{assump:unique}. 
In particular,  $\varpi=0$ corresponds to no second-stage effect modifier, implying that 
$P( |\bm{S}^\top \bbeta^*_2| = 0) = 1$ because $ \bbeta^*_2 = \bm{0}.$
Setting $\varpi=1$ instead implies that there is no second-stage treatment effect
when $R \tilde{X}_{21} = 0$, and a reasonably strong
effect when $R \tilde{X}_{21} = 1;$ in this case,
$0 < P( |\bm{S}^\top \bbeta^*_2| = 0) < 1.$ 
However, the conditions of Corollary \ref{cor1} hold in each case because
$\bm{S}$ does not include $A_1,$ resulting in regular asymptotic behavior for the proposed method. 

Because these simulations focus on coverage rather than bias and standard error, we simulate 1000 datasets of size $N=2000$.
In the non-regular setting considered here, neither Q$_{N,N}$ nor  dWLOS$_{N,N}$ can necessarily be expected to perform well;
hence, we compare our proposed method to a modified version of standard Q-learning and doubly robust weighted least squares
in which the first stage confidence intervals are respectively constructed using a $m$-out-of-$N$ bootstrap technique
as developed in \cite{chakraborty2013statistical} (i.e., Q$_{m,N}^{\kappa}$) and \cite{simoneau2017non} (i.e.,
dWLOS$_{m,N}^{\kappa}$). 
In both of these approaches, the tuning parameter $\kappa \in [0,1)$ determines
the bootstrap sample size $m;$ here, $\kappa=0.05.$   
Table \ref{tab:nr} summarizes the results; for comparison, results obtained using the $N$-out-of-$N$ bootstrap in the first stage
are provided in Table S9 in the supplementary material. 
In these tables, empirical coverages that are significantly over or under the nominal level 0.95 are indicated with a dagger,
with significance being assessed using a binomial test.

The performance of both Q$_{N,N}$ and Q$_{m,N}^{\kappa=0.05}$ relies heavily on the correct 
specification of the outcome model. In those cases where both methods are observed to 
exhibit reasonable performance, Tables \ref{tab:nr} and S9 respectively show that 
$Q_{m,N}^{\kappa=0.05}$ typically over-covers whereas 
$Q_{N,N}$ either under-covers or has close to nominal coverage; 
in contrast, when $Q_{N,N}$ is observed to under-cover to a very significant 
extent, so does $Q_{m,N}^{\kappa=0.05}.$

In general, both dWLOS$_{m,N}^{\kappa=0.05}$ and the proposed method lead to significant improvements in performance. 
Indeed, the proposed method produces valid confidence intervals with coverages close to the nominal level throughout
Tables  \ref{tab:nr} and S9. This can be readily explained by the fact that each setting satisfies the assumptions of Corollary \ref{cor1} despite 
violating Assumption \ref{assump:unique}.
Similarly, we see that dWLOS$_{m,N}^{\kappa=0.05}$ performs reasonably well regardless of the outcome model for both the {\it Randomized}
and {\it Linear} treatment assignment models, since in these two cases the latter can be consistently estimated at a parametric rate.
However, compared to the proposed method, the coverages tend to be slightly conservative, with longer confidence intervals. In these same
cases, dWLOS$_{N,N}$ also performs reasonably, though does have a tendency to under-cover.
Under the {\it InterQuad} treatment assignment model, the performance of both dWLOS$_{m,N}^{\kappa=0.05}$ and
dWLOS$_{N,N}$ declines due to misspecification of the treatment assignment model,
and in the {\it Non-linear}$^{NR,\varpi}$ setting, also the outcome model.
For example,  dWOLS$_{m,N}^{\kappa=0.05}$ exhibits coverage rates as low as 87\%. We conjecture that the combination of 
non-regularity, model  misspecification and residual confounding are the main reasons for the poor performance of Q$_{N,N},$
Q$_{m,N}^{\kappa=0.05}$ and, where observed to be poor, both dWLOS$_{m,N}^{\kappa=0.05}$ and dWLOS$_{N,N}.$
In comparing the two approaches to bootstrapping for both standard Q-learning and dWOLS, our results 
further suggest that tuning $m$ differently (i.e., increasing $m$) may result in better agreement with the nominal 
coverage level in cases where the relevant models are appropriately specified.


Finally, we conducted a related simulation study in which both Assumption \ref{assump:unique} and the conditions of Corollary \ref{cor1}
are violated. Unlike the simulation settings above, this example considers a case where the first and second stage treatments
interact with each other. This modified study is described in Section 9.3 of the supplementary document, where we compare
the proposed approach with dWLOS$_{m,N}^{\kappa=0.05};$ the results are summarized in Table S10. Overall, the methods
perform as expected. In particular,  the proposed method demonstrates either nominal or modest undercoverage for the first stage  regression
parameters and dWLOS$_{m,N}^{\kappa=0.05}$ demonstrates conservative coverage except in cases where 
the required conditions for consistency are violated.

\begin{table}[t]
\centering
\caption{Performance of proposed Q-learning method under different levels of non-regularity.}
\resizebox{\textwidth}{!} {\begin{tabular}{lccc|ccc}
  \hline
           & \multicolumn{3}{c}{$\beta^*_{1,1}$} &\multicolumn{3}{c}{$\beta^*_{1,2}$} \\
Models & Q$_{m,N}^{\kappa=0.05}$ & Proposed & dWOLS$_{m,N}^{\kappa=0.05}$ & Q$_{m,N}^{\kappa=0.05}$ & Proposed & dWOLS$_{m,N}^{\kappa=0.05}$ \\
  \hline
\multicolumn{7}{c}{\it Randomized Treatment Assignment Model} \\ \hline
Linear$^{NR,0}$&  0.976(0.31)$^\dag$ & 0.956(0.42) & 0.969(0.48)$^\dag$ & 0.976(0.31)$^\dag$ & 0.959(0.40) & 0.964(0.48) \\ 
 Non-linear$^{NR,0}$&  0.988(1.14)$^\dag$ & 0.965(1.40)$^\dag$ & 0.964(1.63) & 0.981(0.53)$^\dag$ & 0.952(0.56) & 0.975(0.65)$^\dag$\\ 
Linear$^{NR,1}$&  0.965(0.51)$^\dag$ & 0.962(0.46) & 0.980(0.82)$^\dag$ & 0.960(0.51) & 0.966(0.45)$^\dag$ & 0.964(0.83) \\ 
 Non-linear$^{NR,1}$& 0.984(1.13)$^\dag$ & 0.966(1.41)$^\dag$ & 0.960(1.71) & 0.963(0.59) & 0.946(0.60) & 0.964(0.82)\\ \hline
\multicolumn{7}{c}{\it Linear Treatment Assignment Model} \\ \hline
Linear$^{NR,0}$& 0.961(0.32)  & 0.965(0.45)$^\dag$ & 0.971(0.53)$^\dag$ & 0.954(0.32) & 0.948(0.44) & 0.968(0.53)$^\dag$ \\ 
 Non-linear$^{NR,0}$& 0.521(1.15)$^\dag$ & 0.949(1.58) & 0.955(1.83) & 0.190(0.58)$^\dag$ & 0.955(0.65) & 0.975(0.78)$^\dag$\\ 
Linear$^{NR,1}$&  0.907(0.51)$^\dag$ & 0.953(0.51) & 0.968(0.89)$^\dag$ &  0.900(0.51)$^\dag$ & 0.955(0.50) & 0.965(0.89)$^\dag$ \\ 
 Non-linear$^{NR,1}$&  0.450(1.14)$^\dag$ & 0.952(1.61) & 0.957(1.93) & 0.163(0.63)$^\dag$ & 0.948(0.70) & 0.974(0.93)$^\dag$ \\  \hline
\multicolumn{7}{c}{\it InterQuad Treatment Assignment Model} \\ \hline
Linear$^{NR,0}$&  0.982(0.34)$^\dag$ & 0.966(0.46)$^\dag$ & 0.982(0.54)$^\dag$ & 0.975(0.34)$^\dag$ & 0.957(0.45) & 0.969(0.53)$^\dag$ \\
 Non-linear$^{NR,0}$& 0.918(1.21)$^\dag$ & 0.959(1.45) & 0.846(1.70)$^\dag$ & 0.950(0.61) & 0.962(0.63) & 0.866(0.74)$^\dag$\\ 
Linear$^{NR,1}$&  0.967(0.56)$^\dag$ & 0.964(0.52) & 0.972(0.91)$^\dag$ & 0.959(0.56) & 0.950(0.51) & 0.964(0.91)\\ 
 Non-linear$^{NR,1}$& 0.912(1.20)$^\dag$ & 0.964(1.47) & 0.871(1.79)$^\dag$ & 0.943(0.66) & 0.967(0.68)$^\dag$ & 0.910(0.92)$^\dag$\\ 
\hline
\end{tabular}}
{\small Numbers in parentheses correspond to average confidence interval length.}
\label{tab:nr}
\end{table}

\section{Application}

We use the data from the {\it Extending Treatment Effectiveness of Naltrexone (ExTENd)} clinical trial to illustrate our method. Naltrexone  (NTX) is an opioid receptor antagonist used in the prevention of relapse to alcoholism. Even though NTX has been shown to be efficacious in those that adhere to treatment, its use by clinicians has been limited, at least in some cases, because  adherence rates are often negatively impacted by the fact that NTX diminishes the pleasurable effects of alcohol use.

\begin{table}[t]
\centering
\caption{EXTEND data. List of baseline and time-varying covariates.}
\resizebox{\textwidth}{!}{\begin{tabular}{l lccc}
  \hline
Covariate & Description  \\
  \hline
gender &  binary variable coded 1 for female  \\
edu &  years of education    \\
race &  binary variable coded 1 for white and 0 otherwise  \\
alcyears &  years of lifetime alcohol use  \\
intox &  years of drinking to intoxication  \\
married &  marital status coded 1 for married and 0 otherwise  \\
ethnic &  binary variable coded 1 for non-hispanic and o for hispanic \\
ocds$_0$ &  obsessive-compulsive drinking scale (higher value means more severe craving) \\
pacs$_0$ &  Penn Alcohol Craving Scale (higher value means more severe craving) \\
A$_1$ &  stage 1  treatment option coded as 1 for lenient definition and 0 for  stringent\\
apc$_1$ &  average number of pills taken per day during stage 1 \\
pdhd$_1$ &  percent days heavy drinking during stage 1\\
pacs$_1$ &  Penn Alcohol Craving Scale (higher value means more severe craving) during stage 1 \\
mcs$_1$ &  mental composite score during stage 1 (higher value means better health condition) \\
\hline
\end{tabular}}
\label{tab:var}
\end{table}

In the ExTENd  study (Figure \ref{fig:1}), at the first decision stage, patients were randomized to one of two definitions of non-response while receiving NTX: 
(1) Stringent: a patient is a non-responder if (s)he has two or more heavy drinking days in the first 8 weeks ($A_1=0$); (2) Lenient: a patient is a non-responder 
if (s)he has five or more heavy drinking days in the first 8 weeks ($A_1=1$). At the second decision stage, the treatment assignment mechanism depends on response status.
Specifically, define $A_2 = 1$ if the current treatment (NTX) is augmented, and zero otherwise; in addition, we let
$\bar{R}$ denote the indicator of response to treatment. Then, among responders ($\bar{R}=1$), patients are randomized 
(with equal probability) to augment NTX with telephone disease management (NTX+TDM; $A_2 = 1$) or to maintain NTX alone ($A_2 = 0$). For 
non-responders ($\bar{R}=0$), patients are instead randomized (with equal probability) to augment NTX with combined behavioral intervention 
(NTX+CBI; $A_2 = 1$) or to CBI alone ($A_2 = 0$). In the latter case, maintenance on NTX alone is replaced with an alternative treatment
due to non-response.
The {primary outcome} is the proportion of abstinence days over 24 weeks. The list of baseline and time varying variables that are used in our analyses are given in Table \ref{tab:var}. There are multiple measurements of time-varying variables during the first stage. We denote the average of these variables as mcs$_1$, pacs$_1$, pdhd$_1$, and apc$_1$. 

Standardized differences in means for each covariate (i.e.,  differences in means divided by the corresponding pooled standard deviation) were used to check the covariate balance across the treatment groups. Figure \ref{fig:balance} indicates that there is a good balance of baseline covariates across the levels of $A_1$ (circle). However, we see some imbalance across the levels of second stage treatment options. This is more evident in the non-responder group (triangle). Absolute standardized differences exceeding 0.1 or 0.2 are respectively referred to as mild and substantial imbalance, and can potentially induce bias in the evaluation of effect modifiers 
if not taken into account \citep{austin2009using}. In this figure, sdApc$_1$ and sdPdhd$_1$ respectively represent the standard deviation of the 
indicated variables during the first stage.

\begin{figure}[t] 
\centering 
\includegraphics[scale=0.45]{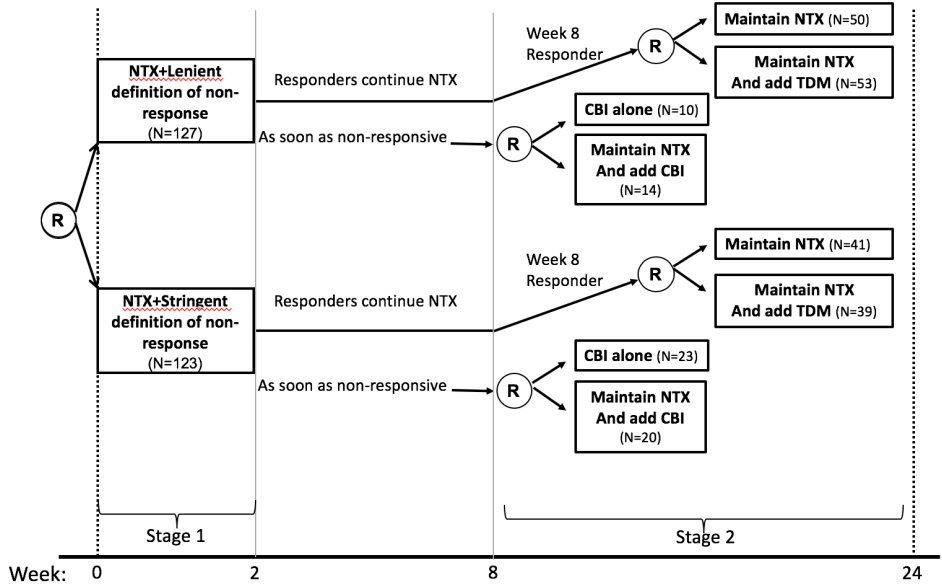}
\caption{\footnotesize ExTENd study design. The {\sffamily\textregistered}  notation represents instances of randomization; the $N$ values 
in this figure represent the subsequent number of patients assigned to each treatment option. } \label{fig:1}
\end{figure}

\begin{figure}[t] 
\centering 
\includegraphics[scale=0.55]{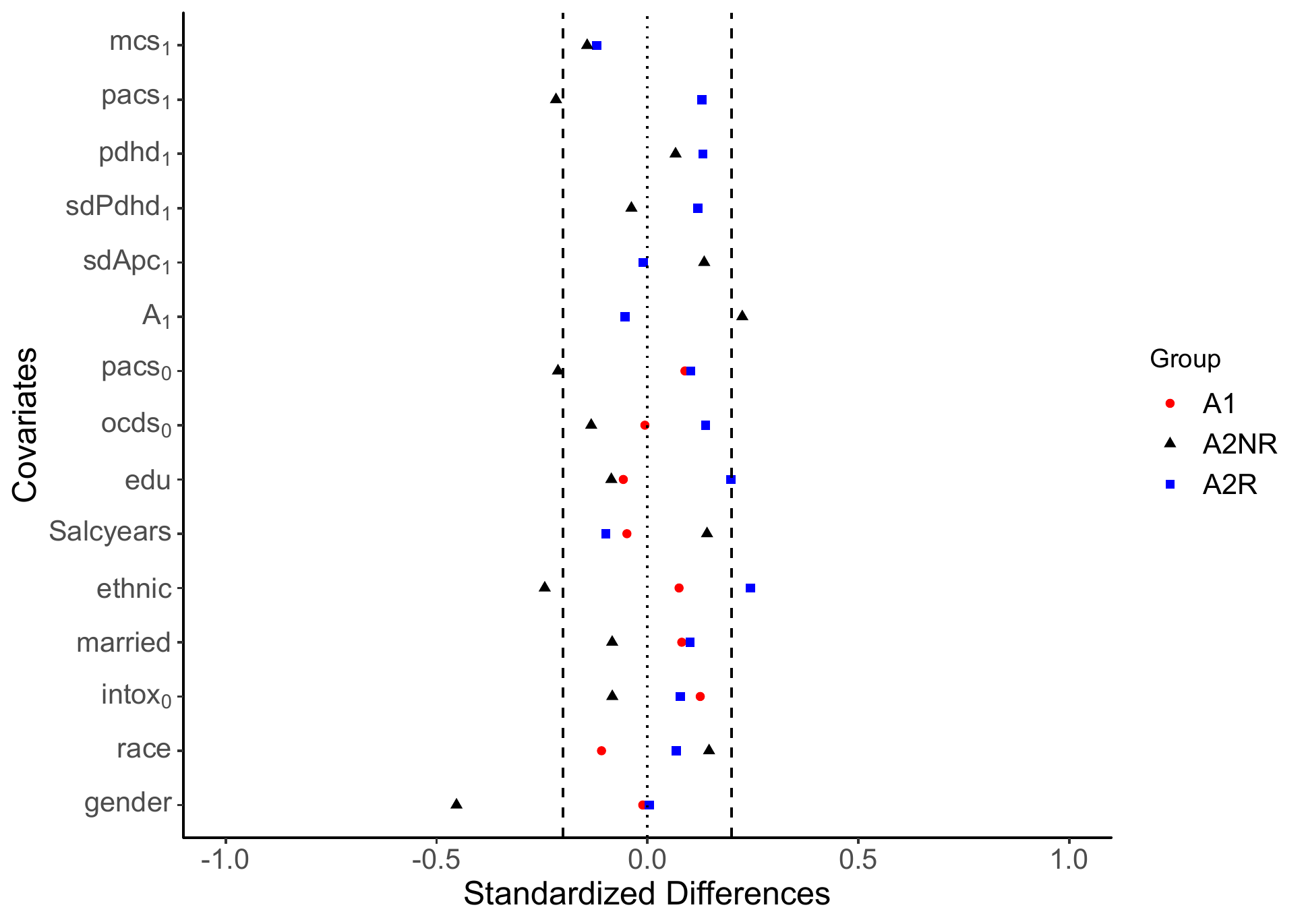}
\caption{\footnotesize ExTENd study. Covariate imbalance across different treatment groups. $A_1$: stage 1 treatment option; $A_{2NR}$: 
stage 2 treatment options among non-responders; $A_{2R}$: stage 2 treatment options among responders.  
The dashed vertical lines show cut points at $\pm$ 0.2. } \label{fig:balance}
\end{figure}

We analyzed the data using the proposed method, dWLOS$_{m,N}^\kappa$ and  Q-learning (Q$_{m,N}$) approaches; 
the results are summarized in Table \ref{tab:extend}. The latter two methods use the $N$-out-of-$N$ bootstrap
for calculating standard errors in the second stage model and the $m$-out-of-$N$ bootstrap
for calculating standard errors in the first stage model.
Referring to earlier notation, the first stage covariate vector $\bm{W}^0$ consists of the predictors
gender, race, alcyr$_0,$ intox$_0,$ and ocds$_0$, and the second stage covariate vector $\bm{S}^0$ consists
of all the predictors listed in Table \ref{tab:var}, along with response status.
First stage regression models are fit using $\bm{W} = \bm{W}^0.$ The description of the second stage model
predictor $\bm{S}$ is more involved.  
Specifically, let $\bm{Z} = (Z_1, \ldots, Z_6)^\top$ contain the variables  gender,  A$_1,$ intox$_0$, ocds$_0$, pacs$_1$, and mcs$_1;$
then, we define $\bm{S} = (\bar{R}, 1-\bar{R}, \bar{R} Z_1, \bar{R} Z_2, (1-\bar{R}) \bm{Z}^\top)^\top$. 
As specified, the second stage model allows the set of possible effect modifiers to differ between responders and non-responders,
with some overlap in the case of gender and A$_1.$
We used \texttt{SuperLearner} to estimate $\mu_{1Y}(\mathbf{w}^0)$ and $\mu_{2Y}(\mathbf{s}^0),$  employing the same library as 
we did in the simulation study and respectively using $\bm{W}^0$ and $\bm{S}^0$ for general confounding control.
In view of the fact that the randomization mechanism is known, and mostly successful in view of the overall degree of balance 
observed in Figure \ref{fig:balance}, the treatment propensities $\mu_{1A}(\mathbf{w}^0)$ and $\mu_{2A}(\mathbf{s}^0)$
are estimated using logistic regression models. Specifically, the former is estimated as a function of gender, and the 
latter is estimated using gender, response status, and the interaction between gender and response status.
The parameters of the Q-functions used by dWOLS are assumed to follow linear models (i.e., including the main effects). 
Similarly, for standard Q-learning, linear working models respectively replace $\eta_2(\bm{S}^0)$ and $\eta_1(\bm{W}^0)$.  

As shown in Table \ref{tab:extend}, the signs of all predictor effects are the same for all methods, though magnitudes and 
confidence intervals differ. None of the effect modifiers in the second stage are deemed statistically significant among 
responders using any of the 3 methods. The proposed Q-learning method suggests that both ocds$_0$ and mcs$_1$ 
are significant effect modifiers of $A_2$ among non-responders; specifically, individuals with higher ocsd$_0$ and mcs$_1$ 
would benefit from CBI. Similarly, dWOLS identifies mcs$_1$ as a significant effect modifier among non-responders, whereas 
none of the effect modifiers are identified as significant using standard Q-learning.
For the first stage model, the proposed Q-learning method shows that the years of drinking (i.e., alcyr$_0$) and gender 
significantly modify the effect of $A_1$. In particular,  female individuals and those with more years of drinking would benefit from 
being treated under a stringent definition of non-response. This makes sense because, for example, individuals with more years 
of drinking at baseline likely have a higher craving for alcohol and require more immediate attention and rescue treatments 
(i.e., $A_2$ for non-responders). 
In contrast, neither standard Q-learning nor dWLOS detects any effect modifiers.  With the exception of
the interaction between $A_1$ and gender, the first stage point estimates are rather similar across the 3 methods,
highlighting the fact that the differences in significance stem from the tighter confidence intervals obtained
using the proposed methods (i.e., compared to those produced using the $m$-out-of-$N$ bootstrap).   
 
Indeed, for both stages, the dWLOS and standard Q-learning methods yield point estimates that are mostly similar 
to each other.
These similarities are expected for two reasons. First, under successful randomization, we would not generally expect misspecification 
of the functional form of the main effects in the Q-function to bias the estimate of interaction terms (i.e., $\bm{\beta}^*_2$). Second, the 
linear models being used in the Q-models are identical in both cases; the only difference is that parameter estimation is carried out using 
weighted versus unweighted least squares.   
Figure \ref{fig:balance} demonstrates the presence of random confounding among non-responders 
in the second stage for gender, ethnicity, $A_1,$ pacs$_0$ and pacs$_1,$ whereas there is good balance 
among the first stage predictors. Comparing dWLOS and standard Q-learning, we see that the largest differences 
in point estimates occur in the second stage model among non-responders for $A_1$ and pacs$_1.$

Comparing the proposed method to both dWOLS and standard Q-learning, we observe somewhat greater 
disparity in point estimates. These differences occur primarily among non-responders in the second stage,
and include interactions between $A_2$ and each of gender,  $A_1$ and pacs$_1;$ as noted above, 
gender,  $A_1$ and pacs$_1$ all demonstrate substantial imbalance among non-responders in the second 
stage. The largest difference among regression coefficients in the first stage model occurs for gender, consistent 
with the disparities observed in the second stage model as well as propagation of those differences through 
the backward induction process.  We conjecture that modeling the 
true main effects (i.e., $\mu_{jY}(\cdot)$, $j=1,2$) using Super Learner may help to reduce small sample biases 
when compared to the more restrictive linear models used by both dWOLS and standard Q-learning.

\begin{table}[t]
\centering
\caption{ExTENd data. The $^\dagger$ indicates significant coefficients at a Type I error rate of 5\%. CI represents the confidence interval.}
{{\begin{tabular}{l ccccccc}
  \hline
 Q-function & \multicolumn{2}{c}{Proposed} & \multicolumn{2}{c}{dWLOS$_{m,N}^{\kappa=0.05}$} & \multicolumn{2}{c}{Q$_{m,N}^{\kappa=0.05}$} \\
Models & Est & 95\% CI & Est & 95\% CI & Est & 95\% CI\\
  \hline
\multicolumn{2}{l}{Stage 2} \\
\multicolumn{2}{l}{Responders} \\
$A_2$ &  0.02 & (-0.07,0.11)& 0.01 & (-0.07,0.09)&0.01 & (-0.07,0.09) \\
$A_2:gender$ &  0.07 & (-0.11,0.24)  &0.11 & (-0.08,0.29)& 0.11 & (-0.08,0.28)  \\
$A_2:A_1$ &  0.01 & (-0.11,0.13) & 0.01 & (-0.11,0.12) &0.01 & (-0.11,0.12)\\
\multicolumn{2}{l}{Non-responders} \\
$A_2$ &  -0.07 & (-0.23,0.10) & -0.02 & (-0.28,0.22) &-0.02 & (-0.28,0.23)\\
$A_2:gender$ & 0.27 & (-0.18,0.71)  &0.13 & (-0.43,0.81)&0.14 & (-0.43,0.79) \\
$A_2:intox_0$ &  0.09 & (-0.16,0.35)  &  0.06 & (-0.21,0.46) & 0.06 & (-0.22,0.49)  \\
$A_2:ocds_0$ &  -0.19$^\dagger$ & (-0.34,-0.03)  & -0.19 & (-0.43,0.08) &-0.18 & (-0.43,0.11) \\
$A_2:A_1$ &  -0.21  & (-0.45,0.01)& -0.23& (-0.57,0.26)& -0.16 & (-0.55,0.30) \\
$A_2:pacs_1$ &  0.10& (-0.02,0.21)& 0.07 & (-0.12,0.25)&0.03 & (-0.18,0.23)\\
$A_2:mcs_1$ &  -0.18 $^\dagger$ & (-0.29,-0.06) &-0.17 $^\dagger$& (-0.37,0.00)&-0.17 & (-0.39,0.01) \\
\\ \hline
\multicolumn{2}{l}{Stage 1} \\ 
$A_1$&  -0.06 & (-0.18,0.06) & -0.05 & (-0.21,0.08)&-0.06 & (-0.20,0.09)   \\
$A_1:gender$&  -0.24$^\dagger$ & (-0.46,-0.01)  & -0.18 & (-0.43,0.14) &-0.17 & (-0.37,0.14) \\
$A_1:race$&  0.08 & (-0.05,0.21)  & 0.08 & (-0.07,0.21)&0.08 & (-0.08,0.24) \\
$A_1:alcyr_0$& -0.07$^\dagger$  & (-0.14,0.00)  & -0.06 & (-0.15,0.03)&-0.06 & (-0.16,0.03) \\
$A_1:intox_0$&  0.13 & (-0.03,0.29)& 0.14 & (-0.06,0.33)&0.15 & (-0.10,0.33) \\
$A_1:ocds_0$&  0.02 & (-0.05,0.09)& 0.01 & (-0.07,0.09)&0.01 & (-0.08,0.08)\\
\hline
\end{tabular}
\label{tab:extend}}} 
\end{table}

\section{Discussion}

Much of the current work on Q-learning continues to involve parametric working models despite the fact that
finite-dimensional models are generally too restrictive to permit consistent estimation of nuisance parameters. 
We proposed a robust Q-learning approach where the working models need not all be linear and, specifically, 
where the main effects that do not influence the optimal decision rules are estimated using data-adaptive approaches.
Our simulation studies highlight the value of our proposed approach compared with existing Q-learning methods. The 
proposed method also performed relatively well in simulations when key regularity assumption 
(i.e., Assumption \ref{assump:unique}) is violated; however, we cannot expect this in all scenarios, as the underlying theory 
and simulation results show  otherwise.

An important advantage of the proposed method is that it does not suffer from the curse of dimensionality, 
as it produces root-$n$ consistent estimators even when $\hat{\mu}_{jA}(\cdot)$ and $\hat{\mu}_{jY}(\cdot)$ $(j=1,2)$
are estimated at  rates slower than root-$n$. This important property facilitates the use of nonparametric methods
like Super Learner for estimating these unknown functions, substantially reducing the chance of model misspecification. 
A second important feature of the proposed approach is that consistent estimation of the treatment models leads 
to consistent estimation of the blip function parameters, whether or not these models or those for $\mu_{jY}(\cdot), j=1,2$ 
are correctly specified. However, the proposed estimators are not doubly robust, in that we require that the 
$\mu_{jA}(\cdot)$s are consistently estimated at a sufficiently fast rate. True double robustness 
under \eqref{eq:causal-model} for $\bbeta^*_2$ requires that one either consistently estimates $\mu_{2A}(\bm{S}^0)$ or the treatment-free conditional 
mean model $E[ Y - A_2 \bm{S}^\top \bbeta^*_2 | \bm{S}^0] = \eta_2(\bm{S}^0);$ similarly, for $\bbeta^*_1,$
one must either consistently estimate $\mu_{1A}(\bm{W}^0)$ or $E[\tilde Y - A_1 \bm{W}^\top \bbeta^*_1 | \bm{W}^0] = \eta_1(\bm{W}^0).$
Because the expectation operator is linear, correct specification of both treatment-free models essentially relies on
both $\mu_{jY}(\cdot)$ and $\mu_{jA}(\cdot), j=1,2$ being correctly specified. This limitation on the practicality of finding
a truly doubly robust estimator applies  to the proposed approach as well as that taken in \cite{wallace2015doubly}. 
Further research on doubly robust estimation in this class of problems is merited.

Although data-adaptive estimation methods reduce the risk of inconsistency, there is still a chance that one or more nuisance parameters will be
estimated inconsistently. Further research is needed to study the behavior of the proposed methods under inconsistent estimation of a nuisance parameter. 
In particular, \cite{benkeser2017doubly} showed that when nuisance parameters are estimated using data-adaptive approaches, inconsistently estimating 
one nuisance parameter may lead to an irregular estimator having a convergence rate slower than root-$n$. These authors proposed a targeted minimum 
loss-based approach to resolve the issue \citep{van2014targeted2}. Generalization of the method of \cite{benkeser2017doubly} to a multi-stage  decision making 
process would be an interesting topic for future research. Studying the asymptotic behavior of an appropriate version of the bootstrap in our proposed Q-learning 
method is also of interest as it can potentially resolve the non-regularity issues in settings where both Assumption \ref{assump:unique} fails and  Corollary \ref{cor1}  fail \citep[e.g.][]{chakraborty2013inference}. Finally, in practice, there are  often many candidate variables to be considered when constructing a decision rule. The inclusion of spurious variables in these analyses can substantially reduce the quality of the estimated decision rules. Although one can adapt the proposed methods to obtain regularized estimators of the target parameters, valid post-selection inference remains a challenge and merits further research \citep{berk2013valid, fithian2014optimal}.

\bibliographystyle{Biometrika}
\bibliography{DQL-bib}
\pagebreak

\setcounter{table}{0}
\renewcommand{\thetable}{S\arabic{table}}

\setcounter{figure}{0}
\renewcommand{\thefigure}{S\arabic{figure}}

\section*{Supplementary material to Robust Q-learning}

Let $Z \sim P$ for some probability measure $P$ and suppose $f(\cdot)$ is any real-valued, $P-$measurable function; then, we  
define the $L^2(P)$ norm of $f(\cdot)$ as $\| f(Z) \|_{P,2} = \{ \int \! f(\omega)^2 d P(\omega) \}^{1/2}.$ 
In addition, let $\| \bm{x} \|_q$ denote the usual $q-$ norm of a vector $\bm{x}$  for $q=1,2,\infty$.
The following general lemmas will be helpful  in our proofs.

\begin{lemma}\label{lem:condcvg}
Let $A_n$ and $B_n$ be sequences of random vectors, $n \geq 1$.  Let $\epsilon > 0$ be arbitrary and, for any vector norm, 
suppose that $\lim_{n \rightarrow \infty} P( \| A_n \| > \epsilon | B_n)  = 0.$ Then, 
$\lim_{n \rightarrow \infty} P( \| A_n \| > \epsilon)  = 0.$ By Chebyshev's inequality, a sufficient condition for proving that $\lim_{n \rightarrow \infty} P( \| A_n \| > \epsilon)  = 0$ 
is that $\lim_{n \rightarrow \infty} E( \| A_n \|^q |  B_n)  = 0$ from some $q \geq 1.$ 
\end{lemma}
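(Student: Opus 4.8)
The plan is to establish Lemma~\ref{lem:condcvg} in two parts: first the main claim relating conditional and unconditional convergence in probability, and then the sufficient condition via Chebyshev's inequality. For the first part, I would use the tower property of conditional expectation. Since convergence in probability is equivalent to $P(\|A_n\| > \epsilon) \to 0$, and since $P(\|A_n\| > \epsilon) = \E\{ P(\|A_n\| > \epsilon \mid B_n) \}$, the natural route is to pass the limit inside the expectation. The integrand $P(\|A_n\| > \epsilon \mid B_n)$ is bounded by $1$ uniformly in $n$, so the dominated convergence theorem applies directly: from the hypothesis that $P(\|A_n\| > \epsilon \mid B_n) \to 0$ (interpreted as convergence to $0$ either surely or almost surely, which is how such a hypothesis is typically used), we conclude $\E\{ P(\|A_n\| > \epsilon \mid B_n) \} \to 0$, i.e., $P(\|A_n\| > \epsilon) \to 0$.

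For the second part, I would simply note that conditional Chebyshev (or Markov) inequality gives, for any $q \geq 1$,
\[
P( \|A_n\| > \epsilon \mid B_n ) \leq \frac{ \E( \|A_n\|^q \mid B_n ) }{ \epsilon^q }.
\]
Hence if $\E( \|A_n\|^q \mid B_n ) \to 0$, then $P( \|A_n\| > \epsilon \mid B_n ) \to 0$ as well, and the first part then delivers the unconditional conclusion. This is essentially a one-line deduction once the first part is in hand.

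The main (and really only) subtlety is the precise measure-theoretic reading of the hypothesis ``$\lim_{n\to\infty} P(\|A_n\| > \epsilon \mid B_n) = 0$.'' Conditional probabilities are only defined up to null sets, so this statement should be understood as convergence that holds almost surely (or, more robustly, as the deterministic statement that the conditional probabilities, as random variables, are themselves bounded and tend to zero pointwise a.s.); under that reading dominated convergence is immediate. One could alternatively phrase the hypothesis as $\E\{P(\|A_n\| > \epsilon \mid B_n)\} \to 0$, but that would be circular since it already equals $P(\|A_n\| > \epsilon)$. I would state the a.s.\ interpretation explicitly at the start of the proof to avoid ambiguity, and note that in all applications within the paper the conditional bounds are in fact $o_p$-type statements that yield the required convergence. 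Beyond that, the proof is routine and short.
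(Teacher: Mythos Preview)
Your argument is correct and standard. The paper itself does not prove this lemma: it simply states that the result ``essentially repeats Lemma 6.1 in \cite{Cherno18} and will not be proved here.'' So your proposal actually supplies more detail than the paper does.

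One small refinement worth making explicit: in the paper's actual applications of this lemma (e.g., in the proof of Lemma~\ref{lem:beta1-fixed}), the conditional probability $P(\|A_n\|>\epsilon\mid B_n)$ is bounded above by a quantity that is $o_p(1)$, not one that tends to zero almost surely. Your proof as written invokes dominated convergence under an a.s.\ reading of the hypothesis. The fix is easy and you essentially anticipated it: since $P(\|A_n\|>\epsilon\mid B_n)\in[0,1]$, convergence in probability to zero implies convergence in $L^1$ (bounded convergence in probability), so $\E\{P(\|A_n\|>\epsilon\mid B_n)\}=P(\|A_n\|>\epsilon)\to 0$ regardless of which mode of convergence is intended. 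Stating this covers both readings and matches how the lemma is actually used downstream.
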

The above lemma essentially repeats Lemma 6.1 in \cite{Cherno18}
and will not be proved here.  The following lemma is a direct consequence of a well-known result and also has an easy proof; see, for example, 
\cite{Stewart69}.
\begin{lemma}\label{lem:matinvcvg}
Let $\bm{M}_n$ and $\hat {\bm{M}}_n$ be two sequences of square matrices and let $\| \cdot \|$ be any proper matrix norm. 
Suppose  there exists $n_0 < \infty$ such that  (i) $\bm{M}^{-1}_n$ and $\bm{M}_n$ exist for $n \geq n_0$, with 
 $0 < C_1 \leq \| \bm{M}^{-1}_n \| \leq C_2 < \infty;$ 
and, (ii) $\| \hat {\bm{M}}_n -  \bm{M}_n \| \leq (2 \| \bm{M}^{-1}_n \|)^{-1}$.
Then, 
\[
\| \hat {\bm{M}}^{-1}_n -  \bm{M}^{-1}_n \| \leq 2  C^2_2 \| \hat {\bm{M}}_n -  \bm{M}_n \|.
\]
\end{lemma}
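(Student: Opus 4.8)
The plan is to run the classical Banach-lemma (Neumann series) argument and then apply the resolvent identity, so that the whole thing reduces to three short steps and no computation of substance.

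First, for $n \ge n_0$ I would factor $\hat{\bm{M}}_n = \bm{M}_n\bigl(\bm{I} + \bm{E}_n\bigr)$ with $\bm{E}_n = \bm{M}_n^{-1}(\hat{\bm{M}}_n - \bm{M}_n)$, which is legitimate because $\bm{M}_n^{-1}$ exists there by hypothesis (i). Submultiplicativity of the matrix norm together with hypothesis (ii) gives $\|\bm{E}_n\| \le \|\bm{M}_n^{-1}\|\,\|\hat{\bm{M}}_n - \bm{M}_n\| \le \tfrac12$, so $\bm{I} + \bm{E}_n$ is invertible via the Neumann series, with $\|(\bm{I}+\bm{E}_n)^{-1}\| \le (1-\tfrac12)^{-1} = 2$. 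Hence $\hat{\bm{M}}_n$ is invertible for $n \ge n_0$, $\hat{\bm{M}}_n^{-1} = (\bm{I}+\bm{E}_n)^{-1}\bm{M}_n^{-1}$, and, using the upper bound in hypothesis (i),
\[
\|\hat{\bm{M}}_n^{-1}\| \le 2\,\|\bm{M}_n^{-1}\| \le 2C_2 .
\]

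Second, I would invoke the resolvent identity $\hat{\bm{M}}_n^{-1} - \bm{M}_n^{-1} = -\,\hat{\bm{M}}_n^{-1}(\hat{\bm{M}}_n - \bm{M}_n)\,\bm{M}_n^{-1}$, take norms, and combine with the bound just obtained and with $\|\bm{M}_n^{-1}\| \le C_2$ to get
\[
\|\hat{\bm{M}}_n^{-1} - \bm{M}_n^{-1}\| \le \|\hat{\bm{M}}_n^{-1}\|\,\|\hat{\bm{M}}_n - \bm{M}_n\|\,\|\bm{M}_n^{-1}\| \le 2C_2^{2}\,\|\hat{\bm{M}}_n - \bm{M}_n\|,
\]
which is exactly the asserted inequality.

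There is no genuine obstacle here — the statement is a standard perturbation bound. The only points deserving a line of care are that ``proper matrix norm'' should be read as a submultiplicative norm (completeness of the finite-dimensional matrix space, used implicitly in the Neumann series, is automatic), and that hypothesis (ii) is precisely what forces $\|\bm{E}_n\| \le 1/2$ and thereby both the invertibility of $\hat{\bm{M}}_n$ and the factor $2$ in the final constant. I would also remark that the lower bound $C_1$ plays no role in the inequality and is evidently recorded only for use elsewhere.
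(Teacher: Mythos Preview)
Your proof is correct and is exactly the standard perturbation argument the paper has in mind: the paper does not write out a proof but simply refers to it as ``a direct consequence of a well-known result'' and cites Stewart (1969), which is precisely the Neumann-series/resolvent-identity route you take. Your side remarks about submultiplicativity and the unused lower bound $C_1$ are also accurate.
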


We will also have need of the following lemma.
\begin{lemma}\label{lem:help1cvg}
Let $\bm{B}_1,\ldots,\bm{B}_N$ be independent, identically distributed vectors from $P_0,$ where $\bm{B}_i \in {\mathcal B} \subset \R^d$. 
Let $\bm I_n$ be a randomly chosen subset of the integers $1,\ldots,N$ of length $n = O(N)$ and let its complement $\bm I^c_n$ have $N-n = O(N)$ elements. 
Let $\bm{F}_{I_n}$ and $\bm{F}_{I^c_n}$ be the corresponding disjoint subsets of $\bm{B}_1,\ldots,\bm{B}_{N}.$ Let $\gamma_j: {\mathcal B} \rightarrow \R, j=1,2$ and let $\hat \gamma_j(\cdot; \bm{F}_{I^c_n})$ be an estimator of $\gamma_j(\cdot)$ derived from the data $\bm{F}_{I^c_n}.$ 
Finally, define 
\begin{equation}
\label{Gfun}
\bm{G}_{n,N}  =  \frac{1}{n} \sum_{i \in \bm{I}_n} \left\{ \hat \gamma_1(\bm{B}_i; \bm{F}_{I^c_n}) - \gamma_1(\bm{B}_i) \right\}
\left\{ \hat \gamma_2(\bm{B}_i; \bm{F}_{I^c_n}) - \gamma_2(\bm{B}_i) \right\} \bm{h}( \bm{B}_i ) 
\end{equation}
where $\bm{h}(\bm{B}_i )$ is any finite dimensional vector- or matrix-valued function of $\bm{B}_i$ such that $P(\| \bm{h}(\bm{B}_i ) \|_{\infty} \leq C) = 1$
for some $C < \infty$. Then,
\begin{equation}
\label{mybound}
\| \bm{G}_{n,N} \|_{\infty} \leq C \left\|  \hat \gamma_1(\bm{B}; \bm{F}_{I^c_n}) - \gamma_1(\bm{B}) \right\|_{\mathbb{P}_n,2}
\left\|  \hat \gamma_2(\bm{B}; \bm{F}_{I^c_n}) - \gamma_2(\bm{B}) \right\|_{\mathbb{P}_n,2},
\end{equation}
where $\mathbb{P}_n$ is the empirical measure on $\bm{F}_{I_n}$. Moreover, for $j=1,2$
 define 
\begin{equation}
\label{normident}
\left\|  \hat \gamma_j(\bm{B}; \bm{F}_{I^c_n}) - \gamma_j(\bm{B}) \right\|^2_{P_0,2} 
= \E \left\{ \left\|  \hat \gamma_j(\bm{B}; \bm{F}_{I^c_n}) - \gamma_j(\bm{B}) \right\|^2_{\mathbb{P}_n,2} \big| \bm{F}_{I^c_n} \right\}
\end{equation}
and suppose \eqref{normident} 
is $o_p(N^{-a_j}),$ where $a_j \geq 0.$ Then,
$\| \bm{G}_{n,N} \|_{\infty} = o_p(N^{-(a_1+a_2)/2})$.
\end{lemma}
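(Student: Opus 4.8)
The plan is to first establish the pathwise inequality \eqref{mybound} and then convert it into the stochastic rate with the help of Lemma \ref{lem:condcvg}. To prove \eqref{mybound}, abbreviate $r_{ji} = \hat\gamma_j(\bm{B}_i; \bm{F}_{I^c_n}) - \gamma_j(\bm{B}_i)$ for $j=1,2$ and consider an arbitrary scalar entry $(k,\ell)$ of the (vector- or matrix-valued) object $\bm{G}_{n,N}$, namely $n^{-1} \sum_{i \in \bm{I}_n} r_{1i} r_{2i} [\bm{h}(\bm{B}_i)]_{k\ell}$. The triangle inequality together with the almost-sure bound $|[\bm{h}(\bm{B}_i)]_{k\ell}| \leq \| \bm{h}(\bm{B}_i) \|_{\infty} \leq C$ bounds its absolute value by $C n^{-1} \sum_{i \in \bm{I}_n} |r_{1i}| |r_{2i}|$, and a single application of the Cauchy--Schwarz inequality for finite sums bounds the latter by $C \prod_{j=1}^{2} \bigl( n^{-1} \sum_{i \in \bm{I}_n} r_{ji}^2 \bigr)^{1/2} = C \prod_{j=1}^{2} \bigl\| \hat\gamma_j(\bm{B}; \bm{F}_{I^c_n}) - \gamma_j(\bm{B}) \bigr\|_{\mathbb{P}_n,2}$. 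Since this estimate does not depend on $(k,\ell)$, taking the maximum over entries establishes \eqref{mybound}.

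For the rate, I would condition on $\bm{F}_{I^c_n}$ throughout. Given $\bm{F}_{I^c_n}$, each $\hat\gamma_j(\cdot; \bm{F}_{I^c_n})$ is a fixed deterministic function, while $\{\bm{B}_i : i \in \bm{I}_n\}$ is still a collection of $n$ i.i.d.\ $P_0$-draws independent of $\bm{F}_{I^c_n}$; consequently $\E\bigl\{ \| \hat\gamma_j(\bm{B}; \bm{F}_{I^c_n}) - \gamma_j(\bm{B}) \|^2_{\mathbb{P}_n,2} \bigm| \bm{F}_{I^c_n} \bigr\}$ is exactly the random norm appearing in \eqref{normident}. Taking conditional expectations in \eqref{mybound} and then applying the conditional Cauchy--Schwarz inequality gives
\[
\E\bigl( \| \bm{G}_{n,N} \|_{\infty} \bigm| \bm{F}_{I^c_n} \bigr) \leq C \prod_{j=1}^{2} \Bigl( \E\bigl\{ \| \hat\gamma_j(\bm{B}; \bm{F}_{I^c_n}) - \gamma_j(\bm{B}) \|^2_{\mathbb{P}_n,2} \bigm| \bm{F}_{I^c_n} \bigr\} \Bigr)^{1/2} = C \prod_{j=1}^{2} \bigl\| \hat\gamma_j(\bm{B}; \bm{F}_{I^c_n}) - \gamma_j(\bm{B}) \bigr\|_{P_0,2}.
\]
By hypothesis $\| \hat\gamma_j(\bm{B}; \bm{F}_{I^c_n}) - \gamma_j(\bm{B}) \|^2_{P_0,2} = o_p(N^{-a_j})$, so the product on the right is $o_p(N^{-(a_1+a_2)/2})$, and hence $\E\bigl( N^{(a_1+a_2)/2} \| \bm{G}_{n,N} \|_{\infty} \bigm| \bm{F}_{I^c_n} \bigr) = o_p(1)$. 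Invoking the Chebyshev criterion in Lemma \ref{lem:condcvg} (with $q=1$, conditioning variable $\bm{F}_{I^c_n}$, and $A_n = N^{(a_1+a_2)/2} \bm{G}_{n,N}$) then yields $P\bigl( N^{(a_1+a_2)/2} \| \bm{G}_{n,N} \|_{\infty} > \epsilon \bigr) \to 0$ for every $\epsilon > 0$, i.e.\ $\| \bm{G}_{n,N} \|_{\infty} = o_p(N^{-(a_1+a_2)/2})$.

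The only step that needs real care is the cross-fitting bookkeeping: one must exploit the independence of $\bm{F}_{I_n}$ and $\bm{F}_{I^c_n}$ so that conditioning on $\bm{F}_{I^c_n}$ both freezes the data-adaptive estimators $\hat\gamma_j$ and leaves the evaluation sample distributed as i.i.d.\ $P_0$, which is precisely what makes the conditional mean of the empirical $L^2$ norm coincide with the data-dependent $L^2(P_0)$ norm in \eqref{normident}; and one must appeal to Lemma \ref{lem:condcvg} to transfer the conditional bound to an unconditional $o_p$ rate. The remaining ingredients --- the reduction to a single entry, the triangle inequality, and the two Cauchy--Schwarz arguments (one for sums, one conditional) --- are entirely routine.
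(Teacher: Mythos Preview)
Your proof is correct and follows essentially the same approach as the paper: establish \eqref{mybound} via the triangle inequality and Cauchy--Schwarz for finite sums, then condition on $\bm{F}_{I^c_n}$, apply the conditional Cauchy--Schwarz inequality to bound $\E(\|\bm{G}_{n,N}\|_\infty \mid \bm{F}_{I^c_n})$ by $C\prod_j \|\hat\gamma_j-\gamma_j\|_{P_0,2}$, and finish via Markov's inequality and Lemma~\ref{lem:condcvg}. Your explicit invocation of Lemma~\ref{lem:condcvg} with $q=1$ and the rescaling $A_n = N^{(a_1+a_2)/2}\bm{G}_{n,N}$ is slightly more carefully stated than the paper's version, but the argument is the same.
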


\begin{proof}
Let $r_{ij}(\bm{F}_{I^c_n}) = \hat \gamma_j(\bm{B}_i; \bm{F}_{I^c_n}) - \gamma_j(\bm{B}_i)$ for $i=1,\ldots,n$ and
$j=1,2$. Under the assumption that
$P(\| \bm{h}(\bm{B}_i ) \|_{\infty} \leq C) = 1,$ 
the triangle and Cauchy-Schwarz equalities imply
\begin{eqnarray*}
\| \bm{G}_{n,N} \|_{\infty} & \leq & \frac{1}{n} \sum_{i \in \bm{I}_n}
\left| r_{i1}(\bm{F}_{I^c_n}) r_{i2}(\bm{F}_{I^c_n})  \right|
 \| \bm{h}( \bm{B}_i ) \|_{\infty} \\
 & \leq & C
\left[ \frac{1}{n} \sum_{i \in \bm{I}_n}  \left\{ r_{i1}(\bm{F}_{I^c_n}) \right\}^2 \right]^{1/2}
\left[ \frac{1}{n} \sum_{i \in \bm{I}_n}  \left\{ r_{i2}(\bm{F}_{I^c_n}) \right\}^2 \right]^{1/2},
 \end{eqnarray*}
the representation \eqref{mybound} now following immediately from the definition of the $L^2(Q)$ norm given earlier 
using $Q = \mathbb{P}_n$. 
To establish that $\| \bm{G}_{n,N} \|_{\infty} = o_p(N^{-(a_1+a_2)}),$ 
we first use Markov's inequality: for any $\epsilon > 0,$
\[
P \left( \| \bm{G}_{n,N} \|_{\infty} > \epsilon | \bm{F}_{I^c_n} \right) \leq \epsilon^{-1} \E \left( \| \bm{G}_{n,N} \|_{\infty}  | \bm{F}_{I^c_n} \right).
\]
Using \eqref{mybound} and the Cauchy-Schwarz inequality again, it follows that
\begin{eqnarray*}
P \left( \| \bm{G}_{n,N} \|_{\infty} > \epsilon | \bm{F}_{I^c_n} \right) & \leq & \frac{C}{\epsilon}
\E \left\{ \prod_{j=1}^2 \left\|  \hat \gamma_j(\bm{B}; \bm{F}_{I^c_n}) - \gamma_j(\bm{B}) \right\|_{\mathbb{P}_n,2} \big| \bm{F}_{I^c_n} \right\} \\
& \leq & \frac{C}{\epsilon} \prod_{j=1}^2
\E \left\{  \left\|  \hat \gamma_j(\bm{B}; \bm{F}_{I^c_n}) - \gamma_j(\bm{B}) \right\|^2_{\mathbb{P}_n,2} \big| \bm{F}_{I^c_n} \right\}^{1/2} \\
& = & \frac{C}{\epsilon} \prod_{j=1}^2 \left\|  \hat \gamma_j(\bm{B}; \bm{F}_{I^c_n}) - \gamma_j(\bm{B}) \right\|_{P_0,2}
\end{eqnarray*}
the last result following directly from  \eqref{normident}.
Under the stated assumptions, the right-hand side is now seen to be
$o_p(N^{-(a_1+a_2)/2}),$ as desired. 
\end{proof}
The statement and proof of Lemma \ref{lem:help1cvg} employs a simple form of sample splitting in which the unknown function $\gamma_j(\cdot)$ is
estimated by $\hat \gamma_j(\cdot; \bm{F}_{I^c_n})$ from a sample $\bm{F}_{I^c_n}$ that is independent of the
$\bm{B}_i$s (i.e., $\bm{F}_{I_n}$) appearing in the calculation of  \eqref{Gfun}. Importantly, Lemma
\ref{lem:help1cvg} does not preclude the possibility that $\gamma_1(\cdot) = \gamma_2(\cdot)$ and
$\hat \gamma_1(\cdot) = \hat \gamma_2(\cdot);$ in this case, \eqref{Gfun} reduces to
\begin{equation}
\label{Gfun2}
\bm{L}_{n,m}  =  \frac{1}{n} \sum_{i \in \bm{I}_n} \left\{ \hat \gamma(\bm{B}_i; \bm{F}_{I^c_n}) - \gamma(\bm{B}_i) \right\}^2 \bm{h}( \bm{B}_i ) 
\end{equation}
and $\| \bm{G}_{n,N} \|_{\infty} = o_p(N^{-a})$ for some $a \geq 0 $ provided that
 \begin{equation}
\label{normident4}
\left\|  \hat \gamma(\bm{B}; \bm{F}_{I^c_n}) - \gamma(\bm{B}) \right\|^2_{P_0,2} = o_p(N^{-a}).
\end{equation}
A related lemma now follows.

\begin{lemma}\label{lem:help2cvg}
Let $(R_1,\bm{B}_1),\ldots, (R_n,\bm{B}_N)$ be independent, identically distributed vectors from $P_0,$ where $\bm{B}_i \in {\mathcal B} \subset \R^d$
and $E_i \in \R$. Suppose $\E( R_i | \bm{B}_i ) = 0$ and  $\mbox{var}( R_i | \bm{B}_i ) = \theta^2_i \in (0,C_1]$ for $i = 1 \ldots N$ and
a constant $C_1 < \infty.$
Let $\bm I_n$ be a randomly chosen subset of the integers $1,\ldots,N$ of length $n = O(N)$ and let its complement 
$\bm I^c_n$ have $N-n$ elements.
Let $\bm{F}_{I_n}$ and $\bm{F}_{I^c_n}$ be the corresponding disjoint subsets of $(R_1,\bm{B}_1),\ldots,(R_N,\bm{B}_{N}).$ 
Let $\gamma: {\mathcal B} \rightarrow \R$ and let $\hat \gamma(\cdot; \bm{F}_{I^c_n})$ be an estimator of $\gamma(\cdot)$ 
derived from the data $\bm{F}_{I^c_n}.$  Finally, define 
\begin{equation}
\label{Lfun}
\bm{L}_{n,N}  =  \frac{1}{n} \sum_{i \in \bm{I}_n} R_i 
\left\{ \hat \gamma(\bm{B}_i; \bm{F}_{I^c_n}) - \gamma(\bm{B}_i) \right\} \bm{h}( \bm{B}_i ) 
\end{equation}
where $\bm{h}(\bm{B}_i )$ is any finite dimensional vector-valued function of $\bm{B}_i$ 
such that $P(\| \bm{h}(\bm{B}_i ) \|_{2} \leq C_2) = 1$ for $C_2 < \infty$. 
Suppose that
\begin{equation}
\label{normident3}
\left\|  \hat \gamma(\bm{B}; \bm{F}_{I^c_n}) - \gamma(\bm{B}) \right\|^2_{P_0,2} 
= \E \left\{ \left\|  \hat \gamma(\bm{B}; \bm{F}_{I^c_n}) - \gamma(\bm{B}) \right\|^2_{\mathbb{P}_n,2} \big| \bm{F}_{I^c_n} \right\}
\end{equation}
is $o_p(N^{-a})$ where $a \geq 0.$ Then,
$\| \bm{L}_{n,N} \|_{\infty} = o_p(N^{-(1+a)/2}).$
\end{lemma}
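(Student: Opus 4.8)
The plan is to condition on the auxiliary subsample $\bm{F}_{I^c_n}$, which turns $\hat\gamma(\cdot;\bm{F}_{I^c_n})$ into a fixed (non-random) function, bound the conditional second moment of $\bm{L}_{n,N}$, and then pass from the conditional to the unconditional statement using Lemma \ref{lem:condcvg} with $q=2$. The structural facts that make this work are that sample splitting renders $(R_i,\bm{B}_i)_{i\in\bm{I}_n}$ independent of $\bm{F}_{I^c_n}$, and that the centering $\E(R_i\mid\bm{B}_i)=0$ therefore survives the conditioning; this is entirely parallel to the argument for Lemma \ref{lem:help1cvg}, except that one of the two ``extra'' factors there is replaced by the mean-zero noise $R_i$.

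First I would note that $\|\bm{x}\|_\infty\le\|\bm{x}\|_2$ on finite-dimensional vectors, so it suffices to bound $\E(\|\bm{L}_{n,N}\|_2^2\mid\bm{F}_{I^c_n})$. Conditionally on $\bm{F}_{I^c_n}$, the summands in \eqref{Lfun} are independent with conditional mean $\bm{0}$ (because $\E(R_i\mid\bm{B}_i,\bm{F}_{I^c_n})=\E(R_i\mid\bm{B}_i)=0$), so expanding the square kills every cross term and leaves
\[
\E\bigl(\|\bm{L}_{n,N}\|_2^2\,\big|\,\bm{F}_{I^c_n}\bigr)=\frac{1}{n^2}\sum_{i\in\bm{I}_n}\E\Bigl(R_i^2\bigl\{\hat\gamma(\bm{B}_i;\bm{F}_{I^c_n})-\gamma(\bm{B}_i)\bigr\}^2\|\bm{h}(\bm{B}_i)\|_2^2\,\Big|\,\bm{F}_{I^c_n}\Bigr).
\]
Conditioning each term additionally on $\bm{B}_i$ and using $\mbox{var}(R_i\mid\bm{B}_i)=\theta_i^2\le C_1$ together with $P(\|\bm{h}(\bm{B}_i)\|_2\le C_2)=1$, each summand is at most $C_1C_2^2\,\E\bigl(\{\hat\gamma(\bm{B}_i;\bm{F}_{I^c_n})-\gamma(\bm{B}_i)\}^2\mid\bm{F}_{I^c_n}\bigr)$, which by the definition \eqref{normident3} of the random $\|\cdot\|_{P_0,2}$ norm equals $C_1C_2^2\,\|\hat\gamma(\bm{B};\bm{F}_{I^c_n})-\gamma(\bm{B})\|_{P_0,2}^2$. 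Summing over $\bm{I}_n$ and dividing by $n^2$ gives
\[
\E\bigl(\|\bm{L}_{n,N}\|_2^2\,\big|\,\bm{F}_{I^c_n}\bigr)\le\frac{C_1C_2^2}{n}\,\bigl\|\hat\gamma(\bm{B};\bm{F}_{I^c_n})-\gamma(\bm{B})\bigr\|_{P_0,2}^2.
\]

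To conclude, I would multiply through by $N^{1+a}$: since $n$ is of the same order as $N$ (e.g.\ $n=N/2$), $N^{1+a}/n=O(N^a)$, and by hypothesis $\|\hat\gamma(\bm{B};\bm{F}_{I^c_n})-\gamma(\bm{B})\|_{P_0,2}^2=o_p(N^{-a})$, so $N^{1+a}\,\E(\|\bm{L}_{n,N}\|_\infty^2\mid\bm{F}_{I^c_n})\le N^{1+a}\,\E(\|\bm{L}_{n,N}\|_2^2\mid\bm{F}_{I^c_n})=o_p(1)$. Applying the sufficient (Chebyshev) condition in Lemma \ref{lem:condcvg} with $A_n=N^{(1+a)/2}\bm{L}_{n,N}$, $B_n=\bm{F}_{I^c_n}$ and $q=2$ then yields $N^{(1+a)/2}\|\bm{L}_{n,N}\|_\infty\CiP 0$, i.e.\ $\|\bm{L}_{n,N}\|_\infty=o_p(N^{-(1+a)/2})$.

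The only step requiring genuine care — and hence the main obstacle — is justifying that the cross terms vanish: this uses both the independence of $(R_i,\bm{B}_i)_{i\in\bm{I}_n}$ from the estimation sample $\bm{F}_{I^c_n}$ (so that conditioning on $\bm{F}_{I^c_n}$ does not disturb $\E(R_i\mid\bm{B}_i)=0$) and the mutual independence of distinct observations within $\bm{I}_n$ (so that $\E(R_iR_j\mid\cdots)$ factors to $0$ for $i\ne j$). Everything else is a routine second-moment calculation. Note that it is precisely the mean-zero structure of the ``extra'' factor $R_i$ — rather than it being a second estimation error, as in Lemma \ref{lem:help1cvg} — that sharpens the rate to $N^{-(1+a)/2}$: the noise averaging supplies an additional $n^{-1/2}$.
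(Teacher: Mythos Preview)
Your proposal is correct and follows essentially the same approach as the paper's proof: condition on $\bm{F}_{I^c_n}$, use $\E(R_i\mid\bm{B}_i)=0$ to show the conditional mean vanishes and kill the cross terms in the second moment, bound the remaining diagonal terms using the boundedness of $\theta_i^2$ and $\bm{h}$, and then invoke Chebyshev via Lemma~\ref{lem:condcvg}. The only cosmetic difference is that the paper works component-by-component on $\bm{L}_{n,N,j}$ and then assembles the vector bound at the end, whereas you work directly with $\|\bm{L}_{n,N}\|_2^2$; the substance is identical.
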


\begin{proof}
The proof  relies on a variant of Chebyshev's inequality. Let
\[
\bm{L}_{n,N,j} = \frac{1}{n} \sum_{i \in \bm{I}_n} R_i 
\left\{ \hat \gamma(\bm{B}_i; \bm{F}_{I^c_n}) - \gamma(\bm{B}_i) \right\} \bm{h}_j( \bm{B}_i )
\]
be the $j^{th}$ element of $\bm{L}_{n,N}.$  Let ${\mathcal B}_n = \{ \bm I_n,  ( \bm{B}_k, k \in \bm I_n) \}.$ Then,
it is easy to show that
\[
\E\left( \bm{L}_{n,N,j} \big| \bm{F}_{I^c_n} \right) 
~=~
\E \left\{ 
\E\left( \bm{L}_{n,N,j} \big| \bm{F}_{I^c_n},  {\mathcal B}_n \right) 
\big| \bm{F}_{I^c_n} \right\}
~=~0;
\]
this follows from calculating the inner expectation on the right-hand-side and using
the assumption that $\E( R_i | \bm{B}_i ) = 0$ for every $i.$ 
Using a similar conditioning argument, 
\begin{eqnarray*}
\mbox{var}\left( \bm{L}_{n,N,j} \big| \bm{F}_{I^c_n}\right) & = & 
\E\left\{ \mbox{var}\left( \bm{L}_{n,N,j} \big| \bm{F}_{I^c_n},  {\mathcal B}_n \right) \big| \bm{F}_{I^c_n} \right\}.
\end{eqnarray*}
Straightforward calculations now show
\begin{eqnarray*}
\mbox{var}\left( \bm{L}_{n,N,j} \big| \bm{F}_{I^c_n},  {\mathcal B}_n \right)
& = & 
 \frac{1}{n^2} \sum_{i \in \bm{I}_n}
\left\{ \hat \gamma(\bm{B}_i; \bm{F}_{I^c_n}) - \gamma(\bm{B}_i) \right\} ^2
\bm{h}^2_j( \bm{B}_i ) \mbox{var}\left(R_i | \bm{B}_i \right),
\end{eqnarray*}
implying that
\begin{eqnarray*}
\mbox{var}\left( \bm{L}_{n,N,j} \big| \bm{F}_{I^c_n} \right) 
& = & 
\E \left[ \frac{1}{n^2} \sum_{i \in \bm{I}_n}
\left\{ \hat \gamma(\bm{B}_i; \bm{F}_{I^c_n}) - \gamma(\bm{B}_i) \right\}^2 
\bm{h}^2_j( \bm{B}_i ) \theta^2_i \Big| \bm{F}_{I^c_n} \right] \\
& \leq &
\frac{C_1 C_2}{n} \E \left[ \frac{1}{n} \sum_{i \in \bm{I}_n}
\left\{ \hat \gamma(\bm{B}_i; \bm{F}_{I^c_n}) - \gamma(\bm{B}_i) \right\}^2 \Big| \bm{F}_{I^c_n}
 \right] \\
 & = & 
 \frac{C_1 C_2}{n} 
 \left\|  \hat \gamma(\bm{B}; \bm{F}_{I^c_n}) - \gamma(\bm{B}) \right\|^2_{P_0,2} \\
 & = & o_p(N^{-(1+a)}),
\end{eqnarray*}
the last step following from the assumptions on \eqref{normident3} made in the statement of
the lemma and the fact that $n = O(N)$.  Using a vector form of Chebyshev's inequality,
it can then be shown that $\| \bm{L}_{n,N} \|_2 = o_p(N^{-(1+a)/2});$ since
$\| \bm{L}_{n,N} \|_{\infty} \leq \| \bm{L}_{n,N} \|_2$, the stated result follows.
\end{proof}

\section{Proof of Theorem 1}

To review our main assumptions, we assume that we observe $N$ independently identically distributed trajectories of 
$(\bm{X}_1,A_1,\bm{X}_2,A_2,Y) \sim P_0$. The vector $\bm{X}_1 \in \mathcal{X}_1 \subset \R^{p_1}$ 
consists of baseline covariates measured before treatment at the first decision point $A_1\in \{0,1\}$ and the vector 
$\bm{X}_2 \in \mathcal{X}_2 \subset \R^{p_2}$ consists of intermediate covariates measured before treatment at the second decision point $A_2\in \{0,1\}$. 
For notational convenience we define  $\bm{S}^0_i=(\bm{X}_{1i}^\top,A_{1i},\bm{X}_{2i}^\top)^\top\in \mathcal{S} \subset \R^{p_1+p_2+1}$ and 
 $\bm{W}^0_i=\bm{X}_{1i} \in \mathcal{X}_1 \subset \R^{p_1}$. 
We will also have need to define the variables $\bm{S}_i$ and $\bm{W}_i, i=1,\ldots,N;$ respectively,
each represents some finite dimensional function of the variables in $\bm{S}^0_i$ and  $\bm{W}^0_i.$
We note that knowledge of $\bm{S}^0_i$ and $\bm{W}^0_i$ respectively implies knowledge of $\bm{S}_i$ and $\bm{W}_i;$
however, the reverse may not hold.
The observed outcome $Y \in \R$ (measured after $A_2$) is assumed continuous, with a larger value of $Y$ indicating  a better clinical outcome.

The developments below assume that the original sample, with elements independently and identically distributed
as $P_0,$ has been split into two independent samples, say
$\bm{D}_{I_n}$ and $\bm{D}_{I^c_n},$ being respectively of sizes $n = O(N)$ and $N-n = O(N)$. 
The nuisance parameters
$\hat \mu_{2Y}(\cdot),$ $\hat \mu_{2A}(\cdot),$ $\hat \mu_{1Y}(\cdot),$ and $\hat \mu_{1A}(\cdot)$
are estimated using the data in $\bm{D}_{I^c_n};$
the finite dimensional parameters of interest are then estimated using the data $\bm{D}_{I_n},$
treating $\hat \mu_{2Y}(\cdot),$ $\hat \mu_{2A}(\cdot),$ $\hat \mu_{1Y}(\cdot),$ and $\hat \mu_{1A}(\cdot)$
as if they were known functions. 
As developed here, our use of sample-splitting is a simple form of cross-fitting and can be generalized easily 
to make better use of the full sample
\citep{Cherno18}; the simpler form used here suffices to establish
the main ideas of the proofs. Lemmas \ref{lem:help1cvg} and \ref{lem:help2cvg} play
an important role in several of the proofs; since $n = O(N),$ statements
of the form $o_p(N^{-a})$ and $o_p(n^{-1})$ are equivalent, we 
use the latter to emphasize that the technical arguments rely on 
sample splitting, where a sample $\bm{D}_{I_n}$ of
size $n$ is used to estimate the finite dimensional parameters
of interest.

To simplify notation, where needed all calculations implicitly condition
on the set of selected indices $\bm{I}_n$. Using notation from the main paper, let 
$\Delta_{2i} = \Delta_2(\bm{S}^0_{i}),$ $\Delta_{1i} = \Delta_1(\bm{W}^0_{i}),$ 
$\mu_{2Ai} = \mu_{2A}(\bm{S}^0_i),$ $\hat \mu_{2Ai} = \hat \mu_{2A}(\bm{S}^0_i),$ $ \mu_{1Ai} = \mu_{1A}(\bm{W}^0_i),$ 
and $\hat \mu_{1Ai} = \hat \mu_{1A}(\bm{W}^0_i).$  In addition, as in the main paper, we define the matrices
\begin{equation*}
\bV_{2n} = \frac{1}{n} \sum_{i \in \bm{I}_n}   (A_{2i}- \mu_{2Ai})^2  \bm{S}^{\otimes 2}_i  
~\mbox{ and }~
\hat \bV_{2n} = \frac{1}{n} \sum_{i \in \bm{I}_n}   (A_{2i}- \hat \mu_{2Ai})^2  \bm{S}^{\otimes 2}_i   
\end{equation*}
\begin{equation*}
\bV_{1n} = \frac{1}{n} \sum_{i \in \bm{I}_n}   (A_{1i}- \hat \mu_{1Ai})^2  \bm{W}^{\otimes 2}_i  
~\mbox{ and }~
\hat \bV_{1n} = \frac{1}{n} \sum_{i \in \bm{I}_n}   (A_{1i}- \hat \mu_{1Ai})^2  \bm{W}^{\otimes 2}_i  
\end{equation*}
where $\bm{x}^{\otimes 2} = \bm{x}  \bm{x}^\top$ for any vector $\bm{x}$.

We make the following assumptions.

\begin{assumption} \label{assump:support-X}
(i)   The support of $\bm{W}^0$ and the conditional treatment effect $\Delta_1(\bm{W}^0)$ are uniformly bounded; 
(ii)  the support of $\bm{S}^0$ and the conditional treatment effect $\Delta_2(\bm{S}^0)$ are uniformly bounded; and,
the supports of $\bm{S}$ and $\bm{W}$ are uniformly bounded.
\end{assumption}

\begin{assumption} \label{assump:accuracy-treatment}
  (i) $\left\|  \hat \mu_{1A}(\bm{W}^0; \bm{D}_{I^c_n}) - \mu_{1A}(\bm{W}^0) \right\|^2_{P_0,2} = o_p(n^{-1/2});$ 
  (ii) $\left\|  \hat \mu_{2A}(\bm{S}^0; \bm{D}_{I^c_n}) - \mu_{2A}(\bm{S}^0) \right\|^2_{P_0,2} = o_p(n^{-1/2}).$
\end{assumption}

\begin{assumption}  \label{assump:accuracy-outcome}
  (i) $\left\|  \hat \mu_{1Y}(\bm{W}^0; \bm{D}_{I^c_n}) - \mu_{1Y}(\bm{W}^0) \right\|^2_{P_0,2} = o_p(1);$ 
  (ii) $\left\|  \hat \mu_{2Y}(\bm{S}^0; \bm{D}_{I^c_n}) - \mu_{2Y}(\bm{S}^0) \right\|^2_{P_0,2} = o_p(1).$
\end{assumption}

\begin{assumption}  \label{assump:accuracy-2} 
(i) $\left\|  \hat \mu_{1Y}(\bm{W}^0; \bm{D}_{I^c_n}) - \mu_{1Y}(\bm{W}^0) \right\|_{P_0,2}
\left\|  \hat \mu_{1A}(\bm{W}^0; \bm{D}_{I^c_n}) - \mu_{1A}(\bm{W}^0) \right\|_{P_0,2} = o_p(n^{-1/2});$\\
(ii) $\left\|  \hat \mu_{2Y}(\bm{S}^0; \bm{D}_{I^c_n}) - \mu_{2Y}(\bm{S}^0) \right\|_{P_0,2}
\left\|  \hat \mu_{2A}(\bm{S}^0; \bm{D}_{I^c_n}) - \mu_{2A}(\bm{S}^0) \right\|_{P_0,2} = o_p(n^{-1/2})$
\end{assumption}

\begin{assumption} 
\label{assump:posdef}
There exists $1 \leq n_0 < \infty$ such that $\bV_{jn}$ and $\hat \bV_{jn}, j = 1,2$ are positive definite for $n \geq n_0$.
\end{assumption}

\begin{assumption} 
\label{assump:unique} 
$P\big( | \bm{S}_1^\top  \bm{\beta}^*_{2} | = 0 \big) = 0.$
\end{assumption}

We prove this Theorem with help from the following lemma. 
\begin{lemma} \label{lem:beta1-fixed}
Suppose Assumptions \ref{assump:support-X}, \ref{assump:accuracy-treatment},  and
\ref{assump:posdef} hold. Let $d_2 = dim(\bm{S})$ and $d_1 = dim(\bm{W})$.
Then $\| \tilde{\bm{\beta}}^*_{2n}-{\bm{\beta}}^{*}_{2n}    \|_\infty=o_p(n^{-1/2})$ and $\| \tilde{\bm{\beta}}^*_{1n}-{\bm{\beta}}^{*}_{1n}    \|_\infty=o_p(n^{-1/2})$, where 
  \begin{align*}
\tilde{\bm{\beta}}^*_{2n}&= \argmin_{\bm{\beta}_{2} \in \R^{d_2}} \sum_{i \in \bm{I}_n}  \left\{ A_{2i}- \hat \mu_{2A}(\bm{S}^0_i)\right\}^2 \left\{  \Delta_2(\bm{S}^0_{i}) -   \bm{S}^\top_{i} \bm{\beta}_{2}  \right\}^2, \\
\bm{\beta}^*_{2n}&= \argmin_{\bm{\beta}_{2} \in \R^{d_2}} \sum_{i \in \bm{I}_n}  \left\{ A_{2i}- \mu_{2A}(\bm{S}^0_i)\right\}^2 \left\{  \Delta_2(\bm{S}^0_{i}) -   \bm{S}^\top_{i} \bm{\beta}_{2}  \right\}^2, \\
\tilde{\bm{\beta}}^*_{1n}&= \argmin_{\bm{\beta}_{1} \in \R^{d_1}} \sum_{i \in \bm{I}_n}  \left\{ A_{1i}- \hat \mu_{1A}(\bm{W}_i)\right\}^2 \left\{  \Delta_1(\bm{W}^0_{i}) -   \bm{W}^\top_{i} \bm{\beta}_{1}  \right\}^2, \\
\bm{\beta}^*_{1n}&= \argmin_{\bm{\beta}_{1} \in \R^{d_1}} \sum_{i \in \bm{I}_n}  \left\{ A_{1i}- \mu_{1A}(\bm{W}_i)\right\}^2 \left\{  \Delta_1(\bm{W}^0_{i}) -   \bm{W}^\top_{i} \bm{\beta}_{1}  \right\}^2.
\end{align*}
 \end{lemma}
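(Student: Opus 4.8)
The plan is to use the closed forms of the two weighted least-squares estimators and reduce everything to bounding a handful of cross-fitted averages in the propensity error $\hat\mu_{jA}-\mu_{jA}$. For the second stage, write $\hat{\bm{g}}_{2n} = \frac1n\sum_{i\in\bm{I}_n}(A_{2i}-\hat\mu_{2Ai})^2\bm{S}_i\Delta_{2i}$ and $\bm{g}_{2n} = \frac1n\sum_{i\in\bm{I}_n}(A_{2i}-\mu_{2Ai})^2\bm{S}_i\Delta_{2i}$; Assumption \ref{assump:posdef} then gives $\tilde{\bm{\beta}}^*_{2n} = \hat\bV_{2n}^{-1}\hat{\bm{g}}_{2n}$ and $\bm{\beta}^*_{2n} = \bV_{2n}^{-1}\bm{g}_{2n}$, so that
\[
\tilde{\bm{\beta}}^*_{2n} - \bm{\beta}^*_{2n} = \hat\bV_{2n}^{-1}(\hat{\bm{g}}_{2n}-\bm{g}_{2n}) + (\hat\bV_{2n}^{-1}-\bV_{2n}^{-1})\,\bm{g}_{2n}.
\]
The starting point is the elementary identity $(A_{2i}-\hat\mu_{2Ai})^2-(A_{2i}-\mu_{2Ai})^2 = 2(A_{2i}-\mu_{2Ai})(\mu_{2Ai}-\hat\mu_{2Ai}) + (\mu_{2Ai}-\hat\mu_{2Ai})^2$, which splits both $\hat\bV_{2n}-\bV_{2n}$ and $\hat{\bm{g}}_{2n}-\bm{g}_{2n}$ into a ``cross term'' and a ``quadratic term.''

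I would bound the cross terms with Lemma \ref{lem:help2cvg}, taking $R_i = A_{2i}-\mu_{2Ai}$ (conditional mean zero, with bounded conditional variance given $\bm{S}^0_i$ because $A_2$ is binary), $\gamma = \mu_{2A}$, $\hat\gamma = \hat\mu_{2A}$, and $\bm{h}(\cdot)$ equal to $\bm{S}_i^{\otimes 2}$ or $\bm{S}_i\Delta_{2i}$, both bounded by Assumption \ref{assump:support-X}. With $\|\hat\mu_{2A}-\mu_{2A}\|^2_{P_0,2}=o_p(n^{-1/2})$ from Assumption \ref{assump:accuracy-treatment}(ii), Lemma \ref{lem:help2cvg} returns $o_p(n^{-3/4})$. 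For the quadratic terms I would apply Lemma \ref{lem:help1cvg} in the reduced form \eqref{Gfun2} (i.e.\ $\gamma_1=\gamma_2=\mu_{2A}$), again using boundedness of $\bm{h}(\cdot)$ together with Assumption \ref{assump:accuracy-treatment}(ii), which yields $o_p(n^{-1/2})$. Combining, $\|\hat\bV_{2n}-\bV_{2n}\|_\infty = o_p(n^{-1/2})$ and $\|\hat{\bm{g}}_{2n}-\bm{g}_{2n}\|_\infty = o_p(n^{-1/2})$.

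For the matrix inverse, note that $\bV_{2n}$ is an average of i.i.d.\ uniformly bounded matrices, so the law of large numbers gives $\bV_{2n}\to\bV_2$, which is positive definite by Assumption \ref{assump:posdef}; hence $\|\bV_{2n}^{-1}\|$ is, with probability tending to one, bounded above and below by fixed constants. Since also $\|\hat\bV_{2n}-\bV_{2n}\|\to 0$, Lemma \ref{lem:matinvcvg} applies on that event and gives $\|\hat\bV_{2n}^{-1}-\bV_{2n}^{-1}\| = o_p(n^{-1/2})$ and, in particular, $\|\hat\bV_{2n}^{-1}\| = O_p(1)$. Because $\|\bm{g}_{2n}\| = O_p(1)$ (an average of bounded terms), substituting these rates into the displayed decomposition yields $\|\tilde{\bm{\beta}}^*_{2n}-\bm{\beta}^*_{2n}\|_\infty = o_p(n^{-1/2})$; all norm manipulations are legitimate because in the fixed finite dimensions $d_2$ and $d_1$ all norms are equivalent up to constants. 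The first-stage claim follows by the identical argument with $(\bm{S},\mu_{2A},\Delta_2,A_2)$ replaced by $(\bm{W},\mu_{1A},\Delta_1,A_1)$ and using Assumptions \ref{assump:accuracy-treatment}(i), \ref{assump:support-X}(i), and \ref{assump:posdef}.

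The part requiring the most care is checking that the advertised rate $o_p(n^{-1/2})$ is genuinely reached: the cross term is comfortably $o_p(n^{-3/4})$ thanks to the mean-zero factor and cross-fitting, but the quadratic term is only $o_p(n^{-1/2})$, which is exactly the threshold permitted by Assumption \ref{assump:accuracy-treatment} (an $L^2$ rate of $o_p(n^{-1/4})$ for the propensity). One also has to verify the boundedness hypotheses of Lemmas \ref{lem:help1cvg} and \ref{lem:help2cvg}, which here reduce to Assumption \ref{assump:support-X} and the binariness of $A_1,A_2$, and to invoke Assumption \ref{assump:posdef} to pass from the sample Gram matrices to a bounded limiting inverse. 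Notably, Assumption \ref{assump:unique} plays no role: this lemma concerns only the effect of substituting $\hat\mu_{jA}$ for $\mu_{jA}$ while the deterministic target $\Delta_j(\cdot)$ is held fixed, so no non-regularity from the pseudo-outcome enters.
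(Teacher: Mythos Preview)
Your proposal is correct and follows essentially the same approach as the paper: both write the weighted least-squares solutions in closed form, split the propensity-error contribution via $(A-\hat\mu)^2-(A-\mu)^2 = 2(A-\mu)(\mu-\hat\mu)+(\mu-\hat\mu)^2$, bound the quadratic piece by Lemma~\ref{lem:help1cvg} and the cross piece by the mean-zero/Chebyshev argument, and then invoke Lemma~\ref{lem:matinvcvg} together with Assumption~\ref{assump:posdef} for the inverse. The only cosmetic difference is that you appeal directly to Lemma~\ref{lem:help2cvg} for the cross terms, whereas the paper spells out that Chebyshev computation in-line for the matrix term $D_n$; your treatment is slightly more systematic but substantively identical.
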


\begin{proof}
Below, we will prove that $\| \tilde{\bm{\beta}}^*_{2n}-{\bm{\beta}}^{*}_{2n}    \|_\infty=o_p(n^{-1/2});$ the result that
$\| \tilde{\bm{\beta}}^*_{1n}-{\bm{\beta}}^{*}_{1n}    \|_\infty=o_p(n^{-1/2})$ follows from essentially identical arguments. 
Using the definitions of $\tilde{\bm{\beta}}^*_{2n}$ and ${\bm{\beta}}^{*}_{2n}$ and assuming $n$ is large enough so that
Assumption \ref{assump:posdef} holds, straightforward algebra shows 
\begin{align*}
 \tilde{\bm{\beta}}^*_{2n}-{\bm{\beta}}^{*}_{2n} = &
   (\hat \bV_{2n}^{-1}- \bV_{2n}^{-1}) \left\{ \frac{1}{n} \sum_{i \in \bm{I}_n}   (A_{2i}-  \mu_{2Ai})^2  \bm{S}_i \Delta_{2i}  \right\} \\
   & +\hat \bV_{2n}^{-1} \left\{ \frac{1}{n} \sum_{i \in \bm{I}_n}   ( \mu_{2Ai}- \hat \mu_{2Ai})^2  \bm{S}_i \Delta_{2i}  \right\}.
\end{align*}
Taking norms and using the triangle inequality, it can be shown that
\begin{equation}
\label{b2bound}
\sqrt n \| \tilde{\bm{\beta}}^*_{2n}-{\bm{\beta}}^{*}_{2n}\|_\infty \leq \sqrt n   \bigl\| \hat \bV_{2n}^{-1}- \bV_{2n}^{-1} \bigr\|_\infty  \, (A_n + B_n) + \sqrt n \bigl\| \bV_{2n}^{-1} \bigr\|_\infty \, B_n
\end{equation}
where
\begin{align*}
A_n & = \bigg\|  \frac{1}{n} \sum_{i \in \bm{I}_n}   (A_{2i}-  \mu_{2Ai})^2  \bm{S}_i \Delta_{2i} \biggr\|_\infty \\
B_n & =   \bigg\|  \frac{1}{n} \sum_{i \in \bm{I}_n}   ( \hat \mu_{2Ai}- \mu_{2Ai})^2 \bm{S}_i \Delta_{2i} \biggr\|_\infty.
\end{align*}
Suppose that $\|\hat \bV_{2n} - \bV_{2n}\|_{\infty} = o_p( n^{-1/2}).$  Then, by Lemma \ref{lem:matinvcvg}
and Assumption \ref{assump:posdef}, we have for $n$ sufficiently large that
\begin{equation}
\label{V2invcvg}
\|\hat \bV_{2n}^{-1} - \bV_{2n}^{-1}\|_{\infty} \leq K_1 \|\hat \bV_{2n} - \bV_{2n}\|_{\infty} 
\end{equation}
for any constant $K_1$ such that $2  \| \bV_{2n}^{-1} \|^2_{\infty} \leq K_1$.
It can be seen that 
\[
A_n \stackrel{p}{\rightarrow} \Big\|  \E\left[  \{A_2 - \mu_{2A}(\bm{S}^0)\}^2 \bm{S} \Delta_2(\bm{S}^0) \right] \Bigr\|_\infty
= \Big\| \E\left[   \bm{S} \Delta_2(\bm{S}^0) \mbox{var}( A_2 | \bm{S}^0) \right] \Bigr\|_\infty < \infty.
\]
Since cross-fitting is used to estimate 
$\hat \mu_{2A}(\cdot),$ we also see that $B_n$ is an example of \eqref{Gfun2};
hence, using Lemma \ref{lem:help1cvg} and Assumption \ref{assump:accuracy-treatment}, 
it follows that $B_n = o_p( n^{-1/2}).$ It follows from these results and \eqref{V2invcvg} 
that \eqref{b2bound} is $o_p(1).$ 

In order to prove that $\|\hat \bV_{2n} - \bV_{2n}\|_{\infty} = o_p(n^{-1/2}),$ we begin by writing
$\|\hat \bV_{2n} - \bV_{2n}\|_{\infty} \leq C_n + 2 D_n,$ where
\begin{align*}
C_n & =  \left\|  \frac{1}{n} \sum_{i \in \bm{I}_n}   ( \mu_{2Ai}-  \hat \mu_{2Ai})^2  \bm{S}^{\otimes 2}_{i}  \right \|_{\infty} \\
D_n & =  \left\| \frac{1}{n} \sum_{i \in \bm{I}_n} (A_{2i}-   \mu_{2Ai})(  \hat \mu_{2Ai} - \mu_{2Ai})  \bm{S}^{\otimes 2}_{i}   \right \|_{\infty}.
\end{align*}
Again, because  cross-fitting is used to estimate 
$\hat \mu_{2A}(\cdot),$  we can see that $C_n$ is also an example of \eqref{Gfun2}
and it follows by previously stated arguments that $C_n = o_p( n^{-1/2})$. 
In order to establish the behavior of $D_n$, we first note that
$D_n = \max_{j=1,\ldots,d_2} \sum_{k=1}^{d_2} 
\left| H_{njk}  \right|$
where $d_2$ is finite and
\[
H_{njk} =  \frac{1}{n} \sum_{i \in \bm{I}_n} (A_{2i}-   \mu_{2Ai})(  \hat \mu_{2Ai} - \mu_{2Ai})  S_{ik} S_{ij}. 
\]
It suffices to establish the behavior of $H_{njk}$. First, using the definition
of $\mu_{2Ai} = \mu_{2A}(\bm{S}^0_i) = \E( A_{2i} | \bm{S}^0_i)$ and the fact that
$\hat \mu_{2Ai} = \hat \mu_{2A}(\bm{S}^0_i)$ where $\hat \mu_{2A}(\cdot)$ is estimated
from data $\bm{D}_{I^c_n}$ that is independent of $\bm{S}^0_i \in \bm{D}_{I_n}$ for each $i,$
it is easy to see that $\E( H_{njk} | \bm{S}^0_1, \ldots,  \bm{S}^0_n, \bm{D}_{I^c_n}) = 0$ and hence that
$\E( H_{njk} | \bm{D}_{I^c_n}) = 0.$ Using these same properties, it is also easily
shown that
\[
\mbox{var}( H_{njk} | \bm{D}_{I^c_n}) = \frac{1}{n^2} \sum_{i \in \bm{I}_n} 
\E\left\{ 
 (  \hat \mu_{2Ai} - \mu_{2Ai})^2  
 \mbox{var}( A_{2i} | \bm{S}^0_i)
 (S_{ik} S_{ij})^2 | \bm{D}_{I^c_n}
\right\}.
\]
Under Assumption \ref{assump:support-X}, we can find a constant $K_2 < \infty$ such that
\begin{eqnarray*}
\mbox{var}( H_{njk} | \bm{D}_{I^c_n}) & \leq & \frac{K_2}{n}  
\E\left[ \frac{1}{n} \sum_{i \in \bm{I}_n}  \{ \hat \mu_{2A}(\bm{S}^0_i) - \mu_{2A}(\bm{S}^0_i) \}^2  \Big| \bm{D}_{I^c_n} \right] \\
& = & 
\frac{K_2}{n} \E\left\{ \| \hat \mu_{2A}(\bm{S}^0; \bm{D}_{I^c_n}) - \mu_{2A}(\bm{S}^0) \|^2_{\mathbb{P}_n,2}   \Big| \bm{D}_{I^c_n} \right\} \\
& = & 
\frac{K_2}{n}  \| \hat \mu_{2A}(\bm{S}^0; \bm{D}_{I^c_n}) - \mu_{2A}(\bm{S}^0) \|^2_{P_0,2},
\end{eqnarray*}
By Chebyshev's inequality, for all $\epsilon > 0$ we then have 
\[
P\left( n^{1/2} \left| H_{njk} \right|  > \epsilon \bigl| \bm{D}_{I^c_n} \right) \leq \frac{K_2}{\epsilon^2} \| \hat \mu_{2A}(\bm{S}^0; \bm{D}_{I^c_n}) - \mu_{2A}(\bm{S}^0) \|^2_{P_0,2},
\]
where the right-hand side is $o_p(n^{-1/2})$ by Assumption \ref{assump:accuracy-treatment}.
Lemma \ref{lem:condcvg} now implies that
$H_{njk} = o_p(n^{-1/2})$ and hence that $D_n = o_p(n^{-1/2}).$  Therefore, 
$\|\hat \bV_{2n} - \bV_{2n}\|_{\infty} \leq C_n + 2 D_n = o_p(n^{-1/2}),$ proving the desired result.
A similar argument shows $\sqrt n\| \tilde{\bm{\beta}}^*_{1n}-{\bm{\beta}}^{*}_{1n}\|_{\infty} = o_p(1)$.
\end{proof}

As in the main paper, we define
\begin{eqnarray}
\label{b2star}
\bm{\beta}^*_{2} &=& \argmin_{\bm{\beta}_{2} \in \R^{d_2}} \E\left[ 
\mbox{var}\left( A_{2} | \bm{S}^0 \right) \left\{  \Delta_2(\bm{S}^0) -   \bm{S}^\top \bm{\beta}_{2}  \right\}^2 \right], \\
\label{V2}
\bV_2 & = & \E\left\{  \mbox{var}\left( A_{2} | \bm{S}^0 \right) \bm{S}^{\otimes 2} \right\} \\
\label{b1star}
\bm{\beta}^*_{1}&= &\argmin_{\bm{\beta}_{1} \in \R^{d_1}} \E\left[ 
\mbox{var}\left( A_{1} | \bm{W}^0 \right) \left\{  \Delta_1(\bm{W}^0) -   \bm{W}^\top \bm{\beta}_{1}  \right\}^2 \right], \\
\label{V1}
\bV_1 & = & \E\left\{ \mbox{var}\left( A_{1} | \bm{W}^0 \right) \bm{W}^{\otimes 2} \right\}.
\end{eqnarray}
As $n \rightarrow \infty$, it is easy to see that $\bm{\beta}^*_{2n}$ converges to 
\[
\bm{\beta}^*_{2} = \bV_2^{-1} \E\left\{
\mbox{var}\left( A_{2} | \bm{S}^0 \right) \bm{S} \Delta_2(\bm{S}^0) \right\};
\]
similarly, $\bm{\beta}^*_{1n}$ converges to
\[
\bm{\beta}^*_{1} = \bV_1^{-1} \E\left\{
\mbox{var}\left( A_{1} | \bm{W}^0 \right) \bm{W} \Delta_1(\bm{W}^0) \right\}.
\]

With these preliminaries in place, we can now prove the main result.

\begin{proof}[Proof of Theorem 1, part (a)]

We desire to show that $\hat {\bm{\beta}}_{2n}$ is an asymptotically linear estimator of $\bm{\beta}_2^*$ with the
claimed influence function, where $\bm{\beta}^{*}_{2}$ is defined in \eqref{b2star} and
\begin{align*}
\hat {\bm{\beta}}_{2n}&= \argmin_{\bm{\beta}_2\in \R^{d_2}} \sum_{i \in \bm{I}_n} \left[ Y_i-\hat\mu_{2Y}(\bm{S}^0_i) - \{A_{2i}-\hat\mu_{2A}(\bm{S}^0_i)\} \cdot \bm{S}_i^\top \bm{\beta}_2    \right]^2.
\end{align*}
In view of Lemma \ref{lem:beta1-fixed}, we can proceed by establishing asymptotically linear representations for both
$\sqrt{n}( \hat {\bm{\beta}}_{2n} - \tilde {\bm{\beta}}^*_{2n})$ and
$\sqrt{n}({\bm{\beta}}^*_{2n} -  {\bm{\beta}}^*_{2});$ combined, these will lead to
that for $\sqrt{n}( \hat {\bm{\beta}}_{2n} -  {\bm{\beta}}^*_{2}).$

Recalling notation introduced earlier, it is easy to show that
\begin{align}
\nonumber
\sqrt{n} ( \hat {\bm{\beta}}_{2n} & - \tilde {\bm{\beta}}^*_{2n} ) = \\
\label{beta2diffrep}
&   \hat \bV_{2n}^{-1}  \left[
\frac{1}{\sqrt{n}} \sum_{i \in \bm{I}_n}   
(A_{2i}-  \hat \mu_{2Ai})  \bm{S}_i 
\{ Y_i-\hat\mu_{2Yi} -(A_{2i}-  \hat \mu_{2Ai})  \Delta_{2i} \}
\right]
\end{align}

By adding and subtracting terms and using the model assumptions on $Y_i$, one can write
\begin{equation}
\label{Ycenter}
Y_i-\hat\mu_{2Yi}
   -(A_{2i}-  \hat \mu_{2Ai})  \Delta_{2i} = \epsilon_{2i} + (\mu_{2Yi} - \hat\mu_{2Yi}) + (\hat \mu_{2Ai} - \mu_{2Ai}) \Delta_{2i}.
\end{equation}
The decomposition \eqref{Ycenter} implies that the term in the square brackets on the right-hand side of \eqref{beta2diffrep}
can be decomposed into six terms:
\begin{align}
\frac{1}{\sqrt{n}} \sum_{i \in \bm{I}_n}   
(A_{2i}-   \hat \mu_{2Ai})   & \bm{S}_i
\{ Y_i-\hat\mu_{2Yi} -(A_{2i}-   \mu_{2Ai})  \Delta_{2i} \}  \\
\label{b2decomp-1}
& =  \frac{1}{\sqrt{n}} \sum_{i \in \bm{I}_n}   \epsilon_{2i} (A_{2i}-  \mu_{2Ai})  \bm{S}_i  \\
\label{b2decomp-2}
& -  \frac{1}{\sqrt{n}} \sum_{i \in \bm{I}_n}   (A_{2i}-  \mu_{2Ai})   (\hat \mu_{2Yi} - \mu_{2Yi}) \bm{S}_i \\
\label{b2decomp-3}
& +  \frac{1}{\sqrt{n}} \sum_{i \in \bm{I}_n}   (A_{2i}-  \mu_{2Ai})   (\hat \mu_{2Ai} - \mu_{2Ai}) \Delta_{2i} \bm{S}_i \\
\label{b2decomp-4}
& -  \frac{1}{\sqrt{n}} \sum_{i \in \bm{I}_n}   \epsilon_{2i} (\hat \mu_{2Ai} -  \mu_{2Ai})    \bm{S}_i\\
\label{b2decomp-5}
& +  \frac{1}{\sqrt{n}} \sum_{i \in \bm{I}_n}   (\hat \mu_{2Ai} -  \mu_{2Ai})  (\hat \mu_{2Yi} - \mu_{2Yi})  \bm{S}_i  \\
\label{b2decomp-6}
& -  \frac{1}{\sqrt{n}} \sum_{i \in \bm{I}_n}   (\hat \mu_{2Ai} -  \hat \mu_{2Ai})^2  \Delta_{2i} \bm{S}_i . 
\end{align}
Under the assumptions of this theorem, the central limit theorem establishes the asymptotic normality 
of \eqref{b2decomp-1}, which is $O_p(1)$. The terms \eqref{b2decomp-2}-\eqref{b2decomp-4} are each seen
to be examples to which Lemma \ref{lem:help2cvg} applies; under Assumptions
\ref{assump:support-X}-\ref{assump:posdef}, it follows that each term is $o_p(1)$. 
The terms \eqref{b2decomp-5} and  \eqref{b2decomp-6} are both seen
to be examples to which Lemma \ref{lem:help1cvg} applies; again, under Assumptions
\ref{assump:support-X}-\ref{assump:posdef}, each term is $o_p(1).$ Because
$\| \hat \bV_{2n}^{-1} - \bV^{-1}_{2n} \|_{\infty} = o_p(1)$ it follows that
$\sqrt{n} ( \hat {\bm{\beta}}_{2n}  - \tilde {\bm{\beta}}^*_{2n} )$ can be written
\begin{equation}
\label{b2n-influence1}
\sqrt{n} ( \hat {\bm{\beta}}_{2n}  - \tilde {\bm{\beta}}^*_{2n} ) = 
 \frac{1}{\sqrt{n}} \sum_{i \in \bm{I}_n}   \bV^{-1}_{2n} \epsilon_{2i} (A_{2i}-  \mu_{2Ai})  \bm{S}_i  + o_p(1).
\end{equation}

Turning to $\sqrt{n}({\bm{\beta}}^*_{2n} -  {\bm{\beta}}^*_{2}),$ we can write
\begin{align}
\nonumber
 \sqrt{n}({\bm{\beta}}^*_{2n} -  {\bm{\beta}}^*_{2})   & =  
\sqrt{n} \left[ \biggl\{ n^{-1} \sum_{i \in \bm{I}_n}   \bV^{-1}_{2n}  (A_{2i}-  \mu_{2Ai})^2 \Delta_{2i}  \bm{S}_i \biggr\}  - {\bm{\beta}}^*_{2} \right] \\
\label{b2n-influence2}
& = 
\frac{1}{\sqrt{n}}  \bV^{-1}_{2n} \sum_{i \in \bm{I}_n}   (A_{2i}-  \mu_{2Ai})^2  \bm{S}_i \bigl( \Delta_{2i}  - \bm{S}^{\top}_i {\bm{\beta}}^*_{2} \bigr). 
\end{align}
Hence, using \eqref{b2n-influence1} and \eqref{b2n-influence2} and collecting terms,
\[
\sqrt{n} ( \hat {\bm{\beta}}_{2n}  -  {\bm{\beta}}^*_{2} )  = 
\frac{1}{\sqrt{n}}  \bV^{-1}_{2n} \sum_{i \in \bm{I}_n}  
 (A_{2i}-  \mu_{2Ai})  \bm{S}_i  H_{2i}
+ o_p(1) 
\]
where $
H_{2i} =  \epsilon_{2i} + 
 (A_{2i}-  \mu_{2Ai})  \bigl( \Delta_{2i}  - \bm{S}^{\top}_i {\bm{\beta}}^*_{2} \bigr).$
Using the fact that 
$Y_i - \mu_{2Yi} = \epsilon_{2i} + (A_{2i}-  \mu_{2Ai})  \Delta_{2i},$
we may write
\begin{eqnarray}
\nonumber
H_{2i} & = & Y_i - \mu_{2Yi} - (A_{2i}-  \mu_{2Ai})  \Delta_{2i} + 
 (A_{2i}-  \mu_{2Ai})  \bigl( \Delta_{2i}  - \bm{S}^{\top}_i {\bm{\beta}}^*_{2} \bigr) \\
 & = &Y_i - \mu_{2Yi} - (A_{2i}-  \mu_{2Ai})  \bm{S}^{\top}_i {\bm{\beta}}^*_{2}.
\end{eqnarray}
Consequently,
\[
\sqrt{n} ( \hat {\bm{\beta}}_{2n}  - {\bm{\beta}}^*_{2} ) 
  =    \frac{1}{\sqrt{n}} \sum_{i \in \bm{I}_n}   \mbox{\rm Inf}_{2in} +o_p(1)
\]
where
\[
 \mbox{\rm Inf}_{2in}  =
 \bV^{-1}_{2n} (A_{2i}-  \mu_{2Ai})  \bm{S}_i  
 \bigl\{
 Y_i - \mu_{2Yi} - (A_{2i}-  \mu_{2Ai})  \bm{S}^{\top}_i {\bm{\beta}}^*_{2}
 \bigr\}.
\]
Since
$\| \bV_{2n}^{-1} - \bV^{-1}_{2} \|_{\infty} = o_p(1),$
it now follows that
\[
\sqrt{n} ( \hat {\bm{\beta}}_{2n}  - {\bm{\beta}}^*_{2} ) 
  =    \frac{1}{\sqrt{n}} \sum_{i \in \bm{I}_n}   \mbox{\rm Inf}_{2i} +o_p(1),
\]
where
\begin{equation}
\label{final-influence2}
 \mbox{\rm Inf}_{2i}  =
 \bV^{-1}_2 
 (A_{2i}-  \mu_{2Ai})  \bm{S}_i  
 \bigl\{
 Y_i - \mu_{2Yi} - (A_{2i}-  \mu_{2Ai})  \bm{S}^{\top}_i {\bm{\beta}}^*_{2} \bigr\}
 \end{equation}
has mean zero and variance $\bV^{-1}_2 \bm{Q}_2 \bV^{-1}_2$ where
$\bQ_2 = \E ( \bm{J}_2^{\otimes 2} )$ and
\begin{equation}
\label{J2}
\bm{J}_2 = 
\{A_{2}-  \mu_{2A}(\bm{S}^0)\}  \bm{S}  
\left[
Y - \mu_{2Y}(\bm{S}^0)  - \{ A_{2}-  \mu_{2A}(\bm{S}^0) \}  \bm{S}^{\top} {\bm{\beta}}^*_{2}
\right].
\end{equation}
\end{proof}

\begin{proof}[Proof of Theorem 1, part (b)]
As in part (a), we need to show that $\hat {\bm{\beta}}_{1n}$ is an asymptotically linear estimator of $\bm{\beta}_1^*$ with a 
certain influence function, where $\bm{\beta}^{*}_{1}$ is defined in \eqref{b1star} and
\begin{align*}
\hat {\bm{\beta}}_{1n}&= \argmin_{\bm{\beta}_1\in \R^{d_1}} \sum_{i \in \bm{I}_n} 
\left[ \Yhat - \hat\mu_{1Y}(\bm{W}^0_i) - \{A_{1i}-\hat\mu_{1A}(\bm{W}^0_i)\} \cdot \bm{W}_i^\top \bm{\beta}_1    \right]^2,
\end{align*}
where $\Yhat$ is calculated as
\begin{equation}
\label{eq:Ytil-hat}
\Yhat = Y_i + \bm{S}_i^\top \hat{\bm{\beta}}_{2n}
\bigl\{ I(\bm{S}_i^\top  \hat{\bm{\beta}}_{2n}  > 0 ) - A_{2i} \bigr\}.
\end{equation}

Proceeding similarly to the proof of part (a), we will establish asymptotically linear representations for both
$\sqrt{n}( \hat {\bm{\beta}}_{1n} - \tilde {\bm{\beta}}^*_{1n})$ and
$\sqrt{n}({\bm{\beta}}^*_{1n} -  {\bm{\beta}}^*_{1});$ combining these will
provide the claimed influence function for $\sqrt{n}( \hat {\bm{\beta}}_{1n} -  {\bm{\beta}}^*_{1}).$

We begin with $\sqrt{n}( \hat {\bm{\beta}}_{1n} - \tilde {\bm{\beta}}^*_{1n}).$
Define $\epsilon_{1i} = \Ytil - \E( \Ytil \mid \bm{W}^0_i, A_{1i}),$
where $\Ytil$ is given by
\begin{equation}
\label{eq:Ytil}
\Ytil = Y_i + \bm{S}_i^\top \bm{\beta}^*_{2}
\bigl\{ I(\bm{S}_i^\top  \bm{\beta}^*_{2}  > 0 ) - A_{2i} \bigr\};
\end{equation}
by construction, 
$\epsilon_{1i}  = \Ytil - \mu_{1Ai} - (A_{1i} - \mu_{1Ai}) \Delta_{1i}$
and
$\E(\epsilon_{1i} \mid \bm{W}^0_i, A_{1i})=0$. 
In addition, let $\hat \delta_i = \Yhat - \Ytil.$  Similarly to the proof
in part (a), we can decompose $\sqrt{n}( \hat {\bm{\beta}}_{1n} - \tilde {\bm{\beta}}^*_{1n})$ into the
sum of several terms:

\newpage 

\begin{align}
\nonumber
\sqrt{n}( \hat {\bm{\beta}}_{1n} - & \tilde {\bm{\beta}}^*_{1n}) =  \\
\label{T1}
& \hat \bV_{1n}^{-1} \Big\{  \frac{1}{\sqrt{n}} \sum_{i \in \bm{I}_n} (A_{1i} -
  {\mu}_{1Ai}) \big(  \epsilon_{1i}  + \hat \delta_i   \big)    \bm{W}_i  \Big\} \\
\label{T2}
& - \hat \bV_{1n}^{-1} \Big\{  \frac{1}{\sqrt{n}} \sum_{i \in \bm{I}_n} (A_{1i} - {\mu}_{1Ai})  (\hat {\mu}_{1Yi}- {\mu}_{1Yi}) \bm{W}_i \Big\} \\
\label{T3}
&  + \hat \bV_{1n}^{-1} \Big\{  \frac{1}{\sqrt{n}} \sum_{i \in \bm{I}_n} (A_{1i} -  {\mu}_{1Ai}) (\hat {\mu}_{1Ai}- {\mu}_{1Ai}) \Delta_{1i} \bm{W}_i  \Big\} \\
\label{T4}
& - \hat \bV_{1n}^{-1} \Big\{  \frac{1}{\sqrt{n}} \sum_{i \in \bm{I}_n} ( \hat {\mu}_{1Ai}- {\mu}_{1Ai})   \epsilon_{1i}   \bm{W}_i \Big\}  \\
\label{T5}
& - \hat \bV_{1n}^{-1} \Big\{  \frac{1}{\sqrt{n}} \sum_{i \in \bm{I}_n} ( \hat {\mu}_{1Ai}- {\mu}_{1Ai})  \hat \delta_i     \bm{W}_i \Big\}  \\
\label{T6}
 & -   \hat \bV_{1n}^{-1} \Big\{  \frac{1}{\sqrt{n}} \sum_{i \in \bm{I}_n} ( \hat {\mu}_{1Ai}- {\mu}_{1Ai})  (\hat {\mu}_{1Yi}- {\mu}_{1Yi})    \bm{W}_i  \Big\} \\
 \label{T7}
 & +  \hat \bV_{1n}^{-1} \Big\{ \frac{1}{\sqrt{n}} \sum_{i \in \bm{I}_n} (\hat {\mu}_{1Ai}- {\mu}_{1Ai})^2  \Delta_{1i} \bm{W}_i
      \Big\}  
\end{align}

Assuming that $\hat \mu_{1A}(\cdot)$ and $\hat \mu_{1Y}(\cdot)$ are estimated similarly to $\hat \mu_{2A}(\cdot)$ and $\hat \mu_{2Y}(\cdot)$
(i.e., meaning, sample splitting has been used) and in view of the fact that $\hat \bV_{1n}$ is a consistent estimator of
$\bV_{1},$ Lemma  \ref{lem:help2cvg} implies that the terms \eqref{T2}, \eqref{T3}, and \eqref{T4} are all $o_p(1)$
under Assumptions \ref{assump:support-X} -- \ref{assump:posdef}; similarly,
Lemma  \ref{lem:help1cvg} implies that  the terms \eqref{T6} and \eqref{T7}  are also $o_p(1)$
under these same assumptions. To complete this part of the proof, we must 
therefore establish the asymptotic behavior of \eqref{T1} and \eqref{T5}, both of
which depend on the asymptotic behavior of $\hat \delta_i = \Yhat - \Ytil.$ 
The terms \eqref{T1} and \eqref{T5} isolate the potential for non-regular behavior; however,
as we will see, Assumption \ref{assump:unique} is only needed for controlling such behavior  in \eqref{T1}.

To determine the asymptotic behavior of \eqref{T1}, let $D_ i = I(\bm{S}_i^\top   \bm{\beta}^*_{2}  > 0 ) - A_{2i}$ and
\[
\hat R_{ni} = I( \bm{S}_i^\top  \hat{\bm{\beta}}_{2n}  > 0) - I(\bm{S}_i^\top  \bm{\beta}^*_{2}  > 0)
\]
for $i \in \bm{I}_n.$

Algebra now shows
\begin{align}
\nonumber
\frac{1}{\sqrt{n}} \sum_{i \in \bm{I}_n}  (A_{1i} - {\mu}_{1Ai})  \bm{W}_i   & \big(  \epsilon_{1i}  + \hat \delta_i   \big)   = \\
\label{G1}
&  \frac{1}{\sqrt{n}} \sum_{i \in \bm{I}_n}  (A_{1i} - {\mu}_{1Ai})  \bm{W}_i  \{ \epsilon_{1i}  + \bm{S}_i^\top  (\hat{\bm{\beta}}_{2n}-\bm{\beta}^*_{2})  D_i  \} \\
\label{G2}
&  + \frac{1}{\sqrt{n}} \sum_{i \in \bm{I}_n}  (A_{1i} - {\mu}_{1Ai})  \bm{W}_i  \bm{S}_i^\top  \bm{\beta}^*_{2}  \hat R_{ni}     \\
\label{G3}
&  + \frac{1}{\sqrt{n}} \sum_{i \in \bm{I}_n}  (A_{1i} - {\mu}_{1Ai})  \bm{W}_i  \bm{S}_i^\top  ( \hat{\bm{\beta}}_{2n}-\bm{\beta}^*_{2})  \hat R_{ni}.
\end{align}
Although \eqref{G2} and \eqref{G3} can be easily combined, treating these two terms separately turns out to be advantageous.
We first consider \eqref{G2}. Note that $\hat R_{ni} \in \{-1,0,1\}$ and, importantly, that $| \hat R_{ni} | \leq R_{ni},$ where
$
R_{ni} = I\{ 0 \leq | \bm{S}_i^\top  \bm{\beta}^*_{2} | \leq | \bm{S}_i^\top  ( \hat{\bm{\beta}}_{2n}-\bm{\beta}^*_{2}) | \}.
$
It follows that
\[
| \bm{S}_i^\top  \bm{\beta}^*_{2} | R_{ni} \leq | \bm{S}_i^\top  ( \hat{\bm{\beta}}_{2n}-\bm{\beta}^*_{2}) | R_{ni},
\]
an inequality that is trivially true when $R_{ni} = 0$ and true for $R_{ni} = 1$ in view of its definition.
Consequently, considering the $j^{th}$ element of the vector $\bm{W}_i,$ we have
\begin{eqnarray*}
\nonumber
 \big| \sum_{i \in \bm{I}_n}  (A_{1i} - {\mu}_{1Ai})  W_{ij}  \bm{S}_i^\top  \bm{\beta}^*_{2}  \hat R_{ni}   \bigr|
 & \leq & 
 \sum_{i \in \bm{I}_n}   \big|  (A_{1i} - {\mu}_{1Ai}) W_{ij}  \big|   \big|  \bm{S}_i^\top  \bm{\beta}^*_{2}   \big|   \big|  \hat R_{ni} \big| \\
& \leq & 
 \sum_{i \in \bm{I}_n}   \big|  (A_{1i} - {\mu}_{1Ai})  W_{ij}  \big|   \big|  \bm{S}_i^\top  \bm{\beta}^*_{2}   \big|   R_{ni}  \\
& \leq & 
 \sum_{i \in \bm{I}_n}   \big|  A_{1i} - {\mu}_{1Ai} \big| \big|  W_{ij}  \big|   \big|  \bm{S}_i^\top  ( \hat{\bm{\beta}}_{2n}-\bm{\beta}^*_{2})   \big|   R_{ni} \\
& \leq & 4 C
 \sum_{i \in \bm{I}_n}     \big|  \bm{S}_i^\top  ( \hat{\bm{\beta}}_{2n}-\bm{\beta}^*_{2})   \big|   R_{ni},
\end{eqnarray*}
the last step following from the fact that $A_{1i}$ is binary, $\hat \mu_{1A}(\cdot) \in [0,1],$ and $|W_{ij}|$ is bounded, say, by 
a finite constant $C$. 
Considering \eqref{G3}, a similar calculation shows that
\begin{eqnarray*}
\nonumber
 \big|
 \sum_{i \in \bm{I}_n}  (A_{1i} - {\mu}_{1Ai})  \bm{W}_i  \bm{S}_i^\top  ( \hat{\bm{\beta}}_{2n}-\bm{\beta}^*_{2})  \hat R_{ni}
  \bigr|
 & \leq & 
 4 C
 \sum_{i \in \bm{I}_n}     \big|  \bm{S}_i^\top  ( \hat{\bm{\beta}}_{2n}-\bm{\beta}^*_{2})   \big|   R_{ni}.
 \end{eqnarray*}
Therefore, 
\begin{eqnarray*}
\mbox{ \eqref{G2} + \eqref{G3} } & \leq & 
8 C \frac{1}{\sqrt{n}} 
 \sum_{i \in \bm{I}_n}     \big|  \bm{S}_i^\top  ( \hat{\bm{\beta}}_{2n}-\bm{\beta}^*_{2})   \big|   R_{ni} \\
& \leq & 8 C \sqrt{n}
\left[ \frac{1}{n} \sum_{i \in \bm{I}_n}   \left\{  \bm{S}_i^\top  ( \hat{\bm{\beta}}_{2n}-\bm{\beta}^*_{2})   \right\}^2 \right]^{1/2}
\left[ \frac{1}{n} \sum_{i \in \bm{I}_n}   R_{ni}^2 \right]^{1/2} \\
& = & 8 C \lambda_{1n} \sqrt{n} \| \hat{\bm{\beta}}_{2n}-\bm{\beta}^*_{2} \|_2 \, \bar{R}_n^{1/2},
\end{eqnarray*}
where $ \| \cdot \|_2$ is the usual Euclidean vector norm and $\lambda_{1n} > 0$ is the square root of the maximum eigenvalue of 
$n^{-1} \sum_{i \in \bm{I}_n}   \bm{S}^{\otimes 2}_i.$
Because $\sqrt{n} \| \hat{\bm{\beta}}_{2n}-\bm{\beta}^*_{2} \|_2 = O_p(1)$ and $\lambda_{1n}$ converges to a finite constant as
$n \rightarrow \infty$ under our assumptions, it follows that \eqref{G2} + \eqref{G3} is $o_p(1)$ if $\bar{R}_n = o_p(1).$ However,
Markov's inequality implies that
\[
P( \bar{R}_n > \alpha \mid \bm{S}^0_1,\ldots, \bm{S}^0_n) \leq (n \alpha)^{-1}  \sum_{i \in \bm{I}_n}  \E(R_{ni} 
\mid  \bm{S}^0_1,\ldots, \bm{S}^0_n
)
\]
for any $\alpha > 0,$ where
$\E(R_{ni} \mid  \bm{S}^0_1,\ldots, \bm{S}^0_n)  =  
P\{ R_{ni} = 1 \mid  \bm{S}^0_1,\ldots, \bm{S}^0_n \}.$
Letting
\[
I^*_i = I\big( | \bm{S}_i^\top  \bm{\beta}^*_{2} | = 0 \big),
\]
an easy conditioning argument shows 
$\E(R_{ni} \mid  \bm{S}^0_1,\ldots, \bm{S}^0_n ) = 
I^*_i + (1-I^*_i) P_{ni}$ where
\[
P_{ni} =
P\{  | \bm{S}_i^\top  ( \hat{\bm{\beta}}_{2n}-\bm{\beta}^*_{2}) | \geq k_i \mid  (\bm{S}^0_1,\ldots, \bm{S}^0_n), \, k_i > 0\}
\]
for $k_i = | \bm{S}_i^\top  \bm{\beta}^*_{2} |.$
Letting $n \rightarrow \infty,$ the fact that $\hat{\bm{\beta}}_{2n} \stackrel{p}{\rightarrow} \bm{\beta}^*_{2}$
as $n \rightarrow \infty$ implies $P_{ni} \rightarrow 0$ for each $i$; hence,
\[
\lim_{n \rightarrow \infty}
P( \bar{R}_n > \alpha \mid \bm{S}^0_1,\ldots, \bm{S}^0_n) \leq \frac{1}{n \alpha}  \lim_{n \rightarrow \infty}  \sum_{i \in \bm{I}_n}  I^*_i.
\]
However, under our assumptions, 
\[
 \lim_{n \rightarrow \infty} \frac{1}{n} \sum_{i \in \bm{I}_n}  I^*_i \stackrel{p}{\rightarrow} P\big( | \bm{S}_1^\top  \bm{\beta}^*_{2} | = 0 \big)
\]
and it follows from Assumption \ref{assump:unique} that  $\bar{R}_n \stackrel{p}{\rightarrow} 0.$

Because $\| \hat \bV_{1n}^{-1} - \bV^{-1}_{1n} \|_{\infty} = o_p(1),$ we have now shown that
\begin{align}
\mbox{\eqref{T1}} 
  \label{b1n-influence1}
& = \bV_{1n}^{-1}  \Big[ \frac{1}{\sqrt{n}} \sum_{i \in \bm{I}_n}  (A_{1i} - {\mu}_{1Ai})  \bm{W}_i  \{ \epsilon_{1i}  + \bm{S}_i^\top  (\hat{\bm{\beta}}_{2n}-\bm{\beta}^*_{2})  D_i  \} + o_p(1) \Big]
\end{align}
under Assumptions \ref{assump:support-X} -- \ref{assump:unique}, where we recall the notation $D_ i = I(\bm{S}_i^\top   \bm{\beta}^*_{2}  > 0 ) - A_{2i}.$

To establish \eqref{T5}, observe that we may similarly decompose it as above, leading to
\begin{align}
\nonumber
\frac{1}{\sqrt{n}} \sum_{i \in \bm{I}_n}  (  \hat{\mu}_{1Ai} & - {\mu}_{1Ai})  \bm{W}_i    \hat \delta_i      = \\
\label{G4}
&  \frac{1}{\sqrt{n}} \sum_{i \in \bm{I}_n}  ( \hat{\mu}_{1Ai} - {\mu}_{1Ai}) \bm{W}_i  \epsilon_{1i}   \\
\label{G5}
&  + \frac{1}{\sqrt{n}} \sum_{i \in \bm{I}_n}  ( \hat{\mu}_{1Ai} - {\mu}_{1Ai})  \bm{W}_i  \bm{S}_i^\top  \bm{\beta}^*_{2}  \hat R_{ni}     \\
\label{G6}
&  + \frac{1}{\sqrt{n}} \sum_{i \in \bm{I}_n} ( \hat{\mu}_{1Ai} - {\mu}_{1Ai})  \bm{W}_i  \bm{S}_i^\top   ( \hat{\bm{\beta}}_{2n}-\bm{\beta}^*_{2})  \hat R_{ni}     \\
\label{G7}
&  + \frac{1}{\sqrt{n}} \sum_{i \in \bm{I}_n}  ( \hat{\mu}_{1Ai} - {\mu}_{1Ai}) \bm{W}_i  \bm{S}_i^\top  ( \hat{\bm{\beta}}_{2n}-\bm{\beta}^*_{2})  D_i.
\end{align}
The term \eqref{G4} can be handled using Lemma  \ref{lem:help2cvg}. The remaining terms can be handled similarly to \eqref{G2} and \eqref{G3};
however, the required decomposition of terms differs some and, importantly, can make use of Assumption \ref{assump:accuracy-treatment}.
In particular, establishing the behavior of \eqref{G4}-\eqref{G7} can be done under 
Assumptions \ref{assump:support-X} -- \ref{assump:posdef}, without additionally imposing Assumption \ref{assump:unique}, 
showing that any effect of non-regularity is limited to the behavior of \eqref{G2} and \eqref{G3}, or equivalently,
\[
 \frac{1}{\sqrt{n}} \sum_{i \in \bm{I}_n}  (A_{1i} - {\mu}_{1Ai})  \bm{W}_i  \bm{S}_i^\top   \hat{\bm{\beta}}_{2n}  \hat R_{ni}.
 \]

The above proof establishes an asymptotic linear representation for
$\sqrt{n}( \hat {\bm{\beta}}_{1n} - \tilde {\bm{\beta}}^*_{1n}).$
Turning to $\sqrt{n}({\bm{\beta}}^*_{1n} -  {\bm{\beta}}^*_{1}),$ we can write
\begin{align}
 \sqrt{n}({\bm{\beta}}^*_{1n} -  {\bm{\beta}}^*_{1})   
\label{b1n-influence2}
& = 
\frac{1}{\sqrt{n}}  \bV^{-1}_{1n} \sum_{i \in \bm{I}_n}   (A_{1i}-  \mu_{1Ai})^2  \bm{W}_i \bigl( \Delta_{1i}  - \bm{W}^{\top}_i {\bm{\beta}}^*_{1} \bigr). 
\end{align}
Hence, using \eqref{b1n-influence1} and \eqref{b1n-influence2} and collecting terms, it follows that
\[
\sqrt{n} ( \hat {\bm{\beta}}_{1n}  -  {\bm{\beta}}^*_{1} ) = 
\frac{1}{\sqrt{n}}  \bV^{-1}_{1n} \sum_{i \in \bm{I}_n}  
 (A_{1i}-  \mu_{1Ai})  \bm{W}_i  H_{1i}
+ o_p(1) 
\]
where 
$H_{1i} =  \epsilon_{1i} + 
 (A_{1i}-  \mu_{1Ai})  \bigl( \Delta_{1i}  - \bm{W}^{\top}_i {\bm{\beta}}^*_{1} \bigr)
 + \bm{S}^{\top}_i ( \hat{\bm{\beta}}_{2n} - {\bm{\beta}}^*_{2} ) D_i.$
 Because 
 \[
 \epsilon_{1i}  = \Ytil - \mu_{1Ai} - (A_{1i} - \mu_{1Ai}) \Delta_{1i}
 \]
we have
\[
H_{1i} =  \Ytil - \mu_{1Ai} - (A_{1i} - \mu_{1Ai}) \bm{W}^{\top}_i {\bm{\beta}}^*_{1}
 + \bm{S}^{\top}_i ( \hat{\bm{\beta}}_{2n} - {\bm{\beta}}^*_{2} ) 
 \{ I(\bm{S}_i^\top   \bm{\beta}^*_{2}  > 0 ) - A_{2i} \}.
 \]
 Using the fact that $\| \bV_{1n}^{-1} - \bV^{-1}_{1} \|_{\infty} = o_p(1),$ it follows that
\[
\sqrt{n} ( \hat {\bm{\beta}}_{1n}  -  {\bm{\beta}}^*_{1} ) = 
\frac{1}{\sqrt{n}}  \bV^{-1}_{1} \sum_{i \in \bm{I}_n}  
 (A_{1i}-  \mu_{1Ai})  \bm{W}_i  H_{1i}
+ o_p(1). 
\]
Defining
\[
\bm{K}_n = \frac{1}{n}   \sum_{i \in \bm{I}_n}  
 (A_{1i}-  \mu_{1Ai}) \{ I(\bm{S}_i^\top   \bm{\beta}^*_{2}  > 0 ) - A_{2i} \} \bm{W}_i  \bm{S}^{\top}_i 
\]
and letting $\bm K$ denote its limit in probability, the results from part (a), in particular \eqref{final-influence2}, 
now imply that
\[
\sqrt{n} ( \hat {\bm{\beta}}_{1n}  -  {\bm{\beta}}^*_{1} ) = 
\frac{1}{\sqrt{n}}  \sum_{i \in \bm{I}_n}  \mbox{Inf}_{1i} + o_p(1). 
\]
for
\begin{equation}
\label{final-influence1}
\mbox{Inf}_{1i} = \bV^{-1}_{1} 
\bigl[  (A_{1i}-  \mu_{1Ai})  \bm{W}_i  
 \big\{ 
 \Ytil - \mu_{1Ai} - (A_{1i} - \mu_{1Ai}) \bm{W}^{\top}_i {\bm{\beta}}^*_{1} \big\}
 + \bm{K} \, \mbox{Inf}_{2i} \bigr].
 \end{equation}
This representation result implies 
$
\sqrt{n} ( \hat {\bm{\beta}}_{1n}  - {\bm{\beta}}^*_{1} ) 
  \stackrel{d}{\rightarrow} N( \bm{0},   \bV^{-1}_1 \bQ_1 \bV^{-1}_1 )
  $
where we define the matrix $\bQ_1 = \E \{ ( \bm{J}_1 + \bm{K} \bV^{-1}_2 \bm{J}_2) ^{\otimes 2} \},$ 
$\bm{J}_2$ is given in \eqref{J2}, 
\[
\bm{J}_1 = \{A_{1}-  \mu_{1A}(\bm{W}^0)\}  \bm{W} 
\left[
\tilde{Y}- \mu_{1A}(\bm{W}^0) - \{ A_{1} - \mu_{1A}(\bm{W}^0) \} \bm{W}^{\top} {\bm{\beta}}^*_{1} 
\right]
\]
and
\[
\bm{K} = \E \big[ \{ A_{1}-  \mu_{1A}(\bm{W}^0) \} \{ I(\bm{S}^\top   \bm{\beta}^*_{2}  > 0 ) - A_{2} \} \bm{W}  \bm{S}^{\top} \big].
\]

 \end{proof}
 
\begin{proof}[Proof of Corollary to Theorem 1]

As established in the proof of Theorem 1, the regularity Assumption \ref{assump:unique} is imposed
only to control the potentially non-regular behavior of the terms \eqref{G2} and \eqref{G3}. The origin
of this non-regular behavior is the dependence of each term on
\[
\hat R_{ni} = I( \bm{S}_i^\top  \hat{\bm{\beta}}_{2n}  > 0) - I(\bm{S}_i^\top  \bm{\beta}^*_{2}  > 0), i \in \bm{I}_n.
\]
In view of the proof of Theorem 1, establishing that each of  \eqref{G2} and \eqref{G3} is $o_p(1)$ is sufficient
to prove the corollary as stated.

To simply the proof of these results, let $\hat{\bm{\beta}}^{(-i)}_{2n}$ be the least squares estimator based
on the subset of subjects that excludes subject $i,$ and define
\[
\hat {\tilde R}_{ni} = I( \bm{S}_i^\top  \hat{\bm{\beta}}^{(-i)}_{2n}  > 0) - I(\bm{S}_i^\top  \bm{\beta}^*_{2}  > 0), i \in \bm{I}_n
\]
and also
\[
\tilde R_{ni} = I\{ 0 \leq | \bm{S}_i^\top  \bm{\beta}^*_{2} | \leq | \bm{S}_i^\top  ( \hat{\bm{\beta}}^{(-i)}_{2n}-\bm{\beta}^*_{2}) | \};
\]
similarly to before, $| \hat {\tilde R}_{ni} | \leq \tilde R_{ni}.$

Then, considering \eqref{G2} with  $\hat R_{ni}$ replaced by $\hat {\tilde R}_{ni}$, we may write
\[
E\left[ \mbox{\eqref{G2}} \right]
 \, = \, \frac{1}{\sqrt{n}} \sum_{i \in \bm{I}_n}  E\left(  \bm{W}_i  \bm{S}_i^\top  \bm{\beta}^*_{2}  \hat {\tilde R}_{ni}  
\, E\left[ A_{1i} - {\mu}_{1Ai}  \mid \bm{W}_i , \bm{S}_i,   \bm{I}_n, \hat{\bm{\beta}}^{(-i)}_{2n} \right] \mid \bm{I}_n \right).
 \]
In view of the definition of $\hat{\bm{\beta}}^{(-i)}_{2n},$ we have
\[
E\left[ A_{1i} - {\mu}_{1Ai}  \mid \bm{W}_i , \bm{S}_i,   \bm{I}_n, \hat{\bm{\beta}}^{(-i)}_{2n} \right]
= E\left[ A_{1i} - {\mu}_{1Ai}  \mid \bm{W}_i , \bm{S}_i,   \bm{I}_n \right] = 0,
\]
the last equality following by assumption. Therefore, it follows that 
$E\left[ \mbox{\eqref{G2}} \mid \bm{I}_n \right]= 0.$
Arguing similarly and using the conditional variance formula,
\[
var\left[ \mbox{\eqref{G2}} \mid \bm{I}_n \right] = \frac{1}{n} 
\sum_{i \in \bm{I}_n} 
 E \left(  \bm{W}^{\otimes 2}_i  \left( \bm{S}_i^\top  \bm{\beta}^*_{2} \right)^2 \left[ A_{1i} - {\mu}_{1Ai}  \right]^2 \hat {\tilde R}^2_{ni}   \mid \bm{I}_n \right).
\]
Let $V_{kj}$ denote the $(k,j)$ element of the matrix on the right-hand side of this last expression. Then, 
under Assumptions \ref{assump:support-X} -- \ref{assump:posdef}, it can be shown that 
\[
V_{kj} \leq 
\frac{C}{n} 
\sum_{i \in \bm{I}_n} 
 E \left( ( \bm{S}_i^\top  \bm{\beta}^*_{2} )^2 \, {\tilde R}_{ni}   \mid \bm{I}_n \right)
\]
for some finite constant $C > 0.$ Similarly to the proof of Theorem 1, the fact that
\[
| \bm{S}_i^\top  \bm{\beta}^*_{2} | \tilde R_{ni} \leq | \bm{S}_i^\top  ( \hat{\bm{\beta}}^{(-i)}_{2n}-\bm{\beta}^*_{2}) | {\tilde R}_{ni}
\]
now implies the inequality
\[
V_{kj} \leq 
\frac{C}{n} 
\sum_{i \in \bm{I}_n} 
 E \left(  \left[ \bm{S}_i^\top  ( \hat{\bm{\beta}}^{(-i)}_{2n}-\bm{\beta}^*_{2}) \right]^2  \, {\tilde R}_{ni}   \mid \bm{I}_n \right),
\]
from which it follows that
\begin{eqnarray}
\nonumber
V_{kj}
 & \leq & 
 \frac{C}{n} 
 \sum_{i \in \bm{I}_n}   E \left(    (\hat{\bm{\beta}}^{(-i)}_{2n}-\bm{\beta}^*_{2})^\top 
  \bm{S}^{\otimes 2}_i   ( \hat{\bm{\beta}}^{(-i)}_{2n}-\bm{\beta}^*_{2})    {\tilde R}_{ni} \mid \bm{I}_n \right),\\
  \label{key bound}
 & \leq & 
 C \lambda^2_{1n} 
  E \left(    \| \hat{\bm{\beta}}^{(-i)}_{2n}-\bm{\beta}^*_{2} \|^2_2 \mid \bm{I}_n \right),
\end{eqnarray}
where  $\lambda^2_{1n} > 0$ is the maximum eigenvalue of $n^{-1} \sum_{i \in \bm{I}_n}   \bm{S}^{\otimes 2}_i.$
Because the right-hand side of \eqref{key bound} goes to zero as $ n \rightarrow \infty$ for all $(k,j),$ it follows that \eqref{G2} is $o_p(1).$ A similar argument
establishes that \eqref{G3} is $o_p(1).$ 

 \end{proof}
 

\section{Additional simulation studies}
\subsection{Various sample sizes} \label{sec:vsamplesize}

In this section, we complement the main simulation study by examining the performance of our proposed method under various sample sizes. 
We generate 500 datasets with sample sizes $N$ of 1000, 500, and 250 using the same generative model as in Section 5.1 in the paper,
and except in Section 2.2, use the same methods to estimate $\mu_{1Y}(\cdot)$, $\mu_{2Y}(\cdot)$, $\mu_{1A}(\cdot)$, and $\mu_{2A}(\cdot)$
as in the main simulation study.

Tables \ref{tab:1000beta2}-\ref{tab:250beta1} in this document respectively summarize the results for sample sizes $N$ of 1000, 500, and
250 sample sizes. Overall, our method outperforms both Q$_{N,N}$ and dWOLS$_{N,N}$ for all sample sizes, particularly when the underlying 
treatment assignment and the outcome models are nonlinear. More specifically,  when the postulated parametric models for the nuisance 
parameters are correctly specified, the bias of the dWOLS$_{N,N}$ estimators are comparable to the estimators obtained by our proposed method,
but the latter usually has a substantially smaller standard error. However, when the parametric models for the outcome and the treatment assignment models 
are both misspecified, the dWOLS$_{N,N}$ estimators exhibit large biases. 
When $N=250$, under the linear treatment assignment model and FGS$^R$ outcome model, the proposed method shows larger bias in estimating 
$\bm{\beta}_2$ than the dWOLS$_{N,N}$ (Table \ref{tab:250beta2}). We conjecture that this occurs because there are only a relatively small number of 
units that are rerandomized at stage 2 (i.e., 50\%), and that this subsequently affects the performance of Super Learner. 

\begin{table}[t]
\centering
\caption{Performance of the proposed Q-learning method in estimating the second stage parameters under different model complexities ($N=$1000). 
The true parameter values for the linear and FGS outcome models are $\beta^*_{2,1}=1$, $\beta^*_{2,2}=1$ and 
$\beta^*_{2,1} \approx 0$, $\beta^*_{2,2}  \approx -2,$ respectively.}
\resizebox{\textwidth}{!}
{\begin{tabular}{lcccccc|cccccc|cccc}
  \hline
           & \multicolumn{6}{c}{$\beta^*_{2,1}$} &\multicolumn{6}{c}{$\beta^*_{2,2}$} \\
& \multicolumn{2}{c}{Q$_{N,N}$} &   \multicolumn{2}{c}{Proposed} & \multicolumn{2}{c}{dWOLS$_{N,N}$} & \multicolumn{2}{c}{Q$_{N,N}$} & \multicolumn{2}{c}{Proposed} & \multicolumn{2}{c}{dWOLS$_{N,N}$}  \\
Models &  Bias & S.D.&  Bias & S.D. & Bias & S.D.  & Bias & S.D. & Bias & S.D. & Bias & S.D.  \\
  \hline
\multicolumn{13}{c}{\it Randomized Treatment Assignment Model} \\ \hline
Linear$^R$&0.007 & 0.055 & 0.041 & 0.112 & 0.007 & 0.113 & 0.003 & 0.056 & 0.008 & 0.112 & 0.003 & 0.114 \\
FGS$^R$& 0.002 & 0.581 & 0.015 & 0.787 & 0.010 & 1.136 & 0.055 & 0.370 & 0.055 & 0.424 & 0.049 & 0.667\\\hline
\multicolumn{13}{c}{\it Linear Treatment Assignment Model} \\ \hline
Linear$^R$&0.000 & 0.060 & 0.005 & 0.145 & 0.004 & 0.147 & 0.011 & 0.063 & 0.026 & 0.144 & 0.020 & 0.149  \\
FGS$^R$& 2.235 & 2.490 & 0.039 & 1.010 & 0.035 & 2.224 & 2.478 & 0.809 & 0.067 & 0.529 & 0.016 & 0.968\\ \hline
\multicolumn{13}{c}{\it Quadratic Treatment Assignment Model} \\ \hline
Linear$^R$&0.004 & 0.054 & 0.006 & 0.117 & 0.003 & 0.108 & 0.002 & 0.059 & 0.015 & 0.116 & 0.005 & 0.121\\
FGS$^R$&0.872 & 0.630 & 0.166 & 0.754 & 0.873 & 1.284 & 0.012 & 0.332 & 0.050 & 0.410 & 0.021 & 0.570 \\ \hline
\multicolumn{13}{c}{\it InterQuad Treatment Assignment Model} \\ \hline
Linear$^R$& 0.008 & 0.056 & 0.009 & 0.118 & 0.007 & 0.113 & 0.004 & 0.057 & 0.024 & 0.117 & 0.005 & 0.116 \\
FGS$^R$& 0.718 & 0.665 & 0.077 & 0.817 & 0.703 & 1.253 & 0.442 & 0.333 & 0.068 & 0.416 & 0.437 & 0.560\\
\hline
\end{tabular}}
\label{tab:1000beta2}
\end{table}


\begin{table}[t]
\centering
\caption{Performance of the proposed Q-learning method in estimating the first stage parameters under different model complexities ($N=$1000). 
The true parameter values are $\beta^*_{1,1}=\beta^*_{1,2}=0$.}
\resizebox{\textwidth}{!}{\begin{tabular}{lcccccc|cccccc|cccc}
  \hline
           & \multicolumn{6}{c}{$\beta^*_{1,1}$} &\multicolumn{6}{c}{$\beta^*_{1,2}$} \\
& \multicolumn{2}{c}{Q$_{N,N}$} &   \multicolumn{2}{c}{Proposed} & \multicolumn{2}{c}{dWOLS$_{N,N}$} & \multicolumn{2}{c}{Q$_{N,N}$} & \multicolumn{2}{c}{Proposed} & \multicolumn{2}{c}{dWOLS$_{N,N}$}  \\
Models &  Bias & S.D.&  Bias & S.D. & Bias & S.D.  & Bias & S.D. & Bias & S.D. & Bias & S.D.  \\
  \hline
\multicolumn{13}{c}{\it Randomized Treatment Assignment Model} \\ \hline
Linear$^R$&0.001 & 0.140 & 0.004 & 0.150 & 0.006 & 0.150 & 0.013 & 0.138 & 0.008 & 0.140 & 0.005 & 0.149  \\
FGS$^R$& 0.024 & 0.823 & 0.034 & 0.858 & 0.015 & 1.072 & 0.035 & 0.584 & 0.027 & 0.638 & 0.011 & 0.866\\\hline
\multicolumn{13}{c}{\it Linear Treatment Assignment Model} \\ \hline
Linear$^R$& 0.156 & 0.152 & 0.003 & 0.173 & 0.016 & 0.177 & 0.153 & 0.131 & 0.011 & 0.163 & 0.008 & 0.172   \\
FGS$^R$& 2.206 & 3.284 & 0.192 & 1.132 & 0.197 & 4.588 & 1.859 & 2.836 & 0.148 & 0.855 & 0.167 & 4.377   \\\hline
\multicolumn{13}{c}{\it Quadratic Treatment Assignment Model} \\  \hline
Linear$^R$& 2.395 & 0.138 & 0.023 & 0.171 & 0.008 & 0.173 & 0.679 & 0.123 & 0.013 & 0.162 & 0.004 & 0.164 \\
FGS$^R$& 7.259 & 0.889 & 0.043 & 0.963 & 0.286 & 1.213 & 1.656 & 0.827 & 0.059 & 0.735 & 0.041 & 1.201\\ \hline
\multicolumn{13}{c}{\it InterQuad Treatment Assignment Model} \\ \hline
Linear$^R$& 2.297 & 0.143 & 0.012 & 0.171 & 0.012 & 0.178 & 0.420 & 0.128 & 0.017 & 0.161 & 0.010 & 0.166\\
FGS$^R$& 7.580 & 0.829 & 0.098 & 0.975 & 0.279 & 1.203 & 2.348 & 0.705 & 0.057 & 0.715 & 0.275 & 0.908\\
\hline
\end{tabular}}
\label{tab:1000beta1}
\end{table}


\begin{table}[t]
\centering
\caption{Performance of the proposed Q-learning method in estimating the second stage parameters under different model complexities ($N=$500). 
The true parameter values for the linear and FGS outcome models are $\beta^*_{2,1}=1$, $\beta^*_{2,2}=1$ and 
$\beta^*_{2,1} \approx 0$, $\beta^*_{2,2}  \approx -2,$ respectively.}
\resizebox{\textwidth}{!}
{\begin{tabular}{lcccccc|cccccc|cccc}
  \hline
           & \multicolumn{6}{c}{$\beta^*_{2,1}$} &\multicolumn{6}{c}{$\beta^*_{2,2}$} \\
& \multicolumn{2}{c}{Q$_{N,N}$} &   \multicolumn{2}{c}{Proposed} & \multicolumn{2}{c}{dWOLS$_{N,N}$} & \multicolumn{2}{c}{Q$_{N,N}$} & \multicolumn{2}{c}{Proposed} & \multicolumn{2}{c}{dWOLS$_{N,N}$}  \\
Models &  Bias & S.D.&  Bias & S.D. & Bias & S.D.  & Bias & S.D. & Bias & S.D. & Bias & S.D.  \\
  \hline
\multicolumn{13}{c}{\it Randomized Treatment Assignment Model} \\ \hline
Linear$^R$&0.004 & 0.083 & 0.013 & 0.168 & 0.004 & 0.169 & 0.009 & 0.079 & 0.023 & 0.166 & 0.010 & 0.161 \\
FGS$^R$&0.111 & 0.972 & 0.113 & 1.143 & 0.097 & 1.906 & 0.029 & 0.464 & 0.015 & 0.582 & 0.006 & 0.815\\\hline
\multicolumn{13}{c}{\it Linear Treatment Assignment Model} \\ \hline
Linear$^R$&0.014 & 0.087 & 0.021 & 0.207 & 0.007 & 0.214 & 0.017 & 0.087 & 0.006 & 0.205 & 0.019 & 0.211  \\
FGS$^R$& 2.409 & 0.909 & 0.117 & 1.311 & 0.136 & 2.250 & 2.448 & 0.575 & 0.185 & 0.761 & 0.061 & 2.111\\ \hline
\multicolumn{13}{c}{\it Quadratic Treatment Assignment Model} \\ \hline
Linear$^R$&0.003 & 0.083 & 0.027 & 0.171 & 0.007 & 0.169 & 0.004 & 0.079 & 0.007 & 0.169 & 0.004 & 0.163\\
FGS$^R$&0.738 & 0.805 & 0.005 & 1.101 & 0.695 & 1.533 & 0.008 & 0.523 & 0.049 & 0.658 & 0.002 & 0.932 \\ \hline
\multicolumn{13}{c}{\it InterQuad Treatment Assignment Model} \\ \hline
Linear$^R$& 0.005 & 0.086 & 0.014 & 0.171 & 0.010 & 0.172 & 0.003 & 0.083 & 0.007 & 0.169 & 0.005 & 0.170  \\
FGS$^R$& 0.653 & 1.085 & 0.008 & 1.229 & 0.627 & 2.162 & 0.468 & 0.551 & 0.046 & 0.668 & 0.436 & 0.977\\
\hline
\end{tabular}}
\label{tab:500beta2}
\end{table}


\begin{table}[t]
\centering
\caption{Performance of the proposed Q-learning method in estimating the first stage parameters under different model complexities ($N=$500). 
The true parameter values are $\beta^*_{1,1}=\beta^*_{1,2}=0$.}
\resizebox{\textwidth}{!}{\begin{tabular}{lcccccc|cccccc|cccc}
  \hline
           & \multicolumn{6}{c}{$\beta^*_{1,1}$} &\multicolumn{6}{c}{$\beta^*_{1,2}$} \\
& \multicolumn{2}{c}{Q$_{N,N}$} &   \multicolumn{2}{c}{Proposed} & \multicolumn{2}{c}{dWOLS$_{N,N}$} & \multicolumn{2}{c}{Q$_{N,N}$} & \multicolumn{2}{c}{Proposed} & \multicolumn{2}{c}{dWOLS$_{N,N}$}  \\
Models &  Bias & S.D.&  Bias & S.D. & Bias & S.D.  & Bias & S.D. & Bias & S.D. & Bias & S.D.  \\
  \hline
\multicolumn{13}{c}{\it Randomized Treatment Assignment Model} \\ \hline
Linear$^R$&0.007 & 0.208 & 0.017 & 0.217 & 0.020 & 0.205 & 0.007 & 0.189 & 0.008 & 0.204 & 0.011 & 0.206  \\
FGS$^R$& 0.107 & 1.286 & 0.061 & 1.181 & 0.036 & 1.556 & 0.006 & 0.871 & 0.024 & 0.870 & 0.004 & 1.078\\\hline
\multicolumn{13}{c}{\it Linear Treatment Assignment Model} \\ \hline
Linear$^R$& 0.147 & 0.227 & 0.015 & 0.241 & 0.016 & 0.238 & 0.167 & 0.183 & 0.025 & 0.228 & 0.009 & 0.225  \\
FGS$^R$& 1.882 & 1.131 & 0.038 & 1.242 & 0.003 & 1.475 & 1.615 & 0.801 & 0.016 & 0.925 & 0.035 & 1.117   \\\hline
\multicolumn{13}{c}{\it Quadratic Treatment Assignment Model} \\  \hline
Linear$^R$& 2.392 & 0.187 & 0.003 & 0.243 & 0.006 & 0.242 & 0.676 & 0.180 & 0.004 & 0.230 & 0.007 & 0.241 \\
FGS$^R$& 7.441 & 1.172 & 0.028 & 1.292 & 0.406 & 1.580 & 1.457 & 0.786 & 0.018 & 0.962 & 0.032 & 1.183 \\ \hline
\multicolumn{13}{c}{\it InterQuad Treatment Assignment Model} \\ \hline
Linear$^R$& 2.325 & 0.199 & 0.009 & 0.245 & 0.001 & 0.250 & 0.414 & 0.170 & 0.006 & 0.231 & 0.009 & 0.236\\
FGS$^R$& 7.412 & 1.249 & 0.199 & 1.394 & 0.176 & 1.998 & 2.368 & 0.938 & 0.015 & 1.026 & 0.136 & 1.353\\
\hline
\end{tabular}}
\label{tab:500beta1}
\end{table}

\begin{table}[t]
\centering
\caption{Performance of the proposed Q-learning method in estimating the second stage parameters under different model complexities ($N=$250). 
The true parameter values for the linear and FGS outcome models are $\beta^*_{2,1}=1$, $\beta^*_{2,2}=1$ and 
$\beta^*_{2,1} \approx 0$, $\beta^*_{2,2}  \approx -2,$ respectively.}
\resizebox{\textwidth}{!}
{\begin{tabular}{lcccccc|cccccc|cccc}
  \hline
           & \multicolumn{6}{c}{$\beta^*_{2,1}$} &\multicolumn{6}{c}{$\beta^*_{2,2}$} \\
& \multicolumn{2}{c}{Q$_{N,N}$} &   \multicolumn{2}{c}{Proposed} & \multicolumn{2}{c}{dWOLS$_{N,N}$} & \multicolumn{2}{c}{Q$_{N,N}$} & \multicolumn{2}{c}{Proposed} & \multicolumn{2}{c}{dWOLS$_{N,N}$}  \\
Models &  Bias & S.D.&  Bias & S.D. & Bias & S.D.  & Bias & S.D. & Bias & S.D. & Bias & S.D.  \\
  \hline
\multicolumn{13}{c}{\it Randomized Treatment Assignment Model} \\ \hline
Linear$^R$&0.035 & 0.120 & 0.038 & 0.261 & 0.031 & 0.250 & 0.008 & 0.116 & 0.012 & 0.252 & 0.008 & 0.244  \\
FGS$^R$&0.032 & 1.138 & 0.053 & 1.541 & 0.070 & 2.229 & 0.058 & 0.694 & 0.089 & 0.959 & 0.045 & 1.340\\\hline
\multicolumn{13}{c}{\it Linear Treatment Assignment Model} \\ \hline
Linear$^R$&0.008 & 0.132 & 0.027 & 0.302 & 0.007 & 0.336 & 0.009 & 0.127 & 0.004 & 0.291 & 0.007 & 0.315  \\
FGS$^R$& 2.315 & 1.237 & 0.488 & 1.755 & 0.286 & 2.616 & 2.519 & 0.717 & 0.682 & 1.111 & 0.274 & 1.545\\ \hline
\multicolumn{13}{c}{\it Quadratic Treatment Assignment Model} \\ \hline
Linear$^R$&0.014 & 0.120 & 0.023 & 0.265 & 0.020 & 0.257 & 0.034 & 0.127 & 0.032 & 0.257 & 0.028 & 0.268\\
FGS$^R$&0.738 & 2.374 & 0.061 & 1.937 & 0.522 & 5.242 & 0.051 & 0.736 & 0.004 & 1.167 & 0.078 & 1.400 \\ \hline
\multicolumn{13}{c}{\it InterQuad Treatment Assignment Model} \\ \hline
Linear$^R$&0.010 & 0.127 & 0.042 & 0.256 & 0.007 & 0.262 & 0.009 & 0.120 & 0.018 & 0.250 & 0.012 & 0.247    \\
FGS$^R$& 0.794 & 1.066 & 0.265 & 1.655 & 0.804 & 2.139 & 0.428 & 0.691 & 0.000 & 1.093 & 0.457 & 1.244 \\
\hline
\end{tabular}}
\label{tab:250beta2}
\end{table}


\begin{table}[t]
\centering
\caption{Performance of the proposed Q-learning method in estimating the first stage parameters under different model complexities ($N=$250). 
The true parameter values are $\beta^*_{1,1}=\beta^*_{1,2}=0$.}
\resizebox{\textwidth}{!}{\begin{tabular}{lcccccc|cccccc|cccc}
  \hline
           & \multicolumn{6}{c}{$\beta^*_{1,1}$} &\multicolumn{6}{c}{$\beta^*_{1,2}$} \\
& \multicolumn{2}{c}{Q$_{N,N}$} &   \multicolumn{2}{c}{Proposed} & \multicolumn{2}{c}{dWOLS$_{N,N}$} & \multicolumn{2}{c}{Q$_{N,N}$} & \multicolumn{2}{c}{Proposed} & \multicolumn{2}{c}{dWOLS$_{N,N}$}  \\
Models &  Bias & S.D.&  Bias & S.D. & Bias & S.D.  & Bias & S.D. & Bias & S.D. & Bias & S.D.  \\
  \hline
\multicolumn{13}{c}{\it Randomized Treatment Assignment Model} \\ \hline
Linear$^R$&0.020 & 0.284 & 0.005 & 0.308 & 0.008 & 0.303 & 0.031 & 0.244 & 0.008 & 0.291 & 0.014 & 0.299  \\
FGS$^R$& 0.166 & 1.389 & 0.019 & 1.661 & 0.092 & 2.477 & 0.159 & 1.200 & 0.019 & 1.274 & 0.041 & 1.698 \\\hline
\multicolumn{13}{c}{\it Linear Treatment Assignment Model} \\ \hline
Linear$^R$& 0.194 & 0.284 & 0.050 & 0.338 & 0.012 & 0.330 & 0.174 & 0.262 & 0.016 & 0.320 & 0.002 & 0.349  \\
FGS$^R$& 1.960 & 1.575 & 0.153 & 1.704 & 0.089 & 1.919 & 1.641 & 1.261 & 0.055 & 1.286 & 0.126 & 1.540  \\\hline
\multicolumn{13}{c}{\it Quadratic Treatment Assignment Model} \\  \hline
Linear$^R$& 2.413 & 0.289 & 0.016 & 0.351 & 0.001 & 0.339 & 0.664 & 0.257 & 0.016 & 0.336 & 0.026 & 0.367 \\
FGS$^R$& 7.840 & 2.066 & 0.106 & 1.855 & 0.398 & 2.460 & 1.373 & 1.273 & 0.093 & 1.457 & 0.096 & 2.040 \\ \hline
\multicolumn{13}{c}{\it InterQuad Treatment Assignment Model} \\ \hline
Linear$^R$& 2.329 & 0.294 & 0.052 & 0.349 & 0.007 & 0.358 & 0.451 & 0.251 & 0.006 & 0.334 & 0.024 & 0.369\\
FGS$^R$&7.423 & 1.469 & 0.212 & 1.837 & 0.107 & 2.335 & 2.288 & 1.188 & 0.079 & 1.461 & 0.238 & 1.721\\
\hline
\end{tabular}}
\label{tab:250beta1}
\end{table}

\FloatBarrier

\subsection{Performance under alternative nonparametric estimation methods}

In this section, we assess the performance of the proposed method when the nuisance parameters are instead estimated using either random forests or generalized additive models, recalling that both were included as part of the library used by Super Learner. As in the main simulation study, we generated 500 datasets of size $N=$2000 using the same generative model as in Section \ref{sec:vsamplesize}. In Table \ref{tab:MLbeta2} and \ref{tab:MLbeta1}, columns RF-RF and GAM-GAM respectively represent cases where a random forest and a generalized additive model are used for both the treatment assignment (i.e., $\mu_{1A}(\cdot)$ and $\mu_{2A}(\cdot)$) and the outcome models (i.e., $\mu_{1Y}(\cdot)$ and $\mu_{2Y}(\cdot)$). The column RF-GAM represents a case where a random forest is used for the outcome model and a generalized additive model is used for the treatment assignment model.  Overall, as the complexity of the outcome or treatment assignment model increases, modeling the nuisance parameters using a random forest results in estimators that are less biased compared with those obtained by the generalized additive model. However, comparing these results with those presented in Tables 1 and 2, we observe that Super Learner typically performs better than these other data adaptive methods.

\begin{table}[h]
\centering
\caption{Performance of the proposed Q-learning method in estimating the second stage parameters 
using machine learning methods  under different model complexities ($N=$2000). 
The true parameter values for the linear and FGS outcome models are $\beta^*_{2,1}=1$, $\beta^*_{2,2}=1$ and 
$\beta^*_{2,1} \approx 0$, $\beta^*_{2,2}  \approx -2,$ respectively.}
\resizebox{\textwidth}{!}
{\begin{tabular}{lcccccc|cccccc|cccc}
  \hline
           & \multicolumn{6}{c}{$\beta^*_{2,1}$} &\multicolumn{6}{c}{$\beta^*_{2,2}$} \\
& \multicolumn{2}{c}{RF-RF} &   \multicolumn{2}{c}{GAM-GAM} & \multicolumn{2}{c}{RF-GAM} & \multicolumn{2}{c}{RF-RF} &   \multicolumn{2}{c}{GAM-GAM} & \multicolumn{2}{c}{RF-GAM}  \\
Models &  Bias & S.D.&  Bias & S.D. & Bias & S.D.  & Bias & S.D. & Bias & S.D. & Bias & S.D.  \\
  \hline
\multicolumn{13}{c}{\it Randomized Treatment Assignment Model} \\ \hline
Linear$^R$&0.021 & 0.088 & 0.051 & 0.078 & 0.001 & 0.086 & 0.021 & 0.084 & 0.048 & 0.078 & 0.021 & 0.084  \\
FGS$^R$& 0.096 & 0.521 & 0.236 & 0.684 & 0.049 & 0.499 & 0.047 & 0.282 & 0.091 & 0.436 & 0.091 & 0.275\\\hline
\multicolumn{13}{c}{\it Linear Treatment Assignment Model} \\ \hline
Linear$^R$&0.005 & 0.107 & 0.009 & 0.103 & 0.006 & 0.110 & 0.002 & 0.103 & 0.012 & 0.103 & 0.011 & 0.107  \\
FGS$^R$& 0.447 & 0.687 & 0.023 & 0.823 & 0.037 & 0.712 & 0.522 & 0.405 & 0.074 & 0.553 & 0.064 & 0.419\\ \hline
\multicolumn{13}{c}{\it Quadratic Treatment Assignment Model} \\ \hline
Linear$^R$&0.013 & 0.091 & 0.033 & 0.080 & 0.065 & 0.085 & 0.014 & 0.088 & 0.042 & 0.080 & 0.002 & 0.083\\
FGS$^R$&0.194 & 0.628 & 0.764 & 0.737 & 0.301 & 0.616 & 0.039 & 0.337 & 0.091 & 0.501 & 0.022 & 0.352 \\ \hline
\multicolumn{13}{c}{\it InterQuad Treatment Assignment Model} \\ \hline
Linear$^R$&0.035 & 0.091 & 0.076 & 0.081 & 0.044 & 0.085 & 0.038 & 0.087 & 0.076 & 0.081 & 0.010 & 0.083     \\
FGS$^R$& 0.192 & 0.651 & 0.654 & 0.761 & 0.219 & 0.648 & 0.071 & 0.329 & 0.304 & 0.499 & 0.065 & 0.344 \\
\hline
\end{tabular}}
\label{tab:MLbeta2}
\end{table}

\begin{table}[t]
\centering
\caption{Performance of the proposed Q-learning method in estimating the first stage parameters using machine learning methods under different model complexities ($N=$2000). The true parameter values are $\beta^*_{1,1}=\beta^*_{1,2}=0$.}
\resizebox{\textwidth}{!}{\begin{tabular}{lcccccc|cccccc|cccc}
  \hline
           & \multicolumn{6}{c}{$\beta^*_{1,1}$} &\multicolumn{6}{c}{$\beta^*_{1,2}$} \\
& \multicolumn{2}{c}{RF-RF} &   \multicolumn{2}{c}{GAM-GAM} & \multicolumn{2}{c}{RF-GAM} & \multicolumn{2}{c}{RF-RF} &   \multicolumn{2}{c}{GAM-GAM} & \multicolumn{2}{c}{RF-GAM}  \\
Models &  Bias & S.D.&  Bias & S.D. & Bias & S.D.  & Bias & S.D. & Bias & S.D. & Bias & S.D.  \\
  \hline
\multicolumn{13}{c}{\it Randomized Treatment Assignment Model} \\ \hline
Linear$^R$&0.016 & 0.104 & 0.017 & 0.108 & 0.003 & 0.107 & 0.029 & 0.099 & 0.004 & 0.100 & 0.003 & 0.102  \\
FGS$^R$& 0.029 & 0.537 & 0.200 & 0.725 & 0.014 & 0.612 & 0.020 & 0.413 & 0.063 & 0.514 & 0.012 & 0.444 \\\hline
\multicolumn{13}{c}{\it Linear Treatment Assignment Model} \\ \hline
Linear$^R$& 0.008 & 0.116 & 0.000 & 0.127 & 0.003 & 0.122 & 0.008 & 0.110 & 0.005 & 0.119 & 0.008 & 0.116  \\
FGS$^R$& 0.020 & 0.677 & 0.039 & 0.775 & 0.081 & 0.700 & 0.086 & 0.491 & 0.021 & 0.575 & 0.009 & 0.508 \\\hline
\multicolumn{13}{c}{\it Quadratic Treatment Assignment Model} \\  \hline
Linear$^R$& 0.033 & 0.118 & 0.002 & 0.121 & 0.027 & 0.122 & 0.008 & 0.113 & 0.002 & 0.113 & 0.006 & 0.116  \\
FGS$^R$& 0.035 & 0.691 & 0.296 & 0.794 & 0.101 & 0.707 & 0.007 & 0.508 & 0.035 & 0.600 & 0.037 & 0.519 \\ \hline
\multicolumn{13}{c}{\it InterQuad Treatment Assignment Model} \\ \hline
Linear$^R$& 0.020 & 0.119 & 0.015 & 0.122 & 0.045 & 0.122 & 0.009 & 0.114 & 0.003 & 0.114 & 0.031 & 0.116\\
FGS$^R$&0.083 & 0.681 & 0.344 & 0.786 & 0.145 & 0.704 & 0.031 & 0.509 & 0.164 & 0.610 & 0.032 & 0.527\\
\hline
\end{tabular}}
\label{tab:MLbeta1}
\end{table}


\FloatBarrier

\subsection{Additional simulation results for the non-regular setting}

In Section 5.2 of the main paper, we showed that the proposed method can provide valid inferences in non-regular settings 
when the assumptions of Corollary 1 are satisfied. Table \ref{tab:nrsup} supplements the results of Table 3 for the main simulation study 
by providing additional comparisons with Q$_{N,N}$ and dWOLS$_{N,N}$. 

\begin{table}[t]
\centering
\caption{Performance of the proposed Q-learning method under different degrees of non-regularity.}
\resizebox{\textwidth}{!} {\begin{tabular}{lccc|ccc}
  \hline
           & \multicolumn{3}{c}{$\beta^*_{1,1}$} &\multicolumn{3}{c}{$\beta^*_{1,2}$} \\
Models & Q$_{N,N}$ & Proposed & dWOLS$_{N,N}$ & Q$_{N,N}$ & Proposed & dWOLS$_{N,N}$ \\
  \hline
\multicolumn{7}{c}{\it Randomized Treatment Assignment Model} \\ \hline
Linear$^{NR,0}$&  0.939(0.26) & 0.956(0.42) & 0.940(0.41) & 0.932(0.26)$^\dag$  & 0.959(0.40) & 0.933(0.41)$^\dag$\\ 
 Non-linear$^{NR,0}$&  0.949(0.95) & 0.965(1.40)$^\dag$ & 0.933(1.40)$^\dag$ & 0.949(0.44) & 0.952(0.56) & 0.946(0.55)\\ 
Linear$^{NR,1}$&  0.937(0.45)  & 0.962(0.46) & 0.938(0.70) & 0.936(0.45) & 0.966(0.45)$^\dag$ & 0.933(0.70)$^\dag$  \\ 
 Non-linear$^{NR,1}$& 0.953(0.99)  & 0.966(1.41) & 0.924(1.46)$^\dag$ & 0.936(0.51)$^\dag$ & 0.946(0.60) & 0.929(0.69)$^\dag$\\ \hline
\multicolumn{7}{c}{\it Linear Treatment Assignment Model} \\ \hline
Linear$^{NR,0}$& 0.913(0.26)$^\dag$ & 0.965(0.45)$^\dag$ & 0.940(0.45) &  0.898(0.26)$^\dag$ & 0.948(0.44) & 0.940(0.45)\\ 
 Non-linear$^{NR,0}$&  0.396(0.96)$^\dag$ & 0.949(1.58) & 0.921(1.56)$^\dag$ & 0.102(0.48)$^\dag$ & 0.955(0.65) & 0.936(0.65) \\ 
Linear$^{NR,1}$&  0.856(0.45)$^\dag$ & 0.953(0.51) & 0.939(0.76) & 0.845(0.45)$^\dag$ & 0.955(0.50) & 0.924(0.76)$^\dag$ \\ 
 Non-linear$^{NR,1}$&  0.343(1.10)$^\dag$ & 0.952(1.61) & 0.923(1.63)$^\dag$ & 0.101(0.55)$^\dag$ & 0.948(0.70) & 0.942(0.79) \\  \hline
\multicolumn{7}{c}{\it InterQuad Treatment Assignment Model} \\ \hline
Linear$^{NR,0}$&  0.949(0.29) & 0.966(0.46)$^\dag$ & 0.956(0.45) & 0.939(0.29) & 0.957(0.45) & 0.934(0.45)$^\dag$ \\
 Non-linear$^{NR,0}$&  0.858(1.10)$^\dag$ & 0.959(1.45) & 0.786(1.45)$^\dag$ & 0.897(0.50)$^\dag$ & 0.962(0.63) & 0.789(0.62)$^\dag$\\ 
Linear$^{NR,1}$&  0.951(0.49) & 0.964(0.52) & 0.948(0.77) & 0.937(0.50)$^\dag$ & 0.950(0.51) & 0.939(0.77) \\ 
 Non-linear$^{NR,1}$& 0.861(1.05)$^\dag$ & 0.964(1.47) & 0.804(1.53)$^\dag$ & 0.910(0.58)$^\dag$ & 0.967(0.68) & 0.850(0.78)$^\dag$ \\ 
\hline
\end{tabular}}
{\small Numbers in parentheses correspond to average confidence interval length.}
\label{tab:nrsup}
\end{table}

Below, we also consider a simulation that corresponds to a non-regular setting when the assumptions of Corollary 1 and
Assumption 6 are both violated. We consider the same general set-up as in Section 5.2, but modify the outcome models as described below:
\begin{itemize}
\item Linear$^{NR,\varpi}$: $Y= \bX_1^\top \balpha_{1}+ \breve{\bX}_2^\top \balpha_{2} +A_1\bX_1^\top \btheta_{1}+ A_2 R (\theta_{21} A_1+ \theta_{22} \breve{X}_{21} )+\epsilon$ where $\balpha_{1}=(1,0.1,0.1,0.1,0.1)^\top$, $\balpha_{2}=(1,0.1,0.1,0.1)^\top,$ $\btheta_{1}=(0,0,0,0,0)^\top,$ and $\btheta_2 = 2 (\varpi,\varpi)^\top;$  
\item Non-linear$^{NR,\varpi}$: $Y= f(\bX_1)+A_1\bX_1^\top \btheta_{1}+   A_2 R (\theta_{21} A_1+ \theta_{22} \breve{X}_{21})+\epsilon$ where $\btheta_{1}=(0,0,0,0,0)^\top,$
$\btheta_2 = 2 (\varpi,\varpi)^\top,$  and
for $\bm{x} = (x_1,x_2,x_3,x_4,x_5)^\top,$ we set
\[
f(\bm{x}) = \sin(\pi x_1 x_2) + 2
  (x_3-0.5)^2 +  x_4 + 1.5 \frac{x_1}{|x_2|+|x_3|} +2 x_1 (x_2+x_3).
\]
\end{itemize}
Here,
$\breve{\bm{X}}_2=(\breve X_{21},\breve X_{22},X_{23},X_{24})^\top$ where $\breve X_{21}$ is generated from a Bernoulli distribution with 
success probability $[ 1+\exp \{-(2X_{11}-2A_1-1)\} ]^{-1},$  
$X_{22} = U_{11}$, $X_{23} = U_{12},$
$X_{24} = 0.35 X_{15} + U_{13},$
and $X_{25} = U_{14},$
where $U_{1l}, l =1,\ldots,3$ are independent and uniformly
distributed on [0,1]; the noise variable $\epsilon$ is again generated from $ \mathrm{N}(0,\sigma=0.5)$. 


In the above generative models, $\bm{S}^0 = (\bm{X}^\top_1,A_1,\breve{\bm{X}}^\top_2)^\top$ and $\bm{W}^0 = \bm{X}_1;$
the vectors $\bm{S}=R (1,A_1, \breve{X}_{21}, \breve{X}_{22})^\top$ and $\bm{W} = (1, X_{11}, X_{12})^\top$ are respectively used 
to model the second and first stage Q-functions. 
The second stage blip functions are linear and correctly specified in both models, leading
to $\bm{\beta}_2^* = (0,\btheta^\top_2,0)^\top.$
In both scenarios, for each subject $i$, we define the first-stage pseudo outcome as 
\begin{equation*}
\tilde {Y}_i
=Y_i + \varpi  R \bm{S}_i^\top \bm{\beta}^*_2 \{I(\bm{S}_i^\top \bm{\beta}^*_{2}  >0) -A_{2i}\},
\end{equation*}
its estimate $\hat{\tilde {Y}}_i$ being calculated by substituting $\hat{\bm{\beta}}_{2n}$ in for $\bm{\beta}^*_{2}.$
The construction of the pseudo-outcome, specifically the projection $\bm{S}^\top \bm{\beta}^*_2,$ results in 
the violation of Assumption \ref{assump:unique} for both generative models.
In particular,  $\varpi=0$ corresponds to no second-stage effect modifier because
$\bm{\beta}^*_2 = \bm{0};$ hence, we have $P( |\bm{S}^\top \bbeta^*_2| = 0) = 1.$
Setting $\varpi=1$ instead implies that there is no second-stage treatment effect 
when $R (A_1+ \breve X_{21}) = 0$ and a reasonably strong effect
when $R (A_1+ \breve X_{21}) = 1;$ in this case, we have $\| \bm{\beta}^*_2 \| > 0$
and $0 < P( |\bm{S}^\top \bbeta^*_2| = 0) < 1$.
In addition to violating Assumption 6, both generative models also violate the key assumption of Corollary 1 
(i.e., $E\left( A_{1i} - {\mu}_{1Ai}  \mid \bm{W}_{i}, \bm{S}_{i},  \bm{I}_n \right) = 0, i \in \bm{I}_n)$ 
because the Stage 2 decision rule depends on  $A_1$.


We compare the coverages of our estimators with those obtained for dWOLS using the $m$-out-of-$n$ bootstrap (i.e., dWOLS$_{m,N}^{\kappa=0.05}$).  
Table \ref{tab:nr2} presents the results based on 1000 datasets of size 2000. In the table, $\beta^*_{1,1}$, $\beta^*_{1,2}$ and $\beta^*_{1,3}$ 
respectively the true parameter values corresponding to variables $A_1$ (i.e., we assume $\bm{W}$ has an intercept), $A_1 X_{11}$ and $A_1 X_{12}$.  
When $\varpi=1$, $(\beta^*_{1,1},\beta^*_{1,2}, \beta^*_{1,3}) \approx (0.55,0.00,-0.57)$ and  $(\beta^*_{1,1},\beta^*_{1,2}, \beta^*_{1,3}) \approx (0.77,0.00,-0.29)$ for the linear and non-linear outcome models, respectively (note: these are approximated by simulation). Also, when $\varpi=0$, $(\beta^*_{1,1},\beta^*_{1,2}, \beta^*_{1,3}) \approx (-0.23,0.00,-0.29)$ and  $(\beta^*_{1,1},\beta^*_{1,2}, \beta^*_{1,3}) = (0,0,0)$ for the linear and non-linear outcome models, respectively. Despite violation
of the regularity assumption, the proposed method continues to provide valid confidence intervals for the parameters $\beta_{1,2}$, and $ \beta_{1,3}$. However, the confidence intervals for $\beta_{1,1}$ exhibit less-than-nominal coverage in a majority of cases. As expected, when either of the outcome or the treatment assignment models are correctly specified the dWOLS$_{m,N}^{\kappa=0.05}$ provides valid, if typically conservative, confidence intervals for all parameters. However, when both of these models are misspecified, the coverage rates are more substantially compromised.

\begin{table}[t]
\centering
\caption{Performance of proposed Q-learning method under different degrees of non-regularity when Corollary 1
and Assumption 6 both fail.}
\resizebox{\textwidth}{!} {\begin{tabular}{lcc|cc|cc}
  \hline
           & \multicolumn{2}{c}{$\beta^*_{1,1}$} &\multicolumn{2}{c}{$\beta^*_{1,2}$}&\multicolumn{2}{c}{$\beta^*_{1,3}$} \\
Models & Proposed & dWOLS$_{m,N}^{\kappa=0.05}$ &  Proposed & dWOLS$_{m,N}^{\kappa=0.05}$&  Proposed & dWOLS$_{m,N}^{\kappa=0.05}$ \\
  \hline
\multicolumn{7}{c}{\it Randomized Treatment Assignment Model} \\ \hline
Linear$^{NR,0}$& 0.938(0.12) & 0.981(0.14)$^\dag$ & 0.949(0.40) & 0.968(0.46)$^\dag$ & 0.965(0.40)$^\dag$ & 0.984(0.46)$^\dag$ \\ 
 Non-linear$^{NR,0}$&  0.940(0.29) & 0.967(0.43)$^\dag$ & 0.965(1.29)$^\dag$ & 0.977(1.69)$^\dag$ & 0.949(0.39) & 0.973(0.68)$^\dag$\\ 
Linear$^{NR,1}$&  0.951(0.22) & 0.976(0.26)$^\dag$ & 0.948(0.77) & 0.973(0.89)$^\dag$ & 0.950(0.77) & 0.983(0.89)$^\dag$\\ 
 Non-linear$^{NR,1}$& 0.928(0.34)$^\dag$ & 0.957(0.48) & 0.967(1.43)$^\dag$ & 0.980(1.82)$^\dag$ & 0.941(0.69) & 0.980(0.93)$^\dag$\\ \hline
\multicolumn{7}{c}{\it Linear Treatment Assignment Model} \\ \hline
Linear$^{NR,0}$& 0.908(0.12)$^\dag$ & 0.980(0.17)$^\dag$ & 0.954(0.43) & 0.968(0.50)$^\dag$ & 0.939(0.43) & 0.976(0.50)$^\dag$\\ 
 Non-linear$^{NR,0}$& 0.909(0.29)$^\dag$ & 0.984(0.47)$^\dag$ & 0.975(1.32)$^\dag$ & 0.972(1.77)$^\dag$ & 0.942(0.46) & 0.981(0.78)$^\dag$\\ 
Linear$^{NR,1}$&  0.951(0.24) & 0.972(0.28)$^\dag$ & 0.943(0.85) & 0.970(0.97)$^\dag$ & 0.949(0.86) & 0.969(0.98) \\ 
 Non-linear$^{NR,1}$&  0.955(0.35) & 0.962(0.47) & 0.971(1.50)$^\dag$ & 0.976(1.93)$^\dag$ & 0.942(0.80) & 0.978(1.06)$^\dag$ \\  \hline
\multicolumn{7}{c}{\it InterQuad Treatment Assignment Model} \\ \hline
Linear$^{NR,0}$& 0.934(0.12)$^\dag$ & 0.983(0.15)$^\dag$ & 0.952(0.41) & 0.975(0.48)$^\dag$ & 0.948(0.41) & 0.975(0.48)$^\dag$ \\
 Non-linear$^{NR,0}$&0.939(0.30) & 0.914(0.45)$^\dag$ & 0.944(1.29) & 0.884(1.75)$^\dag$ & 0.958(0.45) & 0.885(0.76)$^\dag$\\ 
Linear$^{NR,1}$&   0.950(0.26) & 0.972(0.31)$^\dag$ & 0.962(0.89) & 0.983(1.02)$^\dag$ & 0.941(0.90) & 0.963(1.03)\\ 
 Non-linear$^{NR,1}$& 0.907(0.38)$^\dag$ & 0.971(0.55)$^\dag$ & 0.968(1.51)$^\dag$ & 0.923(1.95)$^\dag$ & 0.962(0.87) & 0.945(1.12)\\ 
\hline
\end{tabular}}
{\small Numbers in parentheses correspond to average confidence interval length.}
\label{tab:nr2}
\end{table}

\FloatBarrier

\subsection{Value Function Estimates for Regular Case}

In this section we plot the value functions corresponding to the estimated decision rules obtained
in the regular setting (see Section 5.1 of the main paper). 
All methods perform essentially identically as far as the underlying value function
when the outcome model is linear, even for $N=250$; these results are not included. 
In the case where the outcome model is not linear, we see that the proposed
method typically results in a value function that is closest to the true optimal value, 
followed by dWOLS and then standard Q-learning. However, the overall 
degree of discrepancy is generally small, particularly between dWOLS and
the proposed approach.

 \begin{figure}[t] 
 \centering 
\includegraphics[width = 0.95\textwidth]{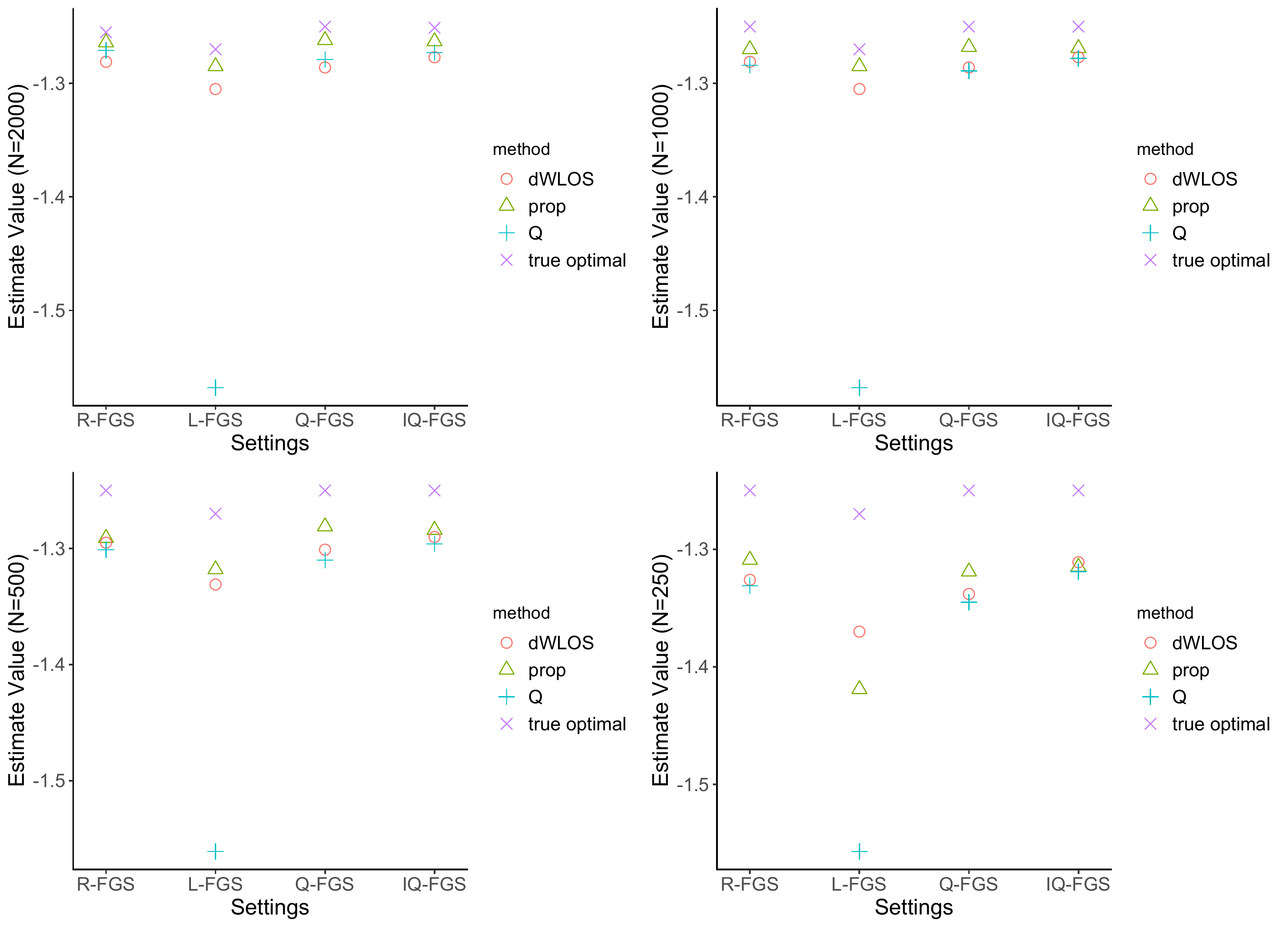}
\caption{\footnotesize Value functions for estimated decision rules for regular case and different sample sizes. } \label{fig:value}
\end{figure}

 \FloatBarrier

\end{document}